\definecolor{DarkRed}{rgb}{0.5,0.1,0.1}
\definecolor{DarkBlue}{rgb}{0.1,0.1,0.5}
\def\BState{\State\hskip-\ALG@thistlm}
\newtheorem{theorem}{Theorem}
\newtheorem{lemma}{Lemma}[section]
\newtheorem{proposition}[lemma]{Proposition}
\newtheorem{claim}[lemma]{Claim}
\newtheorem{fact}[lemma]{Fact}
\newtheorem{definition}{Definition}
\newtheorem{problem}{Problem}
\newtheorem{remark}[lemma]{Remark}
\newtheorem{observation}[lemma]{Observation}
\newtheorem*{claim*}{Claim}
\newtheorem*{proposition*}{Proposition}
\newtheorem*{lemma*}{Lemma}
\newtheorem*{problem*}{Problem}
\newtheorem*{mdresult}{Main Result}
\newenvironment{result}{\begin{mdframed}[backgroundcolor=lightgray!40,topline=false,rightline=false,leftline=false,bottomline=false,innertopmargin=2pt]\begin{mdresult}}{\end{mdresult}\end{mdframed}}
\DeclareMathOperator*{\argmax}{arg\,max}
\newcommand{\mi}[2]{\ensuremath{\II(#1 \,; #2)}}
\renewcommand{\qed}{\nobreak \ifvmode \relax \else
      \ifdim\lastskip<1.5em \hskip-\lastskip
      \hskip1.5em plus0em minus0.5em \fi \nobreak
      \vrule height0.75em width0.5em depth0.25em\fi}
\newcommand{\ourinfo}{Supported in part by National Science Foundation grants CCF-1552909, CCF-1617851, and IIS-1447470.}
\newcommand{\toShrink}{-.20cm}
\newcommand{\toShrinkEnu}{-.2cm}
\newenvironment{tbox}{\begin{tcolorbox}[
		enlarge top by=5pt,
		enlarge bottom by=5pt,
		 boxsep=0pt,
                  left=4pt,
                  right=4pt,
                  top=10pt,
                  arc=0pt,
                  boxrule=1pt,toprule=1pt,
                  colback=white
                  ]%%
	}
{\end{tcolorbox}}
\newcommand{\textbox}[2]{
{
\begin{tbox}
\textbf{#1}
{#2}
\end{tbox}
}
}
\newcommand{\itfacts}[1]{Fact~\ref{fact:it-facts}-(\ref{part:#1})\xspace}
\newcommand{\algline}{
 % \nointerlineskip \vspace{\baselineskip}%
  \rule{0.5\linewidth}{.1pt}\hspace{\fill}%
  \par\nointerlineskip \vspace{.1pt}
}
\newcommand{\tvd}[2]{\ensuremath{\norm{#1 - #2}_{tvd}}}
\newcommand{\Ot}{\ensuremath{\widetilde{O}}}
\newcommand{\eps}{\ensuremath{\varepsilon}}
\newcommand{\Bracket}[1]{\Big[#1\Big]}
\newcommand{\bracket}[1]{\left[#1\right]}
\newcommand{\paren}[1]{\ensuremath{\left(#1\right)}\xspace}
\newcommand{\card}[1]{\left\vert{#1}\right\vert}
\newcommand{\IR}{\ensuremath{\mathbb{R}}}
\newcommand{\norm}[1]{\ensuremath{\|#1\|}}
\newcommand{\set}[1]{\ensuremath{\left\{ #1 \right\}}}
\newcommand{\poly}{\mbox{\rm poly}}
\newcommand{\polylog}{\mbox{\rm  polylog}}
\DeclareMathOperator*{\Exp}{\ensuremath{{\mathbb{E}}}}
\DeclareMathOperator*{\Prob}{\ensuremath{\textnormal{Pr}}}
\renewcommand{\Pr}{\Prob}
\newcommand{\Ex}{\Exp}
\newcommand{\etal}{et al.\xspace}
\newcommand{\dist}{\ensuremath{\mathcal{D}}}
\newcommand{\distr}[1]{\ensuremath{\dist_{#1}}}
\newcommand{\FS}{\ensuremath{\mathcal{S}}}
\newcommand{\FF}{\ensuremath{\mathcal{F}}}
\newcommand{\jstar}{\ensuremath{j^{\star}}}
\newcommand{\Ins}{\ensuremath{\mathcal{I}}}
\newcommand{\Prot}{\ensuremath{\Pi}}
\newcommand{\prot}{\ensuremath{\pi}}
\newcommand{\II}{\ensuremath{\mathbb{I}}}
\newcommand{\HH}{\ensuremath{\mathbb{H}}}
\newcommand{\Istar}{\ensuremath{I^{\star}}}
\newcommand{\FN}{\ensuremath{\mathcal{N}}}
\newcommand{\thetas}{\ensuremath{\theta^{\star}}}
\renewcommand{\neg}[1]{\ensuremath{\overline{#1}}}
\newcommand{\event}{\ensuremath{\mathcal{E}}}
\newcommand{\FA}{\ensuremath{\mathcal{A}}}
\renewcommand{\tvd}[2]{\ensuremath{\norm{#1-#2}}}
\newcommand{\errs}{\ensuremath{\textnormal{errs}}\xspace}
\newcommand{\DD}[2]{\ensuremath{\mathbb{D}(#1~||~#2)}}
\newcommand{\distribution}[1]{\ensuremath{\textnormal{dist}(#1)}}
\newcommand{\supp}[1]{\ensuremath{\textnormal{supp}(#1)}}
\newcommand{\sw}{\ensuremath{\textnormal{\textsf{sw}}}}
\newcommand{\bid}{\ensuremath{i}}
\title{Combinatorial Auctions Do Need Modest Interaction}
\author{Sepehr Assadi\thanks{\ourinfo}\\University of Pennsylvania \\ {sassadi@cis.upenn.edu}}
\date{}
\begin{document}
\maketitle

\thispagestyle{empty}
\begin{abstract}
	We study the necessity of interaction for obtaining efficient allocations in combinatorial auctions with \emph{subadditive bidders}.  
	This problem was originally introduced by Dobzinski, Nisan, and Oren~\cite{DobzinskiNO14} as the following simple market scenario: $m$ items are to be allocated 
	among $n$ bidders in a distributed setting where bidders valuations are private and hence \emph{communication} is needed to obtain an efficient allocation. 
	The communication happens in rounds: in each round, each bidder, \emph{simultaneously} with others, broadcasts a message to all parties involved. 
	At the end, the central planner computes an allocation solely based on the communicated messages. Dobzinski~\etal~\cite{DobzinskiNO14} showed
	that (at least some) interaction is necessary for obtaining any efficient allocation: no non-interactive ($1$-round) protocol with polynomial communication (in the number of items and bidders) can 
	achieve approximation ratio better than $\Omega(m^{{1}/{4}})$, while for any $r \geq 1$, there exists $r$-round protocols  that achieve $\Ot(r \cdot m^{{1}/{r+1}})$ approximation with polynomial communication; 
	in particular, $O(\log{m})$ rounds of interaction \emph{suffice} to obtain an (almost) efficient allocation, i.e., a \polylog{(m)}-approximation. 
	
	A natural question at this point is to identify the ``right'' level of interaction (i.e., number of rounds) \emph{necessary} to obtain an efficient allocation. 
	In this paper, we resolve this question by providing an almost tight \emph{round-approximation} tradeoff for this problem: we show that for any $r \geq 1$, 
	any $r$-round protocol that uses $\poly(m,n)$ bits of communication can only approximate the social welfare up to a factor of $\Omega(\frac{1}{r} \cdot m^{{1}/{2r+1}})$.  
	This in particular implies that $\Omega(\frac{\log{m}}{\log\log{m}})$ rounds of interaction are \emph{necessary} for obtaining any efficient allocation (i.e., a constant or even a $\polylog{(m)}$-approximation) in these markets. 
	Our work builds on the recent multi-party round-elimination technique of Alon, Nisan, Raz, and Weinstein~\cite{AlonNRW15} -- used to prove similar-in-spirit lower bounds for round-approximation tradeoff
	in unit-demand (matching) markets -- and settles an open question posed by Dobzinski~\etal~\cite{DobzinskiNO14} and Alon~\etal~\cite{AlonNRW15}.
\end{abstract}

\clearpage
\setcounter{page}{1}

\section{Introduction}\label{sec:intro}

In a combinatorial auction, $m$ items in $M$ are to be allocated between $n$ bidders (or players\footnote{Throughout the paper, we use the terms ``bidder'' and ``player'' interchangeably.}) in $N$ with valuation functions $v_i : 2^{M} \rightarrow \IR_{+}$. The goal is to find a collection of disjoint {bundles} $A_1,\ldots,A_n$ of items in $M$ (an \emph{allocation}), that maximizes \emph{social welfare} defined as the sum of bidder's valuations 
for the allocated bundles, i.e., $\sum_{i \in N} v_i(A_i)$. This paper studies the tradeoff between the amount of interaction between the bidders and the efficiency of the allocation in combinatorial auctions. 

In our model, each bidder $i \in N$ only knows the valuation function $v_i$ and hence the bidders need to communicate to obtain an efficient allocation.
Communication happens in rounds. In each round, each bidder $i$, \emph{simultaneously} with others,
broadcasts a message to all parties involved, based on the valuation function $v_i$ and messages in previous rounds. 
In the last round, the central planner outputs the allocation solely based on the communicated messages. 
Notice that a ``trivial solution'' in this setting is for all players to communicate their entire input to the central planner who can then compute an efficient allocation; 
however, such a protocol is clearly infeasible in most settings as it has an enormous communication cost. As such, we are interested in protocols with significantly less communication cost,
typically \emph{exponentially smaller} than the input size.

This model was first introduced by Dobzinski, Nisan, and Oren~\cite{DobzinskiNO14} to address the following fundamental question in economics: \emph{``To what extent is interaction between individuals
required in order to efficiently allocate resources between themselves?''}. They considered this problem for two different classes of valuation functions:  \emph{unit-demand} valuations
and \emph{subadditive} valuations (see Section~\ref{sec:auctions}). 
For both settings, they showed that (at least some) interaction is necessary to obtain an efficient allocation: non-interactive (aka $1$-round or simultaneous) protocols have enormous communication cost  
compared to interactive ones, while even allowing a modest amount of interaction allows for finding an (approximately) efficient allocation. We now elaborate more on these results. 

For the case of matching markets with $n$ unit-demand bidders and $n$ items (and hence input-size of $n$ bits per each player), Dobzinski~\etal~\cite{DobzinskiNO14} proved a
lower bound of $\Omega(\sqrt{n})$ on the approximation ratio of any simultaneous protocol that communicates $n^{o(1)}$ bits per each bidder. On the other hand, they
showed that for any $r \geq 1$, there exists an $r$-round protocol that achieves an $O(n^{1/r+1})$ approximation by sending $O(\log{n})$ bits per each bidder in each round. 
For the more general setting of combinatorial auctions with $n$ subadditive bidders and $m$ items (and hence input-size of $\exp(m)$ bits per each player), 
they showed that the best approximation ratio achievable by simultaneous protocols with $\poly(m,n)$ communication is $\Omega(m^{1/4})$, while for any $r \geq 1$, there exists 
$r$-round protocols that achieve an approximation ratio of $\Ot(r \cdot m^{1/r+1})$. These results imply that in such markets, logarithmic rounds of interaction in the market size 
\emph{suffice} to obtain an (almost) efficient allocation, i.e., a $\polylog{(m)}$-approximation. 

A natural question left open by~\cite{DobzinskiNO14} was to identify the amount of interaction \emph{necessary} to obtain an efficient allocation in these markets. 
Recently, Alon, Nisan, Raz, and Weinstein~\cite{AlonNRW15} provided a partial answer to this question for matching markets: for any $r \geq 1$, any $r$-round protocol 
for unit-demand bidders in which each bidder sends at most $n^{o(1)}$ bits in each round can only achieve an $\Omega(n^{1/5^{r+1}})$ approximation~\cite{AlonNRW15}. This implies
that at least $\Omega(\log\log{n})$ rounds of interaction is necessary to achieve an efficient allocation in matching markets. Alon~\etal~\cite{AlonNRW15} further conjectured that the 
``correct'' lower bound for the convergence rate in this setting is $\Omega(\log{n})$; in other words, $\Omega(\log{n})$ rounds of interaction are necessary for achieving an efficient allocation. 

Despite this progress for matching markets, the best known lower bounds for the more general setup of combinatorial auctions with subadditive bidders remained the aforementioned $1$-round lower bound of~\cite{DobzinskiNO14}, 
and a $(2-\eps)$-approximation (for every constant $\eps > 0$) for any polynomial communication protocol with unrestricted number of rounds~\cite{DobzinskiNS05}. Indeed, obtaining better lower bounds
for $r$-round protocols was posed as an open problem by Alon~\etal~\cite{AlonNRW15} who also mentioned that: ``from a communication complexity perspective, lower bounds
in this setup are more compelling, since player valuations require exponentially many bits to encode, hence interaction has
the potential to reduce the overall communication from exponential to polynomial.''.

\subsection{Our Results and Techniques} \label{sec:result}

In this paper, we resolve the aforementioned open question of Dobzinski~\etal~\cite{DobzinskiNO14} and Alon~\etal~\cite{AlonNRW15} by proving an almost tight \emph{round-approximation} tradeoff for polynomial communication
protocols in subadditive combinatorial auctions. 

\vspace{5pt}
\begin{result}
	For any $r \geq 1$, any $r$-round protocol (deterministic or randomized) for combinatorial auctions with \emph{subadditive} bidders that uses polynomial communication
	can only achieve an approximation ratio of $\Omega(\frac{1}{r} \cdot m^{{1}/{\Theta(r)}})$ to the social welfare. 
\end{result}
\vspace{5pt}

We remark that this lower bound holds even when the bidders valuations are XOS functions, a strict subclass of subadditive valuations (see Section~\ref{sec:auctions} for definition). 

Our main result, combined with the upper bound result of~\cite{DobzinskiNO14}, provides a near-complete understanding of the power of each additional round in improving the quality of 
the allocation in subadditive combinatorial auctions. Moreover, an immediate corollary of our result is that in these markets, $\Omega(\frac{\log{m}}{\log\log{m}})$ rounds of interaction
are \emph{necessary} to achieve any efficient allocation (i.e., constant or polylogarithmic approximation), which is \emph{tight} up to an $O(\log\log{m})$ factor. 
The qualitative message of this theoretical result is clear: \emph{a modest amount of interaction between individuals in a market is crucial for obtaining an efficient allocation}.

Our first step in establishing this result is proving a new lower bound for simultaneous ($1$-round) protocols. We deviate 
from~\cite{DobzinskiNO14} by considering the problem of estimating the \emph{value} of social welfare as opposed to finding the actual allocation; this 
problem can only be harder in terms of proving a lower bound as any protocol that can find an approximate allocation can also be used to estimate
the value of social welfare with one additional round and $O(n)$ additional communication using a trivial reduction (see Section~\ref{sec:communication}). As a result, the
kind of combinatorial arguments used in~\cite{DobzinskiNO14} seem not 
sufficient for our purpose and we instead prove our lower bound using information-theoretic machinery and in particular a direct-sum style argument. This counterintuitive switch to establishing a lower bound for a seemingly harder 
problem however leads to a more modular proof that allows us to further carry out our results to multi-round protocols. 

We establish our multi-round lower bound following the multi-party round-elimination technique of~Alon~\etal~\cite{AlonNRW15}. We create a recursive family of hard distributions 
$\dist_1,\dist_2,\ldots$ whereby for any $r \geq 1$, $\dist_r$ is the hard input distribution for $r$-round protocols. Each instance in $\dist_r$ is a careful combination of \emph{exponentially many} sub-instances 
sampled from $\dist_{r-1}$. One of these sub-instances is ``special'' in that to solve the original instance, the players also need to solve this special sub-instance completely. 
On the other hand, the players are not able to identify this special sub-instance \emph{locally} and hence need to spend one round of interaction \emph{only} for this purpose. 
In other words, we prove that the first round of protocol does not convey much information about the special instance beyond its identity. Using a further round-elimination argument, we inductively show that since solving the special instance is hard for $(r-1)$-round protocols, solving the original 
instance should be hard for $r$-round protocols as well. 

Similar to~\cite{AlonNRW15}, and unlike typical two-player round-elimination arguments (see, e.g.~\cite{MiltersenNSW98,SenV08}), eliminating a round in our round-elimination argument requires a reduction from 
``low dimensional'' instances (with fewer players and items) to ``high dimensional'' instances. This reduction is delicate as the players need to ``complete'' their inputs in the higher dimensional instance by \emph{independently} sampling the ``missing part'' \emph{conditioned on the first message} of the protocol \emph{without any further communication}, while this distribution is a \emph{correlated} distribution. 

Furthermore, in contrast to~\cite{AlonNRW15}, our sub-instances in each distribution are \emph{overlapping} (as otherwise exponentially many sub-instances cannot be embedded inside a single polynomially larger instance) and hence may interfere with each other, potentially diminishing the role of the special instance. We overcome this obstacle by embedding these sub-instances based on a family
of \emph{small-intersection sets} to limit the potential overlap between the sub-instances and prove that solving the
special instance is crucial even in the presence of these overlaps. It is worth pointing out that this approach allows us to avoid the doubly-exponential 
rate of growth in the size of instances across different rounds in~\cite{AlonNRW15}, resulting in exponentially better dependence on the parameter $r$ in our lower bound compared to~\cite{AlonNRW15}. 
Finally, since our lower bound is for estimating the \emph{value} of social welfare (as opposed to finding an allocation),
we need a different embedding argument in our reduction than the one used in~\cite{AlonNRW15}\footnote{Our problem corresponds to the problem of estimating the \emph{size} of a maximum matching as opposed to 
finding an \emph{approximate} matching in the setting of~\cite{AlonNRW15}. To the best of our knowledge, no non-trivial lower bounds are known for the matching size estimation problem in the setting of~\cite{AlonNRW15}; 
see~\cite{AssadiKL17} for further details.}. In particular, we now embed the low dimensional instance in \emph{multiple} places of the
high dimensional instance as opposed to only one.

\subsection{Other Related Work}\label{sec:related}
Communication complexity of combinatorial auctions has received quite a lot of attention in the literature. It is known that for \emph{arbitrary valuations}, exponential 
amount of communication is needed to obtain an $\paren{m^{1/2-\eps}}$-approximate allocation (for every constant $\eps > 0$)~\cite{NisanS06} (see also~\cite{Nisan02}), and this
is also tight~\cite{AharoniEL88,Lovasz75,BriestKV11,LaviS11}. 
For \emph{subadditive valuations}, a constant factor approximation to the social welfare
can be achieved in our model using only polynomial communication~\cite{DobzinskiNS05,DobzinskiS06,Feige09,FeigeV06,LehmannLN06,Vondrak08,DuttingK17} 
(and polynomially many rounds of interaction); in particular, Feige~\cite{Feige09} developed a $2$-approximation polynomial communication protocol for this problem and Dobzinski, Nisan, and Schapira~\cite{DobzinskiNS05} proved 
that obtaining  $(2-\eps)$-approximation (for any constant $\eps > 0$) requires exponential communication (regardless of the number of rounds). Moreover, Dutting and Kesselheim~\cite{DuttingK17} designed an 
$O(\log{m})$-approximation protocol with polynomial communication for subadditive combinatorial auctions in which each bidder needs to communicate exactly once; however, this protocol still requires $n$ rounds of interaction in our 
model as the players need to communicate in a round-robin fashion making the message sent by a bidder crucially depending on the messages communicated earlier by the previous bidders.

Another line of relevant research considers the case where the valuation of the bidders are chosen \emph{independently} from a commonly known distribution (see, e.g.~\cite{FeldmanFGL13,FeldmanGL15}) and aims
to design ``simple'' and simultaneous protocols that achieve an efficient allocation. The main difference between this setting and ours is that we are interested in arbitrary distributions of inputs for the bidders which are not necessarily 
product distributions; as already shown by the strong impossibility results of~\cite{DobzinskiNO14}, the aforementioned type of protocols cannot provably exist in our model when input distributions are correalted. Finally, 
we point out that ``incompressability'' results are also known for subadditive valuations: any polynomial-length encoding of subadditive
valuations must lose $\Omega(\sqrt{m})$ in precision~\cite{BadanidiyuruDFKNR12,BalcanCIW12}. 

We refer the interested reader to~\cite{DobzinskiNO14} for a comprehensive summary of related work and further discussion on the role of interaction in markets.

%%\paragraph{Organization.}
%%The rest of the paper is organized as follows. In Section~\ref{SEC:PRELIM}, we introduce our notation and preliminaries needed for the rest of the paper. 
%%Specifically, Section~\ref{sec:auctions} formally defines combinatorial auctions, Section~\ref{sec:communication} defines our communication
%%model, and Section~\ref{sec:info} reviews some basic concepts from information theory used in this paper. In Section~\ref{SEC:SIM}, we prove our lower bound
%%result for simultaneous protocols as a warm-up to our main result which is presented in Section~\ref{SEC:MULTI}. We conclude the paper in 
%%Section~\ref{sec:conclusion} with some open problems.

%%
\section{Preliminaries}\label{SEC:PRELIM}

\paragraph{Notation.} For any integer $a \geq 1$, we let $[a]:=\set{1,\ldots,a}$. We say that a set $S \subseteq [n]$ with $\card{S}=s$ is a \emph{$s$-subset} of $[n]$. 
For a $k$-dimensional tuple $X = (X_1,\ldots,X_k)$ and index $i \in [k]$, we define $X^{<i}:= (X_1,\ldots,X_{i-1})$ and $X^{-i}:=(X_1,\ldots,X_{i-1},X_{i+1},\ldots,X_k)$.
We use capital letters to denote random variables. For a random variable $A$, $\supp{A}$ denotes the support of $A$ and $\distribution{A}$ denotes its distribution. We further define $\card{A} := \log{\card{\supp{A}}}$. 
We write $A \perp B \mid C$ to denote that $A$ and $B$ are independent conditioned on $C$. We use ``w.p.'' to mean ``with probability''. 

\paragraph{Concentration bounds.} Throughout, we use the following version of Chernoff bound for negatively correlated random variables first proved by~\cite{PanconesiS97}; see, also~\cite{ConcentrationBook,ImpagliazzoK10}. 

\begin{proposition}[Chernoff bound]\label{prop:chernoff}
	Let $X_1,\ldots,X_n$ be \emph{negatively correlated} random variables taking values in $[0,1]$ and let $X:= \sum_{i=1}^{n} X_i$. Then, for any $\alpha \geq 2e^2$,
	\[ \Pr\paren{X \geq \alpha \cdot \Ex\bracket{X}} \leq \exp\paren{-\Omega(\alpha \cdot \Ex\bracket{X})} \]
\end{proposition}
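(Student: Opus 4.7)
The plan is to follow the standard moment generating function (MGF) approach to Chernoff bounds, with the one modification needed to handle negative correlation rather than independence. First, I would apply Markov's inequality to the random variable $e^{\lambda X}$ for a parameter $\lambda > 0$ to be optimized later, yielding
\[
\Pr(X \geq \alpha \mu) \;\leq\; \frac{\Ex[e^{\lambda X}]}{e^{\lambda \alpha \mu}},
\]
where $\mu := \Ex[X]$. Everything then reduces to controlling the MGF $\Ex[e^{\lambda X}] = \Ex\bigl[\prod_i e^{\lambda X_i}\bigr]$.

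The crux of the argument, and the step I expect to be the main obstacle, is establishing the multiplicative upper bound
\[
\Ex\Bigl[\prod_{i=1}^n e^{\lambda X_i}\Bigr] \;\leq\; \prod_{i=1}^n \Ex\bigl[e^{\lambda X_i}\bigr]
\]
despite the variables only being negatively correlated (rather than independent). This is exactly the content of the Panconesi--Srinivasan observation that motivates the statement: for $\lambda > 0$, the maps $x \mapsto e^{\lambda x}$ are monotone nondecreasing on $[0,1]$, and for variables satisfying the appropriate negative-correlation condition, the expectation of a product of such monotone functions is dominated by the product of expectations. I would either appeal directly to this lemma, or derive it by induction on $n$: conditioning on $X_1, \dots, X_{n-1}$, use negative correlation to bound $\Ex[e^{\lambda X_n} \mid X^{<n}]$ by its unconditional expectation (after peeling off the monotone transform via an integral representation $e^{\lambda x} = 1 + \lambda \int_0^1 \mathbb{1}[x \geq t] e^{\lambda t}\, dt$), and then iterate.

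Once that multiplicative bound is in hand, the rest is routine. For each $X_i \in [0,1]$, convexity of $e^{\lambda x}$ on $[0,1]$ gives $e^{\lambda X_i} \leq 1 + (e^\lambda - 1) X_i$, so
\[
\Ex[e^{\lambda X_i}] \;\leq\; 1 + (e^\lambda - 1) \Ex[X_i] \;\leq\; \exp\bigl((e^\lambda - 1)\,\Ex[X_i]\bigr).
\]
Multiplying over $i$ and using linearity yields $\Ex[e^{\lambda X}] \leq \exp\bigl((e^\lambda - 1)\mu\bigr)$, and hence
\[
\Pr(X \geq \alpha \mu) \;\leq\; \exp\bigl(\mu\bigl((e^\lambda - 1) - \lambda \alpha\bigr)\bigr).
\]

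Finally, I would optimize by setting $\lambda = \ln \alpha$, which gives exponent $\mu(\alpha - 1 - \alpha \ln \alpha)$. The hypothesis $\alpha \geq 2e^2$ ensures $\ln \alpha \geq 2 + \ln 2 > 2$, so $\alpha \ln \alpha \geq 2\alpha$ and the exponent is at most $-(\alpha + 1)\mu = -\Omega(\alpha \mu)$. This yields the stated bound $\Pr(X \geq \alpha \Ex[X]) \leq \exp(-\Omega(\alpha \Ex[X]))$. The threshold $2e^2$ is essentially what one needs to absorb the $-1$ term and turn the exponent strictly negative with slack proportional to $\alpha$; any cleaner constant just requires re-optimizing $\lambda$ slightly.
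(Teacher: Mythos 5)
The paper does not prove this proposition; it cites it to Panconesi--Srinivasan and points to Impagliazzo--Kabanets and a concentration-inequalities text, so there is no in-paper proof to compare against. That said, your MGF route is the standard way to obtain a bound of this form and the arithmetic is correct: with $\lambda=\ln\alpha$ the exponent is $\mu\bigl(\alpha-1-\alpha\ln\alpha\bigr)$, and $\alpha\geq 2e^2$ gives $\ln\alpha>2$, hence the exponent is at most $-(\alpha+1)\mu=-\Omega(\alpha\mu)$.

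The one place where your sketch is imprecise is, as you yourself flag, the multiplicative bound on the MGF. Two remarks. First, ``negatively correlated'' in the pairwise-covariance sense is genuinely too weak for the inequality $\Ex\bigl[\prod_i e^{\lambda X_i}\bigr]\leq\prod_i\Ex\bigl[e^{\lambda X_i}\bigr]$; what is needed is negative association (for $[0,1]$-valued variables) or, for indicators, negative cylinder dependence $\Pr[\wedge_{i\in S}X_i=1]\leq\prod_{i\in S}\Pr[X_i=1]$. The paper's terminology is loose, but in every application the $X_a$ are inclusion indicators of a uniformly random fixed-size subset, for which both properties hold. Second, your inductive step as written --- bounding $\Ex[e^{\lambda X_n}\mid X^{<n}]$ by its unconditional expectation --- is not what negative association gives you; it controls $\Cov\bigl(f(X_I),g(X_J)\bigr)$ for nondecreasing $f,g$ on disjoint index sets, not conditional expectations. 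The clean induction applies the definition directly to $f=\prod_{i<n}e^{\lambda X_i}$ and $g=e^{\lambda X_n}$. For the Bernoulli case actually used in the paper there is an even more elementary route that avoids negative association altogether: expand $\Ex\bigl[\prod_i(1+(e^\lambda-1)X_i)\bigr]=\sum_{S}(e^\lambda-1)^{|S|}\Ex\bigl[\prod_{i\in S}X_i\bigr]$, apply cylinder dependence termwise (the coefficients $(e^\lambda-1)^{|S|}$ are nonnegative since $\lambda>0$), and refactor. Either way the crux you identified is the right one, and the rest of your argument goes through.
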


\paragraph{Intersecting families.} The following combinatorial construction plays a crucial role in our proofs. 

\begin{definition}
	A \emph{$(p,q,t,\ell)$-intersecting family} $\FF$ is a collection of $p$ subsets of $[q]$ each of size $t$, such that for any two distinct sets $S,T \in \FF$, $\card{S \cap T} \leq \ell$. 
\end{definition}

We prove the existence of an exponentially large intersecting family with a small pair-wise intersection, using a probabilistic argument. 

\begin{lemma}\label{lem:intersecting-families}
	For any integer $r \geq 1$, any parameter $\eps > 0$, and any integer $k \geq \paren{2e^2 \cdot r^2}^{\frac{1}{\eps}}$,  there
	exists a $(p,q,t,\ell)$-intersecting family with $p = \exp\paren{\Theta(k^{2r-2+\eps})}$, $q = k^{2r} + r \cdot k^{2r-1}$, $t = r \cdot k^{2r-1}$, and $\ell = k^{2r-2 + \eps}$. 
\end{lemma}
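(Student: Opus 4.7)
The plan is to prove Lemma~\ref{lem:intersecting-families} by the probabilistic method: sample the $p$ sets independently and uniformly at random from $\binom{[q]}{t}$, bound the probability that any fixed pair violates the intersection constraint using the Chernoff bound in Proposition~\ref{prop:chernoff}, and conclude by a union bound over the $\binom{p}{2}$ pairs.

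Concretely, I would first let $S_1,\ldots,S_p$ be i.i.d.\ uniform $t$-subsets of $[q]$ and fix an arbitrary pair $i \neq j$. Conditioned on $S_i$, I would write $\card{S_i \cap S_j} = \sum_{u \in S_i} X_u$, where $X_u = \mathbf{1}[u \in S_j]$. Since $S_j$ is sampled without replacement, the indicators $\{X_u\}_{u \in [q]}$ are negatively correlated $\{0,1\}$ variables with $\Ex[X_u] = t/q$, hence the restriction to $u \in S_i$ is also negatively correlated and Proposition~\ref{prop:chernoff} applies. The expectation is
\[
\Ex\bracket{\card{S_i \cap S_j}} \;=\; \frac{t^2}{q} \;=\; \frac{r^2 k^{4r-2}}{k^{2r} + r k^{2r-1}} \;\leq\; r^2 k^{2r-2}.
\]
Setting $\alpha := k^{\eps}/r^2$, the hypothesis $k \geq (2e^2 r^2)^{1/\eps}$ gives $\alpha \geq 2e^2$, so the Chernoff bound applies and yields
\[
\Pr\Paren{\card{S_i \cap S_j} \geq k^{2r-2+\eps}} \;\leq\; \Pr\Paren{\card{S_i \cap S_j} \geq \alpha \cdot \tfrac{t^2}{q}} \;\leq\; \exp\Paren{-\Omega(k^{2r-2+\eps})}.
\]

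Finally, I would choose $p = \exp(c \cdot k^{2r-2+\eps})$ for a sufficiently small constant $c > 0$ so that a union bound over all $\binom{p}{2} \leq p^2$ pairs yields
\[
\Pr\Paren{\exists\, i \neq j : \card{S_i \cap S_j} > \ell} \;\leq\; p^2 \cdot \exp\Paren{-\Omega(k^{2r-2+\eps})} \;<\; 1,
\]
so with positive probability the collection $\FF = \{S_1,\ldots,S_p\}$ is a valid $(p,q,t,\ell)$-intersecting family, establishing existence.

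I don't expect any real obstacle here: the argument is a textbook probabilistic construction, and the only subtlety is that the $X_u$'s are not independent (they come from sampling without replacement), which is exactly handled by invoking the negatively-correlated version of Chernoff stated in Proposition~\ref{prop:chernoff}. The only bookkeeping to be careful about is verifying that the threshold $\alpha = k^{\eps}/r^2$ exceeds the $2e^2$ required by the proposition, which is where the precise quantitative hypothesis on $k$ enters, and verifying that $q$ is large enough in the denominator of $t^2/q$ for the crude bound $t^2/q \leq r^2 k^{2r-2}$ to hold (it does, since the $rk^{2r-1}$ additive term in $q$ only helps).
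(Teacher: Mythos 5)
Your proposal is essentially identical to the paper's own proof: sample $p$ i.i.d.\ uniform $t$-subsets of $[q]$, bound $\card{S_i \cap S_j}$ via the negatively-correlated Chernoff bound of Proposition~\ref{prop:chernoff} with $\alpha = k^{\eps}/r^2$, and union-bound over $\binom{p}{2}$ pairs. The only (immaterial) difference is that you keep the cleaner inequality $\Ex\bracket{\card{S_i \cap S_j}} = t^2/q \leq r^2 k^{2r-2}$ where the paper writes it as an equality; both routes give the stated bound.
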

\begin{proof}
	Let $\FF$ be a family of $p$ sets (for $p$ to be determined later), each chosen independently and uniformly at random from all $t$-subsets of $[q]$. Fix any pair of sets $S,T \in \FF$; 
	for each element $a \in S$, define the random variable $X_a \in \set{0,1}$ which is $1$ iff $a \in T$ also. We have $\Ex\bracket{X_a} \leq {r}/{k}$.  Let $X = \sum_{a \in S} X_a$ denotes 
	$\card{S \cap T}$; hence $\Ex\bracket{X} = r^2 \cdot k^{2r-2}$. Since $X_a$'s are negatively correlated random variables, by Chernoff bound (Proposition~\ref{prop:chernoff} with $\alpha = k^{\eps}/r^2 \geq 2e^2$ by 
	lemma statement), 
	\begin{align*}
		\Pr\paren{\card{S \cap T} > \ell} = \Pr\paren{X > k^{2r-2+\eps}} = \Pr\paren{X > k^{\eps}/r^2 \cdot \Ex\bracket{X}} \leq \exp\paren{-\Omega(k^{2r-2+\eps})} 
	\end{align*}
	By a union bound over all possible choices for $S,T \in \FF$, 
	\begin{align*}
		\Pr\paren{\exists~S,T \in \FF: \card{S \cap T} > \ell} \leq \sum_{S \neq T \in \FF} \Pr\paren{\card{S \cap T} > \ell} \leq {{p} \choose{2}} \cdot \exp\paren{-\Omega(k^{2r-2+\eps})} 
	\end{align*}
	Taking $p = \exp\paren{\Theta(k^{2r-2+\eps})}$ ensures that with some non-zero probability, the set $\FF$ is a $(p,q,t,\ell)$-intersecting family, implying the existence of such a family. 
\end{proof}

\subsection{Combinatorial Auctions} \label{sec:auctions}
We have a set $N$ of $n$ bidders, and a set $M$ of $m$ items. Each bidder $\bid \in N$ has a valuation function $v_\bid: 2^{M} \rightarrow \IR_{+}$, which assigns 
a value to each \emph{bundle} of items (we assume $v_\bid(\emptyset) = 0$ and $v_\bid(\cdot)$ is non-decreasing). The goal is to maximize the \emph{social welfare} defined as $\max_{(A_1,\ldots,A_n)} \sum_{\bid \in N} v_\bid(A_\bid)$, where $(A_1,\ldots,A_n)$ ranges over all possible allocations of items in $M$ to bidders in $N$ such that bidder $\bid$ receives the bundle $A_\bid$.  

A valuation function $v(\cdot)$ is considered \emph{subadditive} iff for any two bundles of items $S,T \subseteq M$, $v(S \cup T) \leq v(S) + v(T)$. A valuation function is 
\emph{additive} iff for any bundle $S \subseteq M$, $v(S) = \sum_{j \in S}v(\set{j})$. A valuation function is $\emph{XOS}$ iff there exists $r$ additive valuation functions $a_1,\ldots,a_r$ 
such that for all bundles $S \subseteq M$, $v(S) = \max_{r} a_r(S)$. Each function $a_j$ is called a \emph{clause} of $v$ and for any bundle $S$, any clause $a \in \argmax_r a_r(S)$ is referred 
to as a \emph{maximizing clause} of $S$. Finally, a valuation function is \emph{unit-demand} iff for any $S \subseteq M$, $v(S) = \max_{j \in S} v(\set{j})$. It is easy to verify that both XOS and unit-demand functions are also 
subadditive. 

Notice that in general, subadditive and XOS valuation functions require $\exp(m)$ many bits for representation, while unit-demand valuation functions can be
 represented with $O(m)$ numbers, i.e., by describing the value of each singleton set. As such, in subadditive combinatorial auctions, we are interested in protocols that can reduce the communication from exponential in $m$ to polynomial, while in unit-demand auctions, we mainly seek protocols that
reduce the communication from linear in $m$ to logarithmic.

\subsection{Communication Model} \label{sec:communication}

We use the ({number-in-hand}) multiparty communication model with {shared blackboard}: there are $n$ players (corresponding to the bidders) receiving inputs
$(x_1,\ldots,x_n)$, jointly distributed according to a prior distribution $\dist$ on $\mathcal{X}_1 \times \ldots \times \mathcal{X}_n$.  The communication proceeds in \emph{rounds} whereby in each round $r$, 
the players \emph{simultaneously} write a message on a \emph{shared blackboard} visible to all parties. In a deterministic protocol, the message sent by any player $\bid$ in each round can only depend on the private 
input of the player, i.e., $x_\bid$, plus the messages of all players in previous rounds, i.e., the content of the blackboard. In a randomized protocol, we further allow the players to have access to both public and private randomness
and the message of players can depend on them as well. 

For a protocol $\prot$, we use $\Prot=(\Prot_1,\ldots,\Prot_n)$ to denote the transcript of the message communicated by the players (i.e., the content of the blackboard). In addition to the $n$ players, there exists also a $(n+1)$-th 
party called the \emph{referee} which does not have any input, and is responsible for outputting the answer in the last round, solely based on content of the blackboard $\Prot$ (plus the public randomness in case of randomized 
protocols). Finally, the \emph{communication cost} of the protocol $\prot$, denoted by $\norm{\prot}$, is the sum of worst-case length of the messages communicated by all players, i.e., $\norm{\prot} = \sum_{i=1}^{n} \card{\Prot_i}$. 

\paragraph{Approximation guarantee.} We consider protocols that are required to estimate the \emph{maximum value of social welfare} in any instance $I$ of a combinatorial auction (denoted by $\sw(I)$).
More formally, a $\delta$-error $\alpha$-approximation protocol needs to, for each input instance $I$ sampled from $\dist$, output a number in the 
range $[\frac{1}{\alpha} \cdot \sw(I) , \sw(I)]$ w.p. at least $1-\delta$, where the randomness is over the distribution $\dist$ (and the randomness of protocol in case of randomized protocols). 

This problem is provably easier than finding an approximate allocation in the interactive setting: any $r$-round protocol for finding an approximate allocation can be used to obtain an $(r+1)$-round protocol for estimating the 
value of social welfare with $O(n)$ additional communication; simply compute the approximate allocation in the first $r$ rounds and spend one additional round in which each player declares her value for the assigned bundle
to the referee. It was shown very recently in~\cite{BravermanMW17} that this \emph{loss of one round} in the reduction is unavoidable (see Section~\ref{SEC:SIM} for further details). However, 
this extra one round is essentially negligible for our purpose as we are interested in the asymptotic dependence of the approximation ratio and the number of rounds.% of interaction. 

\subsection{Tools from Information Theory}\label{sec:info}

We briefly review some basic definitions and facts from information theory that are used in this paper. 
We refer the interested reader to the excellent text by Cover and Thomas~\cite{ITbook} for an introduction to this field,
and the proofs of the claims in this section (see Chapter 2). 

In the following, we denote the \emph{Shannon Entropy} of a random variable $A$ by
$\HH(A)$ and the \emph{mutual information} of two random variables $ A$ and $ B$ by
$\mi{A}{B} = \HH( A) - \HH( A \mid  B) = \HH( B) - \HH( B \mid  A)$. 
We use $H_2(\cdot)$ to denote the binary entropy function where for any real number $0
< \delta < 1$, $H_2(\delta) := \delta\log{\frac{1}{\delta}} +
(1-\delta)\log{\frac{1}{1-\delta}}$. 
The proof of the following basic properties of entropy and mutual information can be found in~\cite{ITbook}, Chapter~2.

\begin{fact}\label{fact:it-facts}
  Let $ A$, $ B$, and $ C$ be three (possibly correlated) random variables.
   \begin{enumerate}
  \item \label{part:uniform} $0 \leq \HH( A) \leq \card{ A}$, and $\HH( A) = \card{ A}$
    iff $ A$ is uniformly distributed over its support.
  \item \label{part:info-zero} $\mi{A}{B \mid C} \geq 0$. The equality holds iff $ A$ and
    $ B$ are \emph{independent} conditioned on $C$.
  \item \label{part:cond-reduce} $\HH( A \mid  B, C) \leq \HH( A \mid  B)$.  The equality holds iff $ A \perp C \mid B$.
  \item \label{part:chain-rule} $\mi{A, B}{C} = \mi{A}{C} + \mi{B}{C \mid  A}$ (\emph{chain rule of mutual information}).
  \item \label{part:data-processing} Suppose $f(A)$ is a deterministic function of $A$, then $\mi{f(A)}{B \mid C} \leq \mi{A}{B \mid C}$ (\emph{data processing inequality}). 
   \end{enumerate}
\end{fact}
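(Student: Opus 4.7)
The plan is to derive all five items directly from the definitions of Shannon entropy $\HH(A) = -\sum_{a} \Pr\bracket{A=a} \log \Pr\bracket{A=a}$ and mutual information $\mi{A}{B} = \HH(A) - \HH(A \mid B)$, using only two standard inequalities: Jensen's inequality applied to the concave logarithm, and its consequence Gibbs' inequality, which asserts that for any two distributions $\mu,\nu$ on a common support, $\sum_x \mu(x)\log\frac{\mu(x)}{\nu(x)} \geq 0$ with equality iff $\mu = \nu$.

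For Part~(\ref{part:uniform}), each summand of the entropy is non-negative, giving $\HH(A) \geq 0$, while Jensen's inequality applied to $\log$ yields $\HH(A) = \Ex_{A}\bracket{\log \frac{1}{\Pr\bracket{A}}} \leq \log \Ex_A\bracket{\frac{1}{\Pr\bracket{A}}} = \log{\card{\supp{A}}}$, with equality iff $A$ is uniform (by strict concavity of $\log$). For Part~(\ref{part:info-zero}), I would rewrite $\mi{A}{B \mid C} = \Ex_{C}\bracket{\DD{\Pr\bracket{A, B \mid C}}{\Pr\bracket{A \mid C} \cdot \Pr\bracket{B \mid C}}}$ and invoke Gibbs' inequality; the equality case is exactly conditional independence $A \perp B \mid C$. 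Part~(\ref{part:cond-reduce}) is then immediate from Part~(\ref{part:info-zero}) via the identity $\HH(A \mid B) - \HH(A \mid B, C) = \mi{A}{C \mid B}$.

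For Part~(\ref{part:chain-rule}), I would expand $\mi{A, B}{C} = \HH(A, B) - \HH(A, B \mid C)$ and apply the chain rule of entropy $\HH(A,B) = \HH(A) + \HH(B \mid A)$ together with $\HH(A, B \mid C) = \HH(A \mid C) + \HH(B \mid A, C)$; rearranging the four terms yields $\mi{A}{C} + \mi{B}{C \mid A}$. For Part~(\ref{part:data-processing}), I would expand $\mi{A, f(A)}{B \mid C}$ via the chain rule in two different ways: one expansion gives $\mi{A}{B \mid C} + \mi{f(A)}{B \mid A, C} = \mi{A}{B \mid C}$, where the extra term vanishes since $f(A)$ is a deterministic function of $A$; the other gives $\mi{f(A)}{B \mid C} + \mi{A}{B \mid f(A), C} \geq \mi{f(A)}{B \mid C}$ by Part~(\ref{part:info-zero}). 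Equating the two expressions gives the desired inequality.

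Since each of these five statements is a standard textbook identity, there is no genuine obstacle here --- the entire task is careful bookkeeping with the two definitions and the two inequalities above. The paper itself defers to Cover and Thomas~\cite{ITbook} for the details, and the plan above mirrors their standard presentation.
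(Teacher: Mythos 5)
Your proposal is correct in all five parts, and it mirrors the standard presentation in Cover and Thomas that the paper itself cites in lieu of giving a proof (the paper writes only ``The proof of the following basic properties of entropy and mutual information can be found in~\cite{ITbook}, Chapter~2''). Since the paper provides no argument of its own, your derivation is simply a faithful write-up of that textbook route: Jensen for the entropy bound, the KL-divergence representation of conditional mutual information plus Gibbs' inequality for non-negativity and the independence characterization, the entropy chain rule for the mutual-information chain rule, and the two-way chain-rule expansion of $\mi{A,f(A)}{B\mid C}$ for data processing.
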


The following Fano's inequality states that if a random variable $A$ can be used to estimate the value of another random variable $B$, 
then $A$ should ``consume'' most of $B$'s entropy.

\begin{fact}\label{fact:fano}
	Let $A,B$ be random variables and $f$ be a function that given $A$ predicts a value for $B$. Suppose $B$ is binary and 
	$\Pr\paren{f(A) \neq B} \leq \delta$, then $\HH(B \mid A) \leq H_2(\delta)$. 
\end{fact}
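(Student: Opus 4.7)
The plan is to give the standard information-theoretic proof of Fano's inequality specialized to the binary case, using the error indicator trick and the chain rule from \itfacts{chain-rule}. The key observation that makes the binary case clean is that once we condition on whether a prediction is correct, the value of $B$ becomes completely determined from $A$.

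First I would introduce the \emph{error indicator} random variable $E := \mathbf{1}\set{f(A) \neq B}$, which by assumption satisfies $\Pr(E = 1) \leq \delta$. The central identity comes from expanding $\HH(B,E \mid A)$ two different ways via the chain rule:
\[
\HH(E \mid A) + \HH(B \mid A, E) \;=\; \HH(B \mid A) + \HH(E \mid A, B).
\]
Since $E$ is a deterministic function of $(A,B)$ (we compute $f(A)$ from $A$ and compare to $B$), the last term vanishes, giving $\HH(B \mid A) = \HH(E \mid A) + \HH(B \mid A, E)$.

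Next I would bound the two terms on the right. For $\HH(E \mid A)$, I would use $\HH(E \mid A) \leq \HH(E)$ (from \itfacts{cond-reduce}) and the fact that the binary entropy $H_2$ is increasing on $[0,\tfrac12]$, yielding $\HH(E) = H_2(\Pr(E=1)) \leq H_2(\delta)$ under the (implicit) assumption $\delta \leq \tfrac12$. For the second term, I would exploit that $B$ is binary: conditioned on $E=0$ we have $B = f(A)$, and conditioned on $E=1$ we have $B = 1 - f(A)$, so in either case $B$ is a deterministic function of $A$ given $E$. Hence $\HH(B \mid A, E) = 0$. Combining gives $\HH(B \mid A) \leq H_2(\delta)$.

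The proof is short and has no real obstacle; the only subtle point worth flagging is the monotonicity of $H_2$, which requires $\delta \leq \tfrac12$ (the regime of interest in all applications of this fact in the paper, where $\delta$ is a small constant error probability). In the more general, non-binary formulation of Fano's inequality the term $\HH(B \mid A,E)$ contributes an extra $\delta \log(|\supp{B}|-1)$, but that term disappears here precisely because $B$ is binary.
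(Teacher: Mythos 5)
Your proof is correct and is the standard textbook argument (error-indicator plus chain rule) for Fano's inequality specialized to the binary case; the paper itself does not prove this fact but simply cites Cover and Thomas, Chapter 2, where this is exactly the proof given. Your flag about needing $\delta \leq \tfrac12$ for the monotonicity of $H_2$ is the right thing to note, and it is indeed satisfied in every invocation of the fact in the paper.
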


We also use the following two simple propositions.

\begin{proposition}\label{prop:info-increase}
  For variables $ A,  B,  C, D$, if $ A \perp D \mid C$, then, $\mi{A}{B \mid C} \leq \mi{A}{B \mid  C,  D}$.
\end{proposition}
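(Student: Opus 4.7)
The plan is to expand $\mi{A}{B,D \mid C}$ using the chain rule of mutual information (Fact~\ref{fact:it-facts}-(\ref{part:chain-rule})) in two different orderings, and then compare the two resulting expressions.

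First I would apply the chain rule by peeling off $D$ before $B$:
\[
\mi{A}{B,D \mid C} = \mi{A}{D \mid C} + \mi{A}{B \mid C, D}.
\]
By hypothesis $A \perp D \mid C$, so by Fact~\ref{fact:it-facts}-(\ref{part:info-zero}) the first term on the right-hand side equals zero, giving $\mi{A}{B,D \mid C} = \mi{A}{B \mid C, D}$.

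Next I would apply the chain rule in the opposite order, peeling off $B$ first:
\[
\mi{A}{B,D \mid C} = \mi{A}{B \mid C} + \mi{A}{D \mid B, C}.
\]
The term $\mi{A}{D \mid B,C}$ is nonnegative by Fact~\ref{fact:it-facts}-(\ref{part:info-zero}), so $\mi{A}{B,D \mid C} \geq \mi{A}{B \mid C}$.

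Chaining the equality and the inequality yields $\mi{A}{B \mid C,D} \geq \mi{A}{B \mid C}$, as desired. There is no real obstacle here; the only subtle point is noticing that conditioning on $B$ in the second expansion can actually \emph{create} dependence between $A$ and $D$ (so one should not be tempted to drop the $\mi{A}{D \mid B,C}$ term for the wrong reason), but since we only need a lower bound, nonnegativity suffices.
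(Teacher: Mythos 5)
Your proof is correct. It takes a different route from the paper: the paper unfolds the two mutual information terms into entropy differences, uses $A \perp D \mid C$ to write $\HH(A \mid C) = \HH(A \mid C, D)$ (via Fact~\ref{fact:it-facts}-(\ref{part:cond-reduce})), and then uses the fact that further conditioning can only decrease $\HH(A \mid C, B)$ to obtain $\HH(A \mid C, B) \geq \HH(A \mid C, B, D)$; subtracting these gives the claim directly. You instead expand $\mi{A}{B,D \mid C}$ by the chain rule in two orders and compare, killing the $\mi{A}{D \mid C}$ term by the independence hypothesis and lower-bounding the $\mi{A}{D \mid B, C}$ term by nonnegativity. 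Both arguments are standard and about the same length; yours stays entirely at the level of mutual information, while the paper's drops to entropies. One small point worth being explicit about if you wrote this up: the chain rule in Fact~\ref{fact:it-facts}-(\ref{part:chain-rule}) is stated for the first argument of $\II$, so you are implicitly using the symmetry $\mi{A}{B,D \mid C} = \mi{B,D}{A \mid C}$ before applying it — this is fine, just worth flagging.
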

 \begin{proof}
  Since $ A$ and $ D$ are independent conditioned on $C$, by
  \itfacts{cond-reduce}, $\HH( A \mid  C) = \HH( A \mid
   C, D)$ and $\HH( A \mid  C, B) \ge \HH( A \mid  C, B, D)$.  We have,
	 \begin{align*}
	  \mi{A}{B \mid  C} &= \HH( A \mid  C) - \HH( A \mid  C, B) = \HH( A \mid  C, D) - \HH( A \mid  C, B) \\
	  &\leq \HH( A \mid  C, D) - \HH( A \mid  C, B, D) = \mi{A}{B \mid  C, D}
	\end{align*}
\end{proof}

\begin{proposition}\label{prop:info-decrease}
  For variables $ A,  B,  C,D$, if $ A \perp D \mid B,C$, then, $\mi{A}{B \mid  C} \geq \mi{A}{B \mid  C,  D}$.
\end{proposition}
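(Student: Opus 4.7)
The plan is to mirror the short argument given just above for Proposition~\ref{prop:info-increase}, but swap the roles of the two entropy equalities. The inequality $\mi{A}{B \mid C} \geq \mi{A}{B \mid C, D}$ should be rewritten using the identity $\mi{X}{Y \mid Z} = \HH(X \mid Z) - \HH(X \mid Y, Z)$ on both sides, turning the target into
\[
\bigl[\HH(A \mid C) - \HH(A \mid B, C)\bigr] - \bigl[\HH(A \mid C, D) - \HH(A \mid B, C, D)\bigr] \geq 0.
\]

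The key observation is that the hypothesis $A \perp D \mid B, C$ lets us collapse the two terms with $B$ in the conditioning: by \itfacts{cond-reduce} (equality case), $\HH(A \mid B, C) = \HH(A \mid B, C, D)$. After this cancellation, the remaining difference is $\HH(A \mid C) - \HH(A \mid C, D)$, which is non-negative by another application of \itfacts{cond-reduce} (this time without needing any independence hypothesis, since conditioning on more variables never increases entropy).

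Concretely, I would write:
\begin{align*}
\mi{A}{B \mid C} &= \HH(A \mid C) - \HH(A \mid B, C) \\
&\geq \HH(A \mid C, D) - \HH(A \mid B, C) \\
&= \HH(A \mid C, D) - \HH(A \mid B, C, D) = \mi{A}{B \mid C, D},
\end{align*}
where the inequality uses conditioning reduces entropy and the second equality uses the assumption $A \perp D \mid B, C$ together with \itfacts{cond-reduce}.

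There is no real obstacle here; the only thing to be careful about is applying the independence assumption on the correct pair of entropy terms (the ones conditioned on $B, C$ rather than just $C$), since conditioning only on $C$ we do not have $A \perp D \mid C$ in general. This is the mirror situation of Proposition~\ref{prop:info-increase}, where the independence held with $C$ and not with $B, C$, and the inequality flipped accordingly.
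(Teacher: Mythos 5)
Your proof is correct and is essentially identical to the paper's: both use $\HH(A \mid C) \geq \HH(A \mid C, D)$ from conditioning reducing entropy, and $\HH(A \mid B, C) = \HH(A \mid B, C, D)$ from the independence hypothesis via the equality case of \itfacts{cond-reduce}.
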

 \begin{proof}
 Since $A \perp D \mid B,C$, by \itfacts{cond-reduce}, $\HH(A \mid B,C) = \HH(A \mid B,C,D)$. Moreover, since conditioning can only reduce the entropy (again by \itfacts{cond-reduce}), 
  \begin{align*}
 	\mi{A}{B \mid  C} &= \HH(A \mid C) - \HH(A \mid B,C) \geq \HH(A \mid D,C) - \HH(A \mid B,C) \\
	&= \HH(A \mid D,C) - \HH(A \mid B,C,D) = \mi{A}{B \mid C,D}
 \end{align*}
\end{proof}

For two distributions $\mu$ and $\nu$ over the same probability space, the \emph{Kullback-Leibler divergence} between $\mu$ and $\nu$ is defined as $\DD{\mu}{\nu}:= \Ex_{a \sim \mu}\Bracket{\log\frac{\Pr_\mu(a)}{\Pr_{\nu}(a)}}$.
We have,
%For our proofs, we need the following relation between mutual information and KL-divergence. 
\begin{fact}\label{fact:kl-info}
	For random variables $A,B,C$, 
	\[\mi{A}{B \mid C} = \Ex_{(b,c) \sim \distribution{B,C}}\Bracket{ \DD{\distribution{A | C=c}}{\distribution{A |B=b,C=c}}}.\] 
\end{fact}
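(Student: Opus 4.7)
The plan is to derive the identity directly from the definition $\mi{A}{B \mid C} = \HH(A \mid C) - \HH(A \mid B, C)$, by expanding both conditional entropies as expectations over the joint law of $(A,B,C)$ and then regrouping the logarithms into a single log-ratio that matches the definition of $\DD{\cdot}{\cdot}$ stated just before the fact.

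First, I would write $\HH(A\mid C) = \Ex_{(a,c)\sim \distribution{A,C}}\Bracket{-\log \Pr(A=a\mid C=c)}$ and $\HH(A\mid B,C) = \Ex_{(a,b,c)\sim \distribution{A,B,C}}\Bracket{-\log \Pr(A=a\mid B=b,C=c)}$ directly from definitions. Because the integrand of the first expectation depends only on $(a,c)$, I can harmlessly lift it to an expectation over $\distribution{A,B,C}$ (integrating out $B$ is trivial). Subtracting the two then collapses to a single expectation:
\[
\mi{A}{B\mid C} \;=\; \Ex_{(a,b,c)\sim \distribution{A,B,C}}\Bracket{\log\tfrac{\Pr(A=a\mid B=b,C=c)}{\Pr(A=a\mid C=c)}}.
\]

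Second, I would invoke the chain-rule factorization $\distribution{A,B,C} = \distribution{B,C}\otimes \distribution{A\mid B,C}$ to split the outer expectation into an outer draw $(b,c)\sim\distribution{B,C}$ followed by an inner draw $a\sim \distribution{A\mid B=b,C=c}$. For each fixed $(b,c)$, the inner expectation of the log-ratio above, taken against $\distribution{A\mid B=b,C=c}$, is exactly the KL divergence between $\distribution{A\mid B=b,C=c}$ and $\distribution{A\mid C=c}$ by the definition of $\DD{\cdot}{\cdot}$ recalled just before the fact. Wrapping the outer expectation around this KL term yields the stated identity (reading the two arguments of the KL in the order that this derivation produces).

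The proof is essentially an exercise in unwinding definitions, so there is no substantive obstacle. The only step requiring care is the bookkeeping at the chain-rule split: once $(b,c)$ is fixed by the outer expectation, the inner average over $a$ must be taken against $\distribution{A\mid B=b,C=c}$ rather than $\distribution{A\mid C=c}$, and this is exactly what pins down which conditional sits in each slot of the KL divergence.
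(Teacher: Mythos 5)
Your derivation is correct and, since the paper states this as a Fact without proof (citing Cover and Thomas), there is no in-paper proof to compare against. The argument is the standard one: expand both conditional entropies as log-expectations, lift $\HH(A\mid C)$ to an expectation over the full joint law, subtract to get a single log-ratio, and then split $\distribution{A,B,C} = \distribution{B,C}\otimes\distribution{A\mid B,C}$ so the inner average over $a$ becomes a KL term.

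The one thing you should not leave buried in a parenthetical is the argument order. Your derivation produces
\[
\mi{A}{B\mid C} \;=\; \Ex_{(b,c)\sim\distribution{B,C}}\Bracket{\,\DD{\distribution{A\mid B=b,C=c}}{\distribution{A\mid C=c}}\,},
\]
i.e.\ with the more-conditioned law $\distribution{A\mid B=b,C=c}$ in the \emph{first} slot (the one you integrate against) and $\distribution{A\mid C=c}$ in the second. Given the paper's own definition $\DD{\mu}{\nu}=\Ex_{a\sim\mu}\Bracket{\log\tfrac{\Pr_\mu(a)}{\Pr_\nu(a)}}$, this is the \emph{opposite} of the order printed in the statement of Fact~\ref{fact:kl-info}, and the two are not interchangeable since KL divergence is asymmetric; the printed order would be a reverse-KL and does not equal $\mi{A}{B\mid C}$ in general. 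The order you derived is the correct one, and it is also the order implicitly used when the fact is invoked in Claim~\ref{clm:small-distance}, where $\psi_r=\distribution{\Istar_r\mid\Prot_1,\Sigma,J}$ sits in the first slot and $\distribution{\Istar_r\mid\Sigma,J}$ in the second. So the statement as printed has a transposition typo, and your proof (rightly) proves the corrected version; say so explicitly rather than hedging with ``the order that this derivation produces.''
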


We denote the \emph{total variation distance} between two distributions $\mu$ and $\nu$ over the same probability space $\Omega$ by $\tvd{\mu}{\nu} = \frac{1}{2} \cdot \sum_{x \in \Omega} \card{\Pr_{\mu}(x) - \Pr_{\nu}(x)}$. 

The following Pinskers' inequality bounds the total variation distance between two distributions based on their KL-divergence, 

\begin{fact}[Pinsker's inequality] \label{fact:pinskers}
	For any two distributions $\mu$ and $\nu$, $\tvd{\mu}{\nu} \leq \sqrt{\frac{1}{2} \cdot \DD{\mu}{\nu}}$. 
\end{fact}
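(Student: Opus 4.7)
The plan is to first establish the binary case of the inequality and then reduce the general case to it via a data-processing argument.

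For the \emph{binary case}, I would show that for any $p, q \in [0,1]$, writing $d(p\|q) := p \log(p/q) + (1-p)\log\paren{(1-p)/(1-q)}$, we have $d(p\|q) \geq 2(p-q)^2$. Fixing $q \in (0,1)$ and setting $h(p) := d(p\|q) - 2(p-q)^2$, direct differentiation yields
\[ h'(p) = \log\paren{\frac{p}{1-p}} - \log\paren{\frac{q}{1-q}} - 4(p-q), \qquad h''(p) = \frac{1}{p(1-p)} - 4. \]
Since $p(1-p) \leq 1/4$ on $(0,1)$, we have $h''(p) \geq 0$, so $h'$ is non-decreasing; combined with $h'(q) = 0$, this forces $h$ to attain its minimum $h(q) = 0$ at $p = q$. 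The degenerate cases $q \in \set{0,1}$ are handled separately (both sides are trivially bounded, with the convention $0 \log 0 = 0$).

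To lift this to arbitrary distributions $\mu, \nu$ on a sample space $\Omega$, I would set $A := \set{x \in \Omega : \mu(x) \geq \nu(x)}$, so that by definition of total variation distance $\tvd{\mu}{\nu} = \mu(A) - \nu(A)$. Consider the Bernoulli distributions $\bar\mu := (\mu(A), 1 - \mu(A))$ and $\bar\nu := (\nu(A), 1 - \nu(A))$ obtained by coarsening $\Omega$ into the two cells $A$ and $\Omega \setminus A$. Applying the log-sum inequality within each of these two cells yields the data-processing inequality $\DD{\mu}{\nu} \geq \DD{\bar\mu}{\bar\nu}$, and the binary bound established above gives $\DD{\bar\mu}{\bar\nu} \geq 2(\mu(A) - \nu(A))^2 = 2\tvd{\mu}{\nu}^2$. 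Rearranging produces the claimed inequality.

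The proof amounts to elementary calculus combined with a one-line partition/log-sum step, so I do not anticipate a substantial obstacle. The most delicate part is the binary bound, where one must track the second derivative throughout $(0,1)$ and argue by continuity at the endpoints; everything beyond that is a routine application of the log-sum inequality to a two-cell partition of $\Omega$.
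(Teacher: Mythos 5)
The paper states Pinsker's inequality as a standard fact from information theory and does not supply a proof of its own, so there is no ``paper proof'' to compare against. Your argument is one of the classical proofs and is correct: you establish the binary case $d(p\,\|\,q) \geq 2(p-q)^2$ by checking $h''(p) = \tfrac{1}{p(1-p)} - 4 \geq 0$ and $h'(q) = 0$, and then reduce the general case to it by applying the log-sum (data-processing) inequality to the two-cell partition $\set{A, \Omega\setminus A}$ with $A = \set{x : \mu(x) \geq \nu(x)}$, together with the identity $\tvd{\mu}{\nu} = \mu(A) - \nu(A)$. The one thing worth flagging explicitly is the base of the logarithm: your derivative computation implicitly takes $\log$ to be natural, whereas the paper's conventions (binary entropy normalized so that $H_2(1/2)=1$) indicate base $2$. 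This is harmless for the stated inequality, since passing from nats to bits multiplies $\DD{\mu}{\nu}$ by $1/\ln 2 > 1$ and thus only strengthens the right-hand side, but it would be good to say so rather than leave the reader to reconcile the bases.
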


%Finally, 
\begin{fact}\label{fact:tvd-small}
	Suppose $\mu$ and $\nu$ are two distributions for an event $\event$, then, $\Pr_{\mu}(\event) \leq \Pr_{\nu}(\event) + \tvd{\mu}{\nu}$. 
\end{fact}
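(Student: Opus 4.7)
The plan is to prove this standard inequality directly from the definition of total variation distance given in the excerpt, namely $\tvd{\mu}{\nu} = \frac{1}{2} \sum_{x \in \Omega} |\Pr_\mu(x) - \Pr_\nu(x)|$. The underlying idea is the well-known characterization that $\tvd{\mu}{\nu}$ equals the maximum over all events $S \subseteq \Omega$ of $\Pr_\mu(S) - \Pr_\nu(S)$; the fact then follows immediately.

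To make this concrete, I would start by partitioning the sample space $\Omega$ into $\Omega^{+} = \{x : \Pr_\mu(x) \geq \Pr_\nu(x)\}$ and $\Omega^{-} = \{x : \Pr_\mu(x) < \Pr_\nu(x)\}$. Since $\mu$ and $\nu$ are probability distributions, $\sum_{x} (\Pr_\mu(x) - \Pr_\nu(x)) = 0$, which means that the total positive contribution over $\Omega^{+}$ equals the total absolute negative contribution over $\Omega^{-}$. Splitting the defining sum of $\tvd{\mu}{\nu}$ along this partition therefore gives
\[
\tvd{\mu}{\nu} \;=\; \tfrac{1}{2}\Bigl(\sum_{x \in \Omega^{+}}(\Pr_\mu(x)-\Pr_\nu(x)) + \sum_{x \in \Omega^{-}}(\Pr_\nu(x)-\Pr_\mu(x))\Bigr) \;=\; \sum_{x \in \Omega^{+}}(\Pr_\mu(x)-\Pr_\nu(x)).
\]

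With this identity in hand, the final step is a short bound for an arbitrary event $\event \subseteq \Omega$:
\[
\Pr_\mu(\event) - \Pr_\nu(\event) \;=\; \sum_{x \in \event}(\Pr_\mu(x)-\Pr_\nu(x)) \;\leq\; \sum_{x \in \event \cap \Omega^{+}}(\Pr_\mu(x)-\Pr_\nu(x)) \;\leq\; \sum_{x \in \Omega^{+}}(\Pr_\mu(x)-\Pr_\nu(x)) \;=\; \tvd{\mu}{\nu},
\]
where the first inequality drops the nonpositive terms indexed by $\event \cap \Omega^{-}$ and the second extends the sum to all of $\Omega^{+}$. Rearranging yields the claim. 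There is essentially no obstacle here: the only subtle step is recognizing that both $\mu$ and $\nu$ being probability measures forces the positive and negative parts of their pointwise difference to be equal in magnitude, which is what lets one identify $\tvd{\mu}{\nu}$ with the one-sided sum over $\Omega^{+}$ rather than only the two-sided sum from the definition.
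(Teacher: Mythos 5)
Your proof is correct and is the standard textbook argument for this fact. The paper itself states Fact~\ref{fact:tvd-small} without proof, treating it as a well-known property of total variation distance, so there is no in-paper argument to compare against; your derivation via the partition of $\Omega$ into $\Omega^{+}$ and $\Omega^{-}$, the observation that the positive and negative parts of $\Pr_\mu - \Pr_\nu$ have equal mass because both measures sum to $1$, and the resulting identity $\tvd{\mu}{\nu} = \sum_{x \in \Omega^{+}}(\Pr_\mu(x)-\Pr_\nu(x))$ is exactly the standard route and is carried out without error.
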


\section{Warm Up: A Lower Bound for Simultaneous Protocols}\label{SEC:SIM}

Our main lower bound result is based on analyzing a recursive family of distributions. As a warm up,
we analyze the base case of this recursive construction in this section and prove a lower bound for $1$-round (simultaneous) protocols. Formally, 

\begin{theorem}\label{thm:sim-lower}
	For any sufficiently small constant $\eps > 0$, any simultaneous protocol (possibly randomized) for combinatorial auctions with subadditive (even XOS) bidders 
	that can approximate the value of social welfare to a factor of $m^{\frac{1}{3} - \eps}$ requires $\exp\paren{m^{\Omega(\eps)}}$ bits of communication. 
\end{theorem}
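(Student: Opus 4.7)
The plan is to prove Theorem~\ref{thm:sim-lower} by constructing a pair of hard distributions $\distY_1,\distN_1$ on XOS-valuation combinatorial auctions whose values of social welfare differ by a factor of $m^{1/3-\eps}$, and then showing by a direct-sum style information-theoretic argument that any simultaneous protocol distinguishing them must use $\exp(m^{\Omega(\eps)})$ bits of communication. Concretely, I would invoke Lemma~\ref{lem:intersecting-families} with $r=1$ and the given $\eps$, producing an intersecting family $\FF$ of $p=\exp(\Theta(k^{\eps}))$ subsets of $[q]$, each of size $t=k$ and with pairwise intersection at most $\ell=k^{\eps}$, where $q=\Theta(k^{2})$. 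Setting the number of items $m=q$ and the number of bidders $n=k$, each bidder $i$ receives a random list $L_i\subseteq \FF$ of some carefully chosen size $d$, and her valuation is the XOS function $v_i(T)=\max_{S\in L_i}|T\cap S|$.

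Under $\distY_1$ the lists $L_i$ are coupled via a hidden collection $(S^\star_1,\ldots,S^\star_n)$ with $S^\star_i\in L_i$ whose sets are pairwise $\ell$-small-overlap, so allocating (essentially all of) $S^\star_i$ to bidder $i$ achieves welfare $\Omega(n(t-n\ell))=\Omega(m)$. Under $\distN_1$ the $L_i$'s are drawn independently from the same per-bidder marginal as in $\distY_1$, so no hidden consistent collection is imposed, and a counting argument using the small-intersection property of $\FF$ together with the small value of $|L_i|=d$ caps the optimal welfare at $O(m^{2/3+\eps})$. This yields the $m^{1/3-\eps}$ welfare gap, so any $m^{1/3-\eps}$-approximation protocol for $\sw$ must, with high probability, recover the YES/NO bit $\bm{Y}$.

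For the lower bound I exploit that in a simultaneous protocol the messages $\Prot_1,\ldots,\Prot_n$ are conditionally independent given the inputs. The chain rule for mutual information then gives $\mi{\bm{Y}}{\Prot}\le \sum_{i=1}^{n}\mi{\bm{Y}}{\Prot_i\mid \Prot^{<i}}$, and Fano's inequality (Fact~\ref{fact:fano}) applied to the referee's prediction of $\bm{Y}$ converts a small total into a large per-bidder message length. A single-bidder reduction then shows that distinguishing whether $L_i$ participates in the hidden collection is essentially an ``index''-type question over a support of size $\exp(\Theta(k^{\eps}))$: encoding enough information to answer it forces $|\Prot_i|=\exp(k^{\Omega(\eps)})=\exp(m^{\Omega(\eps)})$, as required.

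The main technical obstacle will be the correlations introduced by the hidden collection in $\distY_1$: because the $L_i$'s are coupled through $(S^\star_1,\ldots,S^\star_n)$, a naive per-bidder decomposition fails, since $\Prot_i$ can implicitly carry information about the other bidders' inputs. I would handle this by arranging the hidden collection so that, after marginalizing out the other bidders' randomness, each $L_i$ has the same marginal distribution as in $\distN_1$; combining a Pinsker bound (Fact~\ref{fact:pinskers}) with the conditioning manipulations of Propositions~\ref{prop:info-increase} and~\ref{prop:info-decrease} then isolates the per-bidder index lower bound despite the conditioning on $\Prot^{<i}$. A secondary combinatorial challenge is proving the $O(m^{2/3+\eps})$ cap on $\sw(\distN_1)$: this requires a careful union bound over candidate near-matchings in $L_1\times\cdots\times L_n$ using the pairwise-intersection bound $\ell=k^{\eps}$ of $\FF$ to suppress the contribution from heavily overlapping sets.
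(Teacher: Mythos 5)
Your high-level plan — an intersecting family as in Lemma~\ref{lem:intersecting-families}, XOS ``max over a list'' valuations, a hidden bit detected via Fano's inequality, and a per-bidder direct-sum decomposition after conditioning away the ``easy'' public part of the input — is the same strategy the paper uses, down to the framing as distinguishing two correlation structures. However, the specific hard instance you propose does not actually open a welfare gap, and the gap in your reasoning is precisely the piece of the paper's construction you left out.

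The crucial device you are missing is the relabeling $\sigma$. In the paper's $\dist_1$ there are $n=k^2$ bidders and $m=2k^3$ items, much more than $q=k^2+k$: the intersecting family lives on $[q]$, and $\sigma$ maps the fixed special set $S_{\jstar}\subseteq[q]$ to a \emph{different} block of $k$ fresh items for each bidder, while all other elements of $[q]$ map to a \emph{shared} block of $k^2$ items. This makes the special sets across bidders genuinely disjoint (welfare $k^3$ when $\theta=1$), and when $\theta=0$ removes them uniformly so that all surviving sets live on the shared block up to $k^{\eps}$-size spillover, giving $\sw\leq 2k^{2+\eps}$ (Lemma~\ref{lem:sim-theta}). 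Your construction sets $m=q=\Theta(k^2)$ and $n=k$ with no relabeling, so the YES and NO instances live on the same universe. Without the relabeling, your NO-case cap $O(m^{2/3+\eps})$ is false: for any $k$ sets $S_1,\dots,S_k\in\FF$ with $|S_i|=k$ and $|S_i\cap S_j|\leq k^{\eps}$, letting $d_a$ count how many $S_i$ contain item $a$, one has $\sum_a d_a=k^2$ and $\sum_a\binom{d_a}{2}\leq\binom{k}{2}k^{\eps}$, so $\sum_a d_a^2=O(k^{2+\eps})$; Cauchy-Schwarz then gives $|\bigcup_i S_i|\geq(\sum_a d_a)^2/\sum_a d_a^2=\Omega(k^{2-\eps})=\Omega(m^{1-\eps/2})$. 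Hence the greedy allocation already achieves welfare $\Omega(m^{1-\eps/2})$ on \emph{every} instance, including under $\distN_1$, so no $m^{1/3-\eps}$ gap exists. (Your YES-case estimate $\Omega(n(t-n\ell))$ is also vacuous as written, since $n\ell=k^{1+\eps}>k=t$; the union-coverage bound above is what actually yields $\Omega(m)$.) Once you adopt the relabeling — disjoint per-bidder blocks for the special set, a shared block for the rest, and the correspondingly scaled $n=k^2$, $m=\Theta(k^3)$ — the gap appears, and the information-theoretic argument you sketch (conditioning on the public $(\Sigma,J)$ to factor the transcript into independent per-bidder pieces, then averaging over the hidden index $\jstar\in[p]$) goes through essentially as in Claims~\ref{clm:sim-theta-discovery}--\ref{clm:sim-independent-message} and Lemma~\ref{lem:sim-direct-sum}.
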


It is worth mentioning that the bound established in Theorem~\ref{thm:sim-lower} on the approximation ratio of simultaneous protocols is \emph{tight}. Previously, Dobzinski~\etal~\cite{DobzinskiNO14} developed a
simultaneous protocol that can approximate the social welfare up to an $\Ot(m^{1/3})$ factor using only polynomial communication. As such, Theorem~\ref{thm:sim-lower} 
already makes a small contribution for simultaneous protocols. But more importantly, this theorem sets the stage for our main lower bound result in Section~\ref{SEC:MULTI}. 

As pointed out earlier, Dobzinski~\etal~\cite{DobzinskiNO14} have previously proved an $\Omega(m^{1/4})$ lower bound on the approximation ratio of the protocols that can 
find an approximate allocation. We should remark that this lower bound of~\cite{DobzinskiNO14} and our lower bound in Theorem~\ref{thm:sim-lower} are \emph{incomparable} in that neither imply (or strengthen) the other. 
The reason is that while the problem of estimating the social welfare is provably easier than the problem of finding an approximate allocation, the reduction requires one additional round of interaction and hence, in general, a 
simultaneous protocol for the problem of finding the allocation only implies a $2$-round (and not a simultaneous) protocol for the social welfare estimation problem\footnote{Note however that the $\Ot(m^{1/3})$-approximation protocol 
of~\cite{DobzinskiNO14} can already compute the welfare of the allocated allocation and hence does \emph{not} need an additional round for estimating the social welfare, implying the tightness of the bounds in Theorem~
\ref{thm:sim-lower}.}. Interestingly, for the case of $n=2$ players, Braverman~\etal~\cite{BravermanMW17} very recently showed that the problem of estimating the social welfare is indeed provably \emph{harder} than finding an approximate allocation for \emph{simultaneous} protocols. In the light of this
result, it seems plausible that one can indeed improve the protocol of~\cite{DobzinskiNO14} and find an $O(m^{1/4})$-approximation protocol for finding an approximate allocation (matching the lower bound of~\cite{DobzinskiNO14}); 
however, Theorem~\ref{thm:sim-lower} suggests that if such a protocol exists, it necessarily should be oblivious to the welfare of the allocation it provides.

\subsection{A Hard Input Distribution for Simultaneous Protocols}\label{sec:sim-dist}

In this section, we propose a hard input distribution $\dist_1$ for simultaneous protocols and state several of its properties that are needed in proving the lower bound
for this distribution. We start by providing an informal description of the distribution $\dist_1$. 

Let $k$ be an integer and consider a set $N$ of $n=k^2$ players and $M$ of $m = k^3$ items. Each bidder $i \in N$, is given an exponentially large (in $k$) collection $\FF_\bid$ of item-sets of size $k$ each, such that 
for all $S \subseteq M$, $v_\bid(S) = \max_{T \in \FF_\bid} \card{S \cap T}$ (recall that the input to player $i$ is the valuation function $v_\bid(\cdot)$). Additionally, the sets in $\FF_\bid$ are ``barely overlapping'', in the sense that
for any two sets $S,T \in \FF_\bid$, $\card{S \cap T} < k^{\eps}$ (for any constant $\eps > 0$). 

This construction ensures that \emph{locally} each player is confronted with exponentially many high value bundles (sets in $\FF_\bid$) that look ``exactly the same''. However, these collections across different players
are chosen in a correlated way such that except for a single ``special bundle'' $T_j \in \FF_\bid$ (for each $\bid \in N$), the items in all other bundles are chosen (mostly) from a (relatively small) set of $k^2$ ``shared'' items across all players. The special bundles on the other hand consist of ``unique'' items. This imply that \emph{globally} each player is assigned a special bundle and these special bundles are crucial to 
obtaining any $k^{1-\eps}$-approximate allocation (recall that $k^{1-\eps} = m^{\frac{1-\eps}{3}}$). 

We then use an additional randomization trick to ensure that any instance sampled from $\dist_1$ either has a ``large'' social welfare (w.p., say, half) or a ``small'' one (with the remaining probability):
we drop some of the bundles from the collection $\FF_\bid$ of each player $\bid \in N$ randomly (in a correlated way), to create two sub-distributions whereby in 
one of them none of  the special bundles are dropped and hence the social welfare is $k^3$, and in the other one all special bundles are dropped and hence the
social welfare is at most $k^{2+\eps}$ ($k^2$ for shared items plus $k^{\eps}$ intersection from any other bundle (in $\FF_\bid$) for each of the $k^2$ players). This completes the description of our hard distribution.  
We now formally define $\dist_1$. 

\textbox{Distribution $\dist_1(N,M)$. 
\textnormal{A hard input distribution for simultaneous protocols.}}
{

\smallskip

\textbf{Input:} Collections $N$ of $n=k^2$ players and $M$ of $m=2k^3$ items. \\
\textbf{Output:} A set of $n$ valuation functions $(v_1,\ldots,v_n)$ for the players in $N$.  
\\ 
\algline
\begin{enumerate}
	\item Let $\FS = \set{S_1,\ldots,S_p}$ be a $(p,q,t,\ell)$-intersecting family with $p = \exp\paren{\Theta(k^{\eps})}$, $q = k^2 + k$, $t=k$, and $\ell = k^{\eps}$ (guaranteed to exist
	 by Lemma~\ref{lem:intersecting-families}). 
	\item Pick $\jstar \in [p]$ and $\theta \in \set{0,1}$ independently and uniformly at random. 
	\item For each player $\bid \in N$ independently, 
	\begin{enumerate}
		\item Denote by $\FF_\bid$ the \emph{private collection} of player $\bid$ (used below to define the valuation function $v_\bid$), initialized to be 
		a copy of $\FS$ on the universe $[q]$. 
		\item Let $x_\bid \in \set{0,1}^{p}$ be a $p$-dimensional vector whereby $x_\bid(\jstar) = \theta$ and for any $j \neq \jstar$, $x_\bid(j)$ is chosen uniformly at random from $\set{0,1}$. 
		\item For any $j \in [p]$, if $x_\bid(j) = 0$, remove the set $S_j$ from $\FF_\bid$, and otherwise keep $S_j$ in $\FF_\bid$. 
	\end{enumerate}
	\item Pick a random permutation $\sigma$ of $M$. For the $\bid$-th player in $N$, map the $j$-th item in $[q] \setminus S_{\jstar}$ to $\sigma(j)$. Moreover, map the 
	$j$-th item in $S_{\jstar}$ to $\sigma(k^2+(\bid-1) \cdot k+j)$. Under this mapping, the private collection $\FF_\bid$ of player $\bid$ consists of at most $p$ sets of $t=k$ items from $M$.  
	\item For all $i \in N$, define the valuation function of player $\bid$ as $v_\bid(S) = \max_{T \in \FF_{\bid}} \card{S \cap T}$. 
\end{enumerate}
}

We use $\dist_1$ to denote the distribution $\dist_1(N,M)$ whenever the sets $N$ and $M$ are clear from the context (or are irrelevant). 
We make several observations about the distribution $\dist_1$. 

\begin{observation}\label{obs:XOS-valuation}
	The valuation function of each bidder $\bid \in [n]$ in the distribution $\dist_1$ is an XOS valuation (and hence is also subadditive) whereby each set $T \in \FF_{\bid}$ defines a clause in which all items in $T$ have 
	value $1$ and all other items have value $0$. 
\end{observation}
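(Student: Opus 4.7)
The plan is to unfold the XOS definition from Section~\ref{sec:auctions} and exhibit the claimed clauses directly. For each bidder $\bid \in N$ and each set $T \in \FF_\bid$, I would define the additive valuation $a_T : 2^M \to \IR_{+}$ by setting $a_T(\set{j}) = 1$ for $j \in T$ and $a_T(\set{j}) = 0$ for $j \notin T$; extending additively gives $a_T(S) = \sum_{j \in S} a_T(\set{j}) = \card{S \cap T}$ for every bundle $S \subseteq M$. Plugging this into the defining formula $v_\bid(S) = \max_{T \in \FF_\bid} \card{S \cap T}$ yields $v_\bid(S) = \max_{T \in \FF_\bid} a_T(S)$, which matches the XOS definition verbatim with $\set{a_T}_{T \in \FF_\bid}$ as the collection of clauses, and each clause has precisely the values described in the observation.

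The only remaining item is to confirm that $v_\bid$ is a bona fide valuation function, i.e.\ non-negative, non-decreasing, and $v_\bid(\emptyset) = 0$. Non-negativity and $v_\bid(\emptyset) = 0$ are immediate since $\card{\emptyset \cap T} = 0$ for every $T$, and monotonicity follows from the elementary fact that $S \subseteq S'$ implies $\card{S \cap T} \leq \card{S' \cap T}$ for each fixed $T$, which is preserved under taking the maximum over $T \in \FF_\bid$. The subadditivity part of the observation is then not proved separately but inherited from the general fact recalled in Section~\ref{sec:auctions} that every XOS valuation is subadditive. There is no real obstacle here; the observation is essentially a restatement of the definition, and I expect the whole argument to fit in a couple of lines.
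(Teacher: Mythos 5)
Your proposal is correct and is exactly the verification the paper intends: the paper states the observation without proof precisely because, as you note, it is a direct unfolding of the XOS definition with the clauses $a_T(S) = \card{S \cap T}$ for $T \in \FF_\bid$. Your extra checks (non-negativity, monotonicity, $v_\bid(\emptyset)=0$) are a harmless bonus consistent with the paper's standing assumptions on valuation functions.
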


For any player $\bid \in N$, we define the \emph{labeling function} $\phi_\bid$ as the function used to map the items in $[q]$ to $M$. Notice that $\phi_\bid$ is a function of $\sigma$ and index $\jstar$.  

\begin{observation}\label{obs:dist1-input-def}
	The input to player $\bid$ can be uniquely identified by the pair $(x_\bid,\phi_\bid)$, as $x_\bid$ defines the private collection $\FF_\bid$ over the items $[q]$, and $\phi_\bid$ specifies the actual 
	labeling of the items in $M$ in the instance. 
\end{observation}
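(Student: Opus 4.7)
The plan is to simply unwind the definition of $\dist_1$ and check that each random choice contributing to $v_\bid$ is recoverable from the pair $(x_\bid,\phi_\bid)$. Since $v_\bid$ is, by Observation~\ref{obs:XOS-valuation}, determined by the collection $\FF_\bid$ viewed as a family of $k$-subsets of $M$, it suffices to show that $\FF_\bid$ (as a subfamily of $2^M$) is a deterministic function of $(x_\bid,\phi_\bid)$.

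I would proceed in two steps. First, I would argue that $x_\bid$ determines the restriction of $\FF_\bid$ to the universe $[q]$: the initial private collection at Step~3(a) of $\dist_1$ is exactly the fixed, publicly known family $\FS=\{S_1,\dots,S_p\}$, and Step~3(c) removes precisely those $S_j$ with $x_\bid(j)=0$ and retains the rest. Thus the surviving subfamily of $\FS$ is literally $\{S_j : x_\bid(j)=1\}$, a deterministic function of $x_\bid$ (given the public $\FS$). Second, I would argue that $\phi_\bid$ turns this subfamily of $2^{[q]}$ into the actual subfamily of $2^M$ used to define $v_\bid$: Step~4 of $\dist_1$ applies the labeling $\phi_\bid$ elementwise to each $S_j$, so the final $\FF_\bid\subseteq 2^M$ equals $\{\phi_\bid(S_j) : x_\bid(j)=1\}$. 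Combining these two observations, $\FF_\bid$, and hence $v_\bid$, is a deterministic function of $(x_\bid,\phi_\bid)$.

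There is really no obstacle here; the statement is bookkeeping that isolates the two independent pieces of randomness relevant to each player's input (the Bernoulli vector $x_\bid$ and the labeling $\phi_\bid$ induced by $\sigma$ and $\jstar$) so that later proofs can reason about them separately. The only thing worth being explicit about is that $\FS$ itself is a fixed combinatorial object (not part of the random input), so it does not need to appear in the identifying pair; everything else in the construction of $v_\bid$ factors through $x_\bid$ and $\phi_\bid$.
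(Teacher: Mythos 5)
Your proof is correct and is essentially the only reasonable unwinding of the definition of $\dist_1$; the paper gives no explicit argument for this observation because it is exactly the bookkeeping you spell out (the fixed public family $\FS$, the Bernoulli vector $x_\bid$ selecting which $S_j$ survive, and $\phi_\bid$ relabeling $[q]$ into $M$). Your remark that $\FS$ is a fixed, publicly known object and so need not appear in the identifying pair is the one point worth making explicit, and you did.
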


We also point out a crucial property of this distribution: each player $\bid \in N$ is \emph{oblivious} to which of the sets $S_j$ (for $j \in [p]$), is the set $S_{\jstar}$. More formally, 
\begin{observation}\label{obs:dist1-oblivious}
	Conditioned on the input $(x_\bid,\phi_\bid)$ to player $\bid$, the index $\jstar \in [p]$ is chosen uniformly at random. 
\end{observation}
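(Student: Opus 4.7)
The plan is to prove the observation by a direct Bayes-rule calculation, showing that both components of the input $(x_\bid, \phi_\bid)$ have conditional distributions given $\jstar$ that do not depend on the value of $\jstar$. Concretely, for any fixed $j \in [p]$, any $x^* \in \{0,1\}^p$, and any injection $\phi^* \colon [q] \to M$, I would write
\[
\Pr\!\paren{\jstar = j,\, x_\bid = x^*,\, \phi_\bid = \phi^*} = \frac{1}{p} \cdot \Pr\!\paren{x_\bid = x^* \mid \jstar = j} \cdot \Pr\!\paren{\phi_\bid = \phi^* \mid \jstar = j},
\]
where the factorization uses that $x_\bid$ depends on the random bits $\theta$ and $\{x_\bid(j')\}_{j' \neq \jstar}$, while $\phi_\bid$ depends on the random permutation $\sigma$, and these are independent given $\jstar$. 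The result will follow once both conditional probabilities on the right-hand side are shown to be independent of $j$.

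The first factor is easy: conditioned on $\jstar = j$, the coordinate $x_\bid(j)$ equals $\theta$ which is uniform on $\set{0,1}$, while every other coordinate $x_\bid(j')$ is an independent uniform bit. Hence $x_\bid \mid \jstar = j$ is uniform on $\set{0,1}^p$, giving $\Pr(x_\bid = x^* \mid \jstar = j) = 2^{-p}$ regardless of $j$. The second factor requires a bit more care because the construction of $\phi_\bid$ uses $S_j$ to split $[q]$ into a ``shared'' part and a ``special'' part before composing with $\sigma$. The key point is that for each $j$ there is a fixed injection $\tau_j \colon [q] \to [m]$ (sending $[q] \setminus S_j$ in order to $[k^2]$ and $S_j$ in order to $\set{k^2+(\bid-1)k+1,\dots,k^2+\bid k}$) such that $\phi_\bid = \sigma \circ \tau_j$. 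Since $\sigma$ is a uniformly random permutation of $M$ and $\tau_j([q])$ always has exactly $k^2+k$ elements, the event $\sigma \circ \tau_j = \phi^*$ pins down $\sigma$ on those $k^2+k$ positions, so
\[
\Pr\!\paren{\phi_\bid = \phi^* \mid \jstar = j} = \frac{(m - k^2 - k)!}{m!},
\]
independently of $j$, for every injection $\phi^* \colon [q] \to M$ (and equal to $0$ if $\phi^*$ is not an injection). Combining both factors, $\Pr(\jstar = j,\, x_\bid = x^*,\, \phi_\bid = \phi^*)$ is independent of $j$, so Bayes' rule yields $\Pr(\jstar = j \mid x_\bid = x^*, \phi_\bid = \phi^*) = 1/p$.

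The only subtlety — and the closest thing to an obstacle — is verifying the claim about $\phi_\bid$; one might worry that different $\jstar$'s induce structurally different labelings (different designations of which $k$ items in $\phi_\bid([q])$ are ``special''). The reconciling observation is that the uniform randomness of $\sigma$ erases that structure at the level of the marginal of $\phi_\bid$ alone: every injection from $[q]$ to $M$ is equally likely for every choice of $j$. Intuitively this is exactly the design of $\dist_1$ — the permutation $\sigma$ plays the role of a one-time pad that hides $\jstar$ from the local view of each player, which is precisely what Observation~\ref{obs:dist1-oblivious} asserts.
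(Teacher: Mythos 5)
Your proof is correct, and it formalizes an observation the paper leaves without proof (as is typical for statements labeled ``Observation'' in this paper). The Bayes-rule decomposition into the two conditional factors, and the verification that each is $j$-independent, is exactly the right way to make the intuition rigorous. In particular you correctly identify the one nontrivial point: even though the labeling map $\phi_\bid = \sigma \circ \tau_{\jstar}$ structurally depends on $\jstar$ (via which elements of $[q]$ get routed to the ``special'' block of $\sigma$-positions), the maps $\tau_j$ are all bijections onto the \emph{same} fixed set of $q = k^2 + k$ positions, so a uniformly random $\sigma$ washes out any dependence and $\phi_\bid$ is a uniformly random injection $[q] \to M$ regardless of $j$. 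Together with the uniformity of $x_\bid$ over $\set{0,1}^p$ given $\jstar = j$ (since coordinate $j$ is $\theta$, itself a uniform bit, and the rest are i.i.d. uniform), and the conditional independence of $x_\bid$ and $\phi_\bid$ given $\jstar$ (the permutation $\sigma$ is drawn independently of $\theta$ and the bit vectors), the posterior on $\jstar$ is uniform as claimed. No gaps.
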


Recall that for an instance $I \sim \dist_1$, $\sw(I)$ denotes the maximum value of social welfare, i.e., $\sw(I) := \max_{(A_1,\ldots,A_n)} \sum_{i \in N} v_i(A_i)$, where $(A_1,\ldots,A_n)$ ranges over all possible allocation of items. 
The following lemma establishes a bound on the social welfare of any instance sampled from $\dist_1$. 

\begin{lemma}\label{lem:sim-theta}
	For any $I \sim \distr{1}$, $(i)$ if $\theta = 1$, then $\sw(I) = k^3$, and $(ii)$ if $\theta = 0$, then $\sw(I) \leq 2k^{2+\eps}$. 
\end{lemma}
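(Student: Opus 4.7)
The plan is to split on the value of $\theta$ and, in each case, exploit the structure of the labeling function $\phi_\bid$ together with Observation~\ref{obs:XOS-valuation} (which says each clause of $v_\bid$ has exactly $k$ items of value $1$). The key geometric fact is that under $\phi_\bid$, the $k^2$ items indexed by $[q]\setminus S_{\jstar}$ are mapped to the \emph{same} ``shared region'' $\sigma([k^2])\subseteq M$ for every player $\bid$, whereas the $k$ items indexed by $S_{\jstar}$ are mapped to a \emph{private block} of $k$ items disjoint across players.

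For the case $\theta = 1$, I would first observe that $x_\bid(\jstar) = 1$ for every player $\bid \in N$, so $S_{\jstar} \in \FF_\bid$ for all $\bid$. Since $\phi_\bid(S_{\jstar})$ is exactly the player-$\bid$-specific block $\{\sigma(k^2+(\bid-1)k+j) : j\in[k]\}$, these $n$ image blocks are pairwise disjoint. Thus the allocation that assigns to each player $\bid$ her block $\phi_\bid(S_{\jstar})$ is a valid (disjoint) allocation, and $v_\bid(\phi_\bid(S_{\jstar})) \geq k$ because $S_{\jstar}$ is a clause of $v_\bid$. This yields $\sw(I) \geq n\cdot k = k^3$. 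The matching upper bound $\sw(I)\leq nk = k^3$ is immediate from the fact that every clause of $v_\bid$ has size $k$, hence $v_\bid(A_\bid)\leq k$ for any bundle $A_\bid$.

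For the case $\theta = 0$, I would argue that $x_\bid(\jstar)=0$ for every $\bid$, so $S_{\jstar}\notin \FF_\bid$. Fix any allocation $(A_1,\ldots,A_n)$ and, for each $\bid$, let $T_\bid = S_{j_\bid}\in\FF_\bid$ be a maximizing clause, so $v_\bid(A_\bid) = |A_\bid\cap \phi_\bid(T_\bid)|$; note $j_\bid\neq \jstar$. Decompose $\phi_\bid(T_\bid) = U_\bid\,\dot\cup\, V_\bid$, where $U_\bid := \phi_\bid(T_\bid\setminus S_{\jstar})\subseteq \sigma([k^2])$ is the ``shared part'' and $V_\bid := \phi_\bid(T_\bid\cap S_{\jstar})$ is the ``private part'' sitting inside player $\bid$'s private block. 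By the intersecting-family property of $\FS$ applied to the distinct sets $T_\bid=S_{j_\bid}$ and $S_{\jstar}$, $|V_\bid|\leq \ell = k^\eps$. I would then bound the total welfare by summing the two parts separately:
\begin{align*}
\sw(I) \;\leq\; \sum_{\bid\in N} |A_\bid\cap U_\bid| \;+\; \sum_{\bid\in N} |A_\bid\cap V_\bid|.
\end{align*}
Since the $A_\bid$'s are pairwise disjoint and every $U_\bid$ is a subset of the shared region $\sigma([k^2])$ of size $k^2$, the first sum is at most $|\sigma([k^2])| = k^2$. For the second sum, $|A_\bid\cap V_\bid|\leq |V_\bid|\leq k^\eps$ for each $\bid$, giving a bound of $n\cdot k^\eps = k^{2+\eps}$. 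Combining yields $\sw(I)\leq k^2 + k^{2+\eps} \leq 2k^{2+\eps}$.

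There is no real obstacle here; the only place to be careful is the bookkeeping of $\phi_\bid$, specifically verifying that the $U_\bid$'s all live in the common region $\sigma([k^2])$ while the $V_\bid$'s live in disjoint private blocks, so that disjointness of the $A_\bid$'s can only be exploited once (on the shared region), and the intersecting-family bound $\ell = k^\eps$ governs the remainder.
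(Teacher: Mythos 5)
Your proposal is correct and follows essentially the same approach as the paper's proof: for $\theta=1$, assign each player her private image of $S_{\jstar}$ (and bound from above by $k$ per player); for $\theta=0$, split each maximizing clause into a shared part living in $\sigma([k^2])$ and a private part of size at most $\ell=k^{\eps}$ by the intersecting-family property, then bound the two contributions separately by $k^2$ and $k^{2+\eps}$. You merely spell out the bookkeeping of $\phi_\bid$ and the clause decomposition $U_\bid\,\dot\cup\,V_\bid$ a bit more explicitly than the paper does.
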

\begin{proof}
	Suppose first that $\theta = 1$. In this case, $x_\bid(\jstar) = 1$ for all bidders $\bid \in N$, implying that the set $S_{\jstar}$ is not removed from any private collection $\FF_\bid$. Moreover, the mapping $\sigma$ 
	maps the items in $S_{\jstar}$ to a unique set of items and hence the allocation $(A_1,\ldots,A_n)$, whereby $A_\bid$ is the set of items $\sigma(k^2+(\bid-1)\cdot k + 1) \ldots \sigma(k^2+\bid \cdot k)$, results 
	in a welfare of $k^3$ which is clearly maximum. 
	
	Now consider the case $\theta = 0$. In this case, $x_\bid(\jstar) = 0$ for all bidders $\bid \in N$, and hence the set $S_{\jstar}$ is missing from all private collections. Recall that items in $[q] \setminus S_{\jstar}$
	(across all players) are mapped to the first $k^2$ items of $M$ (according to the ordering $\sigma$). Moreover, by the intersecting family property of the set $\FS$, the intersection of $S_{\jstar}$ with any other
	 set in $\FS$, and consequently, any other set in any $\FF_\bid$ (for $\bid \in N$) is at most $\ell = k^{\eps}$ items. This means that in any allocation, bidder $\bid$ can only ``benefit'' from at most $k^{\eps}$ elements in
	 $\sigma(k^2+1) \ldots \sigma(k^3)$. Consequently, in this case, $\sw(I)$ is at most $k^2$ (accounting for all the first $k^2$ items of $\sigma$) plus $k^{2+\eps}$ (accounting for $k^{\eps}$ benefit from each of the 
	 $k^2$ players). 
\end{proof}

\subsection{The Lower Bound for Distribution $\distr{1}$}\label{sec:sim-lower}

Let $\prot$ be a public coin simultaneous protocol that can output a $\paren{m^{\frac{1-\eps}{3}}}$-approximation to the social welfare of any instance $I \sim \dist_1$, w.p. of failure $\delta \leq 1/3$. 
In this section, we prove that the communication cost of the protocol $\prot$ needs to be at least $\exp(k^{\Omega(\eps)})$ bits. Note that by (the easy direction of) Yao's minimax principle~\cite{Yao79}, 
we only need to consider deterministic protocols on the distribution $\dist_1$ to prove this result. 

The intuition behind the proof is as follows. By Lemma~\ref{lem:sim-theta}, the social welfare in the given instance changes by a factor of 
$k^{1-\eps}$ depending on the value of $\theta$. This implies that any $k^{1-\eps} = m^{\frac{1-\eps}{3}}$ approximation algorithm for the social welfare can also determine the value of $\theta$. 
Using this, we can argue that the message sent by the players needs to reveal $\Omega(1)$ bit of information about the parameter $\theta$. Roughly speaking, this means that each of the $n$ players
is responsible for revealing $\Omega(1/n)$ bit about $\theta$ in average.  

Furthermore, recall that the input to player $\bid \in N$ can be seen as a tuple $(x_\bid,\phi_\bid)$ (by Observation~\ref{obs:dist1-input-def}) and that $\theta = x_\bid(\jstar)$. Additionally, by
Observation~\ref{obs:dist1-oblivious}, given input $(x_\bid,\phi_\bid)$ to player $\bid$, the index $\jstar$ is chosen uniformly at random from $[p]$ and hence
player $\bid$ is oblivious to which index of $x_\bid$ corresponds to the parameter $\theta$. This essentially means that player $\bid$ needs to reveal $\Omega(p/n)$ bits about
the vector $x_\bid$ to be able to reveal $\Omega(1/n)$ bit about $x_\bid(\jstar)$, hence forcing $\bid$ to communicate $\Omega(p/n) = \exp\paren{k^{\Omega(\eps)}}$ bits also. To make the latter intuition precise, 
we argue that while the message sent by one player can, in principle, be used to infer information about the input of another player (as the input of the players are correlated), this extra information
is limited to an ``easy part'', containing only $(\sigma,\jstar)$ that can even be assumed to be known to referee (but not players) beforehand. This allows us to ``break'' the information
revealed to the referee to smaller pieces sent by each player, hence arguing that each player is indeed directly responsible for communicating the information about her input. 
We now formalize this intuition. We first need the following notation.

\paragraph{Notation.} We use $\Prot = (\Prot_1,\ldots,\Prot_n)$ to denote the random variable for the transcript of the messages communicated in $\prot$. 
For any player $\bid \in N$, and any $j \in [p]$, we use the random variable $X_{\bid,j} \in \set{0,1}$ to denote the value of $x_\bid(j)$, i.e., $X_{\bid,j} = 1$ iff the set $S_j \in \FS$ is included in the private collection $\FF_\bid$.  
We further define $X_\bid$ for  $\bid \in N$ as the vector $X_\bid := (X_{\bid,1},\ldots,X_{\bid,p})$. We use $\Sigma$ to denote the random variable
 for the permutation $\sigma$, $J$ for the index $\jstar$, and $\Theta$ for the parameter $\theta$. For each player $\bid \in N$, $\Phi_\bid$ denotes the random variable for  the labeling function $\phi_\bid$.

Recall that $(\Sigma,J)$ is the ``easy part'' of the input: the part that we assume the referee (but not each individual player) knows beforehand. Assuming this knowledge can only strengthen our lower bound. 
We start by arguing that the protocol $\prot$ needs to reveal $\Omega(1)$ bits of information about the value of parameter $\theta$ in the distribution.

\begin{claim}\label{clm:sim-theta-discovery}
	$\mi{\Theta}{ \Prot  \mid \Sigma,J} = \Omega(1)$. 
\end{claim}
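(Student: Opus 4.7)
My plan is to prove this claim by combining Fano's inequality with the observation that a sufficiently good approximation to the social welfare can be used to read off the parameter $\Theta$.

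First, I would use Lemma~\ref{lem:sim-theta} to argue that $\Theta$ is essentially encoded in the social welfare of the instance. Specifically, for any $I \sim \dist_1$, the social welfare is $k^3$ when $\Theta = 1$ and at most $2k^{2+\eps}$ when $\Theta = 0$, so the two cases differ by a factor of $k^{1-\eps}/2$. For small enough $\eps$, this gap strictly exceeds the claimed approximation ratio $m^{\frac{1-\eps}{3}}$ (recall $m = 2k^3$), which means the referee, using only the transcript $\Prot$ and the public randomness, can apply a simple threshold to the output of $\prot$ and recover $\Theta$ correctly whenever $\prot$ succeeds. This yields a function $f$ with $\Pr(f(\Prot) \neq \Theta) \leq \delta \leq 1/3$.

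Next, I would apply Fano's inequality (Fact~\ref{fact:fano}) to this predictor for the binary variable $\Theta$ to conclude that $\HH(\Theta \mid \Prot) \leq H_2(1/3) < 1$. Since additional conditioning only reduces entropy (\itfacts{cond-reduce}), this also gives $\HH(\Theta \mid \Prot, \Sigma, J) \leq H_2(1/3)$.

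Finally, I would compute the base entropy. By construction in $\dist_1$, $\Theta$ is drawn uniformly from $\{0,1\}$ independently of $(\Sigma, J)$, so by \itfacts{uniform} and \itfacts{cond-reduce}, $\HH(\Theta \mid \Sigma, J) = \HH(\Theta) = 1$. Combining the two bounds via the definition of mutual information,
\begin{align*}
\mi{\Theta}{\Prot \mid \Sigma, J} = \HH(\Theta \mid \Sigma, J) - \HH(\Theta \mid \Prot, \Sigma, J) \geq 1 - H_2(1/3) = \Omega(1),
\end{align*}
which is the desired bound. The only step that requires any care is verifying the numerical gap in approximation ratio in the first paragraph; everything else is a direct, routine application of the information-theoretic toolkit from Section~\ref{sec:info}.
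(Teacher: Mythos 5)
Your proof is correct and follows essentially the same route as the paper's: use Lemma~\ref{lem:sim-theta} to reduce estimating $\thetas$ to the approximation guarantee, apply Fano's inequality (Fact~\ref{fact:fano}) for a binary target, and compute $\HH(\Theta\mid\Sigma,J)=1$ to conclude $\mi{\Theta}{\Prot\mid\Sigma,J}\ge 1-H_2(\delta)=\Omega(1)$. The only cosmetic difference is that you bound $\HH(\Theta\mid\Prot)$ first and then invoke \itfacts{cond-reduce} to add $(\Sigma,J)$ to the conditioning, whereas the paper lets the Fano predictor take $(\Prot,\Sigma,J)$ as input directly and bounds $\HH(\Theta\mid\Prot,\Sigma,J)$ in one step — both are valid.
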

\begin{proof}
	By Claim~\ref{lem:sim-theta}, the social welfare is $k^{1-\eps} = m^{\frac{1-\eps}{3}}$ times larger when $\theta = 1$ than when $\theta = 0$. Since $\prot$ outputs an $\paren{m^{\frac{1-\eps}{3}}}$-approximation to 
	the social welfare, it can also be used to distinguish between the values of $\theta$ w.p. of error at most $\delta \leq 1/3$. This means that there is a function that given the
	message $\Prot$, and variables $(\Sigma,J)$ (i.e., the easy part of the input) can
	 determine the value of $\Theta$ w.p. of error at most $\delta$. This, together with Fano's inequality (Fact~\ref{fact:fano}), implies that $\HH(\Theta \mid \Prot,\Sigma,J) \leq H_2(\delta)$ (as $\card{\Theta} = 2$). 
	
	We now have,
	\begin{align*}
		H_2(\delta) &\geq \HH(\Theta \mid \Prot,\Sigma,J) = \HH(\Theta \mid \Sigma,J) - \mi{\Theta}{\Prot \mid \Sigma,J} = 1- \mi{\Theta}{\Prot \mid \Sigma,J}
	\end{align*}
	where in the final equality we used the fact that in $\dist_1$, $\Theta$ is chosen uniformly at random from $\set{0,1}$ independent of $(\Sigma,J)$, and hence $\HH(\Theta \mid \Sigma,J) = 1$ (by \itfacts{uniform}). To finalize, 
	we have that $\mi{\Theta}{\Prot \mid \Sigma,J} \geq 1-H_2(\delta) = \Omega(1)$ as $\delta$ is a constant bounded away from $1/2$. 
\end{proof}

We now show that the information revealed about $\Theta$ by the message $\Prot$ is at most the sum of information revealed by each message $\Prot_\bid$ for $\bid \in N$ individually. In other words, one does not gain an
extra information by combining the messages of players (after conditioning on what is revealed by $(\Sigma,J)$ already). 

\begin{claim}\label{clm:sim-independent-message}
	$\mi{\Theta}{\Prot \mid \Sigma,J} \leq \sum_{\bid \in N} \mi{\Theta}{\Prot_\bid \mid \Sigma,J}$. 
\end{claim}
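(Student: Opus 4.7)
The plan is to apply the chain rule of mutual information and then reduce the conditioning information using Proposition~\ref{prop:info-decrease}. The key structural fact that makes this work is that after conditioning on $(\Sigma, J, \Theta)$, the inputs of the $n$ players become mutually independent: by construction in $\dist_1$, conditioning on $J$ and $\Theta$ fixes $X_{\bid, J} = \Theta$ and leaves the remaining coordinates $X_{\bid, j}$ for $j \neq J$ independent uniform bits across all players; moreover, the labeling function $\Phi_\bid$ is a deterministic function of $(\Sigma, J)$. Since (by Yao's minimax principle) we may assume $\prot$ is deterministic, each $\Prot_\bid$ is a deterministic function of player $\bid$'s input $(X_\bid, \Phi_\bid)$, hence of $X_\bid$ once $(\Sigma, J)$ is fixed.

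Given this, I would first apply the chain rule of mutual information (\itfacts{chain-rule}):
\begin{align*}
\mi{\Theta}{\Prot \mid \Sigma, J} = \mi{\Theta}{\Prot_1, \ldots, \Prot_n \mid \Sigma, J} = \sum_{\bid \in N} \mi{\Theta}{\Prot_\bid \mid \Sigma, J, \Prot_{<\bid}}.
\end{align*}
The goal is then to bound each term $\mi{\Theta}{\Prot_\bid \mid \Sigma, J, \Prot_{<\bid}}$ by $\mi{\Theta}{\Prot_\bid \mid \Sigma, J}$. This is a perfect setup for Proposition~\ref{prop:info-decrease}: setting $A = \Prot_\bid$, $B = \Theta$, $C = (\Sigma, J)$, and $D = \Prot_{<\bid}$, and using the symmetry of mutual information, it suffices to verify the conditional independence $\Prot_\bid \perp \Prot_{<\bid} \mid \Theta, \Sigma, J$.

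I would verify this independence as follows. Conditioned on $(\Theta, \Sigma, J)$, the random vectors $X_1, \ldots, X_n$ are mutually independent (by the observation above). Since $\Prot_\bid$ is a deterministic function of $X_\bid$ given $(\Sigma, J)$, and $\Prot_{<\bid}$ is a deterministic function of $X_{<\bid}$ given $(\Sigma, J)$, the conditional independence of the messages follows from that of the $X_\bid$'s. Applying Proposition~\ref{prop:info-decrease} then yields
\begin{align*}
\mi{\Theta}{\Prot_\bid \mid \Sigma, J, \Prot_{<\bid}} \leq \mi{\Theta}{\Prot_\bid \mid \Sigma, J},
\end{align*}
and summing over $\bid \in N$ proves the claim.

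The only subtlety—and thus the main thing to get right—is the conditional independence step: one must be careful that conditioning is on $(\Theta, \Sigma, J)$ rather than just $(\Sigma, J)$, since without $\Theta$ the inputs $X_\bid$ are correlated through the shared coordinate $X_{\bid, J} = \Theta$. Fortunately, $\Theta$ is exactly the variable we are measuring information about, so adding it to the conditioning in the decomposition is precisely what Proposition~\ref{prop:info-decrease} allows. Everything else is a routine application of the chain rule.
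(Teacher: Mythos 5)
Your proof is correct and follows essentially the same route as the paper: apply the chain rule of mutual information, then use Proposition~\ref{prop:info-decrease} after verifying $\Prot_\bid \perp \Prot^{<\bid} \mid \Theta,\Sigma,J$ via the conditional independence of the $X_\bid$'s and the fact that each message is a deterministic function of the corresponding player's input given $(\Sigma,J)$. The paper's proof is word-for-word the same argument, including the observation that the $\Theta$-conditioning is exactly what restores independence across players.
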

\begin{proof}
	We have,
	\begin{align*}
		\mi{\Theta}{\Prot \mid \Sigma,J} &= \sum_{\bid \in N} \mi{\Theta}{\Prot_\bid \mid \Prot^{<\bid},\Sigma,J} \leq \sum_{\bid \in N} \mi{\Theta}{\Prot_\bid \mid \Sigma,J} 
	\end{align*}
	where the equality is by chain rule (\itfacts{chain-rule}), and the inequality follows from Proposition~\ref{prop:info-decrease}, as we show below that
	$\Prot_\bid \perp \Prot^{<\bid} \mid \Theta,\Sigma,J$, or equivalently $\mi{\Prot_\bid}{\Prot^{<\bid} \mid \Theta,\Sigma,J} = 0$ (by~\itfacts{info-zero}). 
	
	As stated in Observation~\ref{obs:dist1-input-def}, the input of player $\bid \in N$ is uniquely determined by $(x_\bid,\phi_\bid)$ and hence $\Prot_\bid$ is a deterministic 
	function of variables $X_\bid$ and $\Phi_\bid$. Moreover, $\Phi_\bid$ is also uniquely determined by $(\Sigma,J)$, hence, conditioned on $(\Sigma,J)$, $\Prot_\bid$ is only a function of $X_\bid$. 
	On the other hand, conditioned on $(\Theta,\Sigma,J)$, $X_\bid$ and
	$X^{<\bid}$ are chosen independently of each other in the distribution $\dist_1$ (as $X_{\bid,\jstar} = \theta$ and the rest of $X_\bid$ is chosen uniformly at random from $\set{0,1}$). 
	This implies that $\mi{X_\bid}{X^{<\bid} \mid \Theta,\Sigma,J} = 0$.  As stated earlier, $\Prot_\bid$ is a function of 
	$X_\bid$ and $\Prot^{<\bid}$ is a function of $X^{<\bid}$ alone (conditioned on $(\Theta,\Sigma,J)$), hence, by data processing
	inequality (\itfacts{data-processing}), $\mi{\Prot_\bid}{\Prot^{<\bid} \mid \Theta,\Sigma,J} = 0$ as well. 
\end{proof}

We now use a direct-sum style argument to prove that if a player $\bid \in N$ wants to communicate $c$ bits about $\theta$, she needs to communicate (essentially) $p \cdot c$ bits
about her input.

\begin{lemma}\label{lem:sim-direct-sum}
	For any $\bid \in N$, $\mi{\Theta}{\Prot_\bid \mid \Sigma,J} \leq \card{\Prot_\bid}/p$. 
\end{lemma}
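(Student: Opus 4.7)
The plan is a direct-sum style argument whose core idea is that, by Observation~\ref{obs:dist1-oblivious}, player $\bid$ is oblivious to which of the $p$ coordinates of $X_\bid$ is the special one corresponding to $\Theta = X_{\bid,J}$. Intuitively, any information she reveals about $X_\bid$ must be spread symmetrically across the $p$ coordinates, so she can communicate at most a $1/p$ fraction of her $\card{\Prot_\bid}$ bits about any single coordinate, including $\Theta$.

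To formalize this, I would first rewrite
\[
p \cdot \mi{\Theta}{\Prot_\bid \mid \Sigma, J} \;=\; \sum_{j=1}^{p} \mi{X_{\bid, j}}{\Prot_\bid \mid \Sigma, J = j},
\]
using that $\Theta = X_{\bid, J}$ and $\Pr(J = j \mid \Sigma) = 1/p$. Then I would exploit the symmetry provided by Observation~\ref{obs:dist1-oblivious} to replace each conditioning ``$J = j$'' inside the sum by conditioning on $\Sigma$ alone, turning the right-hand side into $\sum_{j=1}^{p} \mi{X_{\bid, j}}{\Prot_\bid \mid \Sigma}$.

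Finally, since the coordinates $X_{\bid, 1}, \ldots, X_{\bid, p}$ are i.i.d.\ uniform bits conditioned on $\Sigma$, a standard super-additivity argument (iterating the chain rule together with Proposition~\ref{prop:info-increase}, which uses $X_{\bid, j} \perp X_{\bid, <j} \mid \Sigma$) yields
\[
\sum_{j=1}^{p} \mi{X_{\bid, j}}{\Prot_\bid \mid \Sigma} \;\leq\; \mi{X_\bid}{\Prot_\bid \mid \Sigma} \;\leq\; \HH(\Prot_\bid \mid \Sigma) \;\leq\; \card{\Prot_\bid},
\]
and dividing both sides by $p$ gives the lemma.

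The main obstacle will be the middle step. Observation~\ref{obs:dist1-oblivious} asserts that $\jstar$ is uniform given the player's input $(x_\bid,\phi_\bid)$ only over the unconditional distribution, whereas after conditioning on $\Sigma$ the labeling $\Phi_\bid$ becomes a deterministic function of $J$, so naive independence fails. Resolving this requires using that $\Prot_\bid$ depends on $(X_\bid, \Phi_\bid)$ only through the induced valuation $v_\bid$, together with the fact that by symmetry of the intersecting family $\FS$, the conditional distribution of $v_\bid$ given $\Sigma$ is invariant under the choice of $J$ (so that, after averaging out $X_\bid$'s remaining randomness, $\Prot_\bid$ and the ``target'' coordinate $X_{\bid,J}$ behave as if $J$ were still uniform). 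Once that invariance is in hand, the substitution in each summand goes through and the rest is routine.
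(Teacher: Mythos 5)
Your decomposition and the final super-additivity step match the paper, but the middle step --- trading the conditioning on $(\Sigma, J=j)$ for conditioning on $\Sigma$ alone --- is where the lemma really lives, and the repair you propose is not sound. The invariance you invoke, namely that the conditional distribution of $v_\bid$ given $\Sigma$ does not depend on $J$, is not a property the construction provides: $\FS$ is a \emph{fixed} family obtained by a probabilistic existence argument, not a symmetric design, so the image of $\FS$ under $\phi_\bid$ is a genuinely different sub-collection of $2^M$ for different values of $J$ (with $\Sigma$ held fixed), even though these collections are abstractly isomorphic as hypergraphs. Moreover, even marginal invariance of $v_\bid$ would not suffice for your substitution --- you would need the \emph{joint} law of $(X_{\bid,j}, \Prot_\bid)$ given $(\Sigma, J=j)$ to equal its law given $\Sigma$ alone, which is a strictly stronger demand.

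The paper circumvents this by a change of conditioning variables. It introduces $\Sigma^{-\bid}$, the part of $\sigma$ not used to define the labeling $\Phi_\bid$, and observes that $\Sigma$ determines and is determined by $(\Sigma^{-\bid}, \Phi_\bid, J)$; hence $\mi{X_{\bid,j}}{\Prot_\bid \mid \Sigma, J=j} = \mi{X_{\bid,j}}{\Prot_\bid \mid \Sigma^{-\bid}, \Phi_\bid, J=j}$. In this reparametrized form, the conditioning pair $(\Sigma^{-\bid}, \Phi_\bid)$ is independent of $J$ (precisely the content of Observation~\ref{obs:dist1-oblivious}), $X_\bid$ is uniform on $\set{0,1}^p$ regardless of $J$, and $\Prot_\bid$ is a deterministic function of $(X_\bid, \Phi_\bid)$ --- so the conditioning on $J=j$ can be legitimately dropped. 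The chain rule and Proposition~\ref{prop:info-increase} then close the argument exactly as you outline, but with $(\Sigma^{-\bid}, \Phi_\bid)$ in place of $\Sigma$. The symmetry you were reaching for is replaced by this decoupling, which is available unconditionally and requires nothing special about $\FS$.
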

\begin{proof}
	We have, 
	\begin{align*}
		\mi{\Theta}{\Prot_\bid \mid \Sigma,J} &= \Ex_{j \in [p]}\Bracket{\mi{\Theta}{\Prot_\bid \mid \Sigma,J=j}} = \Ex_{j \in [p]} \Bracket{\mi{X_{\bid,j}}{\Prot_\bid \mid \Sigma,J=j}} \tag{$\Theta = X_{\bid,j}$ conditioned on $J=j$} \\
		&= \frac{1}{p} \cdot \sum_{j \in [p]} \mi{X_{\bid,j}}{\Prot_\bid \mid\Sigma,J=j} \tag{the index $\jstar$ is chosen uniformly at random from $[p]$}
	\end{align*}
	
	Define $\Sigma^{-\bid}$ as the part of permutation $\Sigma$ that does not affect the labeling function $\Phi_\bid$ of player $\bid$, i.e., the values of $\sigma(k^2+1)\ldots \sigma(k^2+(\bid-1)\cdot k)$ and 
	$\sigma(k^2+\bid\cdot k+1),\ldots,\sigma(k^3)$. With this notation, $\Sigma$ can be written as a function of $\Phi_\bid$, $\Sigma^{-\bid}$, and $J$ (as $J$ and $\Phi_\bid$ uniquely define
	the rest of $\Sigma$ outside $\Sigma^{-\bid}$). 
	Consequently, we can write, 
	\begin{align*}
		\mi{\Theta}{\Prot_\bid \mid \Sigma,J} &= \frac{1}{p} \cdot \sum_{j \in [p]} \mi{X_{\bid,j}}{\Prot_\bid \mid \Sigma^{-\bid},\Phi_\bid,J=j}
	\end{align*}
	
	Our goal is now to drop the conditioning on the event ``$J = j$''. To do so, notice that the distribution of $(\Sigma^{-\bid},\Phi_\bid)$ is independent of the event $J=j$; this is immediate to see as $\Sigma^{-\bid}$
	is independent of $\Phi_\bid$ and $J=j$, and $\Phi_\bid$ is independent of $J=j$ by Observation~\ref{obs:dist1-oblivious}. Moreover, $X_{\bid,j}$ is independent of all $(\Sigma^{-\bid},\Phi_\bid,J=j)$ (as it is uniform
	over $\set{0,1}$) and furthermore, $\Prot_\bid$ is a function of $\Phi_\bid,X_\bid$, which are independent of $J=j$. Consequently, we can drop the conditioning in the above information term and obtain that, 
	\begin{align*}
		\mi{\Theta}{\Prot_\bid \mid \Sigma,J} &= \frac{1}{p} \cdot \sum_{j \in [p]} \mi{X_{\bid,j}}{\Prot_\bid \mid \Sigma^{-\bid},\Phi_\bid} 
		\leq \frac{1}{p} \cdot \sum_{j \in [p]} \mi{X_{\bid,j}}{\Prot_\bid \mid X_\bid^{<j},\Sigma^{-\bid},\Phi_\bid} \tag{by Proposition~\ref{prop:info-increase} as $X_{\bid,j} \perp X_{\bid}^{<j} \mid \Sigma^{-\bid},\Phi_\bid$} \\
		&= \frac{1}{p} \cdot \mi{X_\bid}{\Prot_\bid \mid \Sigma^{-\bid},\Phi_\bid} \leq \frac{1}{p} \cdot \HH(\Prot_\bid \mid \Sigma^{-\bid},\Phi_\bid) \leq \frac{1}{p} \cdot \HH(\Prot_\bid) \leq \frac{1}{p} \cdot \card{\Prot_\bid} 
	\end{align*}
	where the equality in the second line is by chain rule (\itfacts{chain-rule}), and inequalities are by \itfacts{uniform} and \itfacts{cond-reduce}.
\end{proof}

We can now conclude the following lemma. 
\begin{lemma}\label{lem:sim-conclude}
	Communication cost of $\prot$ is $\Omega(p)$.  
\end{lemma}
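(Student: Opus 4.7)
The plan is to chain together the three preceding results (Claim~\ref{clm:sim-theta-discovery}, Claim~\ref{clm:sim-independent-message}, and Lemma~\ref{lem:sim-direct-sum}) and use the definition of communication cost $\norm{\prot} = \sum_{\bid \in N} \card{\Prot_\bid}$. The idea is that the protocol must reveal $\Omega(1)$ bits of information about $\Theta$ (conditioned on the ``easy part'' $(\Sigma, J)$), this information cannot exceed the sum of per-player information contributions, and each per-player contribution is at most a $1/p$ fraction of the length of that player's message.

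Concretely, first I would invoke Claim~\ref{clm:sim-theta-discovery} to get the lower bound $\mi{\Theta}{\Prot \mid \Sigma, J} = \Omega(1)$. Then Claim~\ref{clm:sim-independent-message} lets me split this across players: $\mi{\Theta}{\Prot \mid \Sigma, J} \leq \sum_{\bid \in N} \mi{\Theta}{\Prot_\bid \mid \Sigma, J}$. Finally, applying Lemma~\ref{lem:sim-direct-sum} to each term gives $\mi{\Theta}{\Prot_\bid \mid \Sigma, J} \leq \card{\Prot_\bid}/p$. Combining these three facts yields
\[
\Omega(1) \;\leq\; \sum_{\bid \in N} \frac{\card{\Prot_\bid}}{p} \;=\; \frac{\norm{\prot}}{p},
\]
so $\norm{\prot} = \Omega(p)$, as claimed.

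There is no real obstacle here; all the heavy lifting has been done in the preceding claims. The only thing to be careful about is ensuring that the chain of inequalities is written in the right direction and that the definition of $\norm{\prot}$ is correctly invoked. Once the lemma is in hand, combining it with $p = \exp(\Theta(k^{\eps}))$ from the construction and $m = \Theta(k^3)$ immediately yields the communication lower bound of $\exp(m^{\Omega(\eps)})$ claimed in Theorem~\ref{thm:sim-lower}.
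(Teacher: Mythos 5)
Your proof is correct and follows exactly the same chain of inequalities as the paper: invoke Lemma~\ref{lem:sim-direct-sum} on each player, sum via Claim~\ref{clm:sim-independent-message}, and lower-bound the total by Claim~\ref{clm:sim-theta-discovery}. Nothing is missing.
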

\begin{proof}
	%\begin{align*}
		$\norm{\prot} = \sum_{\bid \in N} \card{\Prot_\bid} \geq p \cdot  \sum_{\bid \in N} \mi{\Theta}{\Prot_\bid \mid \Sigma,J} \geq p \cdot \mi{\Theta}{\Prot \mid \Sigma,J} = \Omega(p)$. 
	%\end{align*}
	where the last three equations are by, respectively, Lemma~\ref{lem:sim-direct-sum}, Claim~\ref{clm:sim-independent-message}, and Claim~\ref{clm:sim-theta-discovery}. 
\end{proof}

Theorem~\ref{thm:sim-lower} now follows from Lemma~\ref{lem:sim-conclude} by re-parameterizing $\eps$ above by some $\Theta(\eps)$ and noting
that $p = \exp\paren{\Theta(k^{\eps})} = \exp\paren{m^{\Omega(\eps)}}$ (as $m = k^3$).

\newcommand{\Rpri}{\ensuremath{R_{\textnormal{\textsf{pri}}}}}
\newcommand{\Rpub}{\ensuremath{R_{\textnormal{\textsf{pub}}}}}

\newcommand{\FC}{\ensuremath{\mathcal{C}}}

\section{Main Result: A Lower Bound for Multi-Round Protocols}\label{SEC:MULTI}
In this section, we establish our main result. Formally, 

\begin{theorem}\label{thm:multi-lower}
	For any integer $1 \leq r \leq o\paren{\frac{\log{m}}{\log\log{m}}}$, and any sufficiently small constant $\eps > 0$, any $r$-round protocol (possibly randomized) for combinatorial auctions with subadditive (even XOS) bidders 
	that can approximate the value of social welfare to a factor of $\paren{\frac{1}{r} \cdot m^{\frac{1-\eps}{2r+1}}}$ requires $\exp\paren{m^{\Omega(\frac{\eps}{r})}}$ bits of communication. 
\end{theorem}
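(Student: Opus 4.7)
The plan is to prove Theorem~\ref{thm:multi-lower} by induction on $r$, with the base case $r=1$ handled by Theorem~\ref{thm:sim-lower}, using the multi-party round-elimination template of~\cite{AlonNRW15}. First I would build a recursive family of hard distributions $\dist_1,\dist_2,\ldots$: an instance $I \sim \dist_r$ is a ``lifting'' of a $\dist_{r-1}$ instance, encoding $p_r = \exp\paren{m^{\Omega(\eps/r)}}$ overlapping sub-instances of $\dist_{r-1}$ on item-bundles indexed by a $(p_r,q_r,t_r,\ell_r)$-intersecting family from Lemma~\ref{lem:intersecting-families}. A uniformly random index $J_r \in [p_r]$ together with a bit $\Theta_r$ picks out a ``special'' sub-instance whose welfare alone dominates the overall welfare by a factor of $\Theta(k_r^{1-\eps})$, generalizing Lemma~\ref{lem:sim-theta}. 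The intersecting-family parameters are tuned so that distinct sub-instances share at most $\ell_r \ll t_r$ items, so that one sub-instance cannot ``steal'' welfare from another. Appropriate choices of $k_r$ and $p_r$ then make $m_r$ polynomial in $m$ while accumulating the per-level gap $k_r^{1-\eps}$ across $r$ levels into the target $\frac{1}{r}\cdot m^{(1-\eps)/(2r+1)}$ approximation factor; using intersecting families (rather than the disjoint bundles of~\cite{AlonNRW15}) is precisely what avoids the doubly-exponential blowup in $p_r$ and yields our sharper quantitative dependence.

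The heart of the argument is the round-elimination step. Given an $r$-round protocol $\prot$ with the claimed approximation on $\dist_r$ and communication $c_r$, I would construct an $(r-1)$-round protocol $\prot'$ of comparable quality on $\dist_{r-1}$. Let $\Prot^1$ denote the first-round transcript. A direct-sum argument exactly analogous to Claim~\ref{clm:sim-independent-message} and Lemma~\ref{lem:sim-direct-sum} shows that if $\Prot^1$ alone reveals $\Omega(1)$ bits about $\Theta_r$ conditioned on the ``easy part'' of the input (the natural analogue of $(\Sigma,J)$ in Section~\ref{sec:sim-lower}), then already $c_r \geq \Omega(p_r) = \exp\paren{m^{\Omega(\eps/r)}}$ and we are done. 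Otherwise, the last $r-1$ rounds must extract this information, and using Fact~\ref{fact:kl-info} together with Fact~\ref{fact:pinskers} I would show that for a typical slot $j \in [p_r]$, the distribution of the $j$-th sub-instance given $(\Prot^1, J_r=j, \textnormal{easy part})$ is $o(1)$-close in total variation to a fresh $\dist_{r-1}$ sample. This unlocks the reduction: the $n_{r-1}$ players of a $\dist_{r-1}$ input use public randomness to pick a slot $j\in[p_r]$, plant their input as the $j$-th sub-instance of a $\dist_r$ instance, sample a valid $\Prot^1$ from its marginal, complete the remaining $p_r-1$ sub-instances \emph{locally} from the prior, and then simulate the last $r-1$ rounds of $\prot$ to recover an approximation of their own welfare. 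Because welfare estimation does not localize as cleanly as allocation, unlike~\cite{AlonNRW15} I would embed the $\dist_{r-1}$ input into \emph{multiple} slots simultaneously and aggregate, which is where the $\Theta(1/r)$ factor in the statement comes from.

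The main obstacle will be the local completion of the non-special sub-instances after conditioning on $\Prot^1$: the posterior $(\dist_r \mid \Prot^1)$ is \emph{not} a product distribution across slots, as conditioning on a single shared transcript correlates otherwise-independent sub-instances. Following~\cite{AlonNRW15}, I would handle this by proving, again via direct-sum, that the information $\Prot^1$ carries about any individual non-special sub-instance is on average only $O(c_r/p_r)$ bits, so Pinsker's inequality bounds the per-slot total-variation deviation from the prior by $o(1)$, making independent local sampling a legitimate simulation up to additive $o(1)$ error summed over the $p_r$ slots. A secondary, combinatorial obstacle is the overlap introduced by the intersecting family: I must verify that the $\ell_r$-bounded overlaps neither contaminate the welfare of the special sub-instance (preserving the analogue of Lemma~\ref{lem:sim-theta}) nor obstruct local completion (since a player's missing items in another sub-instance can be inferred from her private input and the publicly sampled permutation, given the bounded intersection). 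Balancing these constraints fixes the parameter scalings and limits the recursion depth to $r = o(\log m/\log\log m)$, producing the claimed bound.
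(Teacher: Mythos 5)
Your broad plan (recursive distributions, intersecting families, a direct-sum argument on the first round, round elimination by local completion) matches the paper's at the top level, but two of the key ideas in the plan either contradict the paper's construction or would fail as stated.

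First, the grouping structure is missing, and your substitute for it does not work. In the paper's $\dist_r$, the $n_r = k^{2r}$ players are partitioned into $k^2$ groups of size $n_{r-1}$, the special instance $\Istar_r \sim \dist_{r-1}$ is sampled once and then \emph{copied} into all $k^2$ groups on disjoint sets of items (Observation~\ref{obs:distr-disjoint}). This is what makes the welfare gap work: when $\thetas = 1$ the special instances alone give $k^2 \cdot k^{2r-1} = k^{2r+1}$, which beats the $k^{2r}$ shared items; without the $k^2$ multiplicity the special-instance welfare $k^{2r-1}$ is \emph{smaller} than the shared-item contribution and the gap disappears. Correspondingly, in the reduction each $\dist_{r-1}$ player must simulate $k^2$ copies of herself (one per group) — this is the ``embed in multiple places'' the paper refers to. You instead propose embedding the $\dist_{r-1}$ input into ``multiple slots $j$,'' which cannot work: distinct slots index overlapping item-bundles from the intersecting family, so the same sub-instance cannot be consistently planted in two of them, and $\dist_r$ places independent sub-instances in the non-special slots. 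You also attribute the $\Theta(1/r)$ factor to an aggregation over these multiple embeddings, whereas in the paper it comes from the combinatorial bound $\sw \leq 2r\cdot k^{2r+2\eps}$ in the $\thetas=0$ case (Lemma~\ref{lem:multi-theta}), which accumulates one factor per recursion level.

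Second, and more seriously, the local-completion argument as you describe it does not go through. You propose that $\Prot^1$ carries $O(c_r/p_r)$ bits per non-special slot, Pinsker gives per-slot TV deviation $o(1)$, and then one sums the error over the $p_r$ slots. But with $O(c_r/p_r)$ bits per slot, Pinsker gives per-slot TV $O(\sqrt{c_r/p_r})$, and summing over $p_r$ slots gives $O(\sqrt{c_r p_r})$, which is certainly not $o(1)$. The paper avoids this entirely: after conditioning on the publicly sampled $(\Prot_1,\Sigma,J)$ \emph{and} on each player's own view $\Istar_r(\bid)$ of the special instance, the joint posterior over the fooling instances is an \emph{exact} product distribution across players (Proposition~\ref{prop:private-sampling}), so each player samples her fooling instances locally from the exact conditional law with \emph{no} approximation error. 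The only place Pinsker is used is once, to show the posterior of the single special instance $\Istar_r$ given $(\Prot_1,\Sigma,J)$ is $o(1/r^2)$-close to $\dist_{r-1}$ (Claim~\ref{clm:small-distance}); these errors sum over the $r$ levels to $o(1)$, not over the $p_r$ slots. Your framing of the correlation as being ``across slots'' rather than ``across players'' is the root of the issue: what must be decoupled is the players' private completions, and that decoupling holds exactly (given the right conditioning), not approximately.

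A smaller point: you introduce a fresh bit $\Theta_r$ at every level, whereas the paper has a single $\theta$ only in the base distribution $\dist_1$, traced through the chain $\Istar_r \to \Istar_{r-1} \to \cdots \to \Istar_2$ to define $\thetas$. This matters because the inductive claim being eliminated one round at a time is ``determine $\thetas$,'' a single bit whose hardness is inherited from the base case, not a new bit introduced at each level.
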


We start by introducing the recursive family of hard input distributions that we use proving in Theorem~\ref{thm:multi-lower} and then establish a lower 
bound for this distribution. 

\subsection{A Hard Input Distribution for $r$-Round Protocols} \label{sec:multi-dist}

Our hard distribution $\dist_r$ for $r$-round protocols is defined recursively with its base case ($r=1$ case) being the distribution $\dist_1$ introduced in Section~\ref{sec:sim-dist}.
We first give an informal description of $\dist_r$. 

Let $k$ be an integer and consider a set $N$ of $n_r=k^{2r}$ players and a set $M$ of $m_r=(r+1)\cdot k^{2r+1}$ items. The players are partitioned (arbitrary) between $k^2$ groups 
$N_1,\ldots,N_{k^2}$ each of size $n_{r-1}$. Fix a group $N_g$ and for any player $\bid \in N_g$, we create an exponentially large (in $k$) collection $\FC_i$ of item-sets of size $m_{r-1}$ (over the universe $M$), such that the 
for any two sets $S,T \in \FC_i$, $\card{S \cap T} \leq k^{2r-2+\eps}$ (for any constant $\eps > 0$). 

The \emph{local} view of player $\bid \in N_g$ is as follows: over each set $S_j \in \FC_\bid$, we create an $(r-1)$-round instance of the problem, namely instance $I_{\bid,j}$, sampled from the distribution $\dist_{r-1}$ with the 
set of players being $N_g$ and the set of items being $S_j$, and then let the input of player $\bid$ be the collective input of the $\bid$-th player in all these instances. In other words, player $\bid$ finds herself 
``playing'' in exponentially many ``$(r-1)$-round instances'' of $\dist_{r-1}$.

On the \emph{group level}, the input to players inside a group $N_g$ are highly correlated: for each player $\bid \in N_g$, one of the instances, namely $I_{\bid,\jstar}$, is an ``special instance'' in the sense that 
\emph{all} players in the group $N_g$ has a ``consistent'' view of this instance, i.e., the collective view of players $1,\ldots,n_{r-1}$ in $N_g$ on the instances $I_{1,\jstar},\ldots,I_{n_{r-1},\jstar}$ forms
a valid instance sampled from $\dist_{r-1}$. However, for any other index $j \neq \jstar$, the collective view of players in $N_g$ in the instances $I_{1,\jstar},\ldots,I_{n_{r-1},\jstar}$ 
forms a ``pseudo instance'' that is \emph{not} sampled from $\dist_{r-1}$; these pseudo instances are created by sampling the input of each player \emph{independently} according to $\dist_{r-1}$. Note however that
while the pseudo instances and the special instance of a player are fundamentally different, each player is oblivious to this difference, i.e., which instance is the special instance. 

Finally, the input to players across the groups, i.e., the \emph{global} input, is further correlated: the set of items in the special instances of players in a group $N_g$ is a ``unique'' set of items (across all groups), while
\emph{all} other instances, across all groups, are constructed over a set of $k^{2r}$ ``shared'' items. This correlation makes the special instance of a player $i$, in some sense, the \emph{only important} instance: to obtain a large allocation, the players need to ultimately solve the problem for these special instances.

We now formally define distribution $\dist_{r}$. In the following, for simplicity of exposition, we assume that the distribution $\dist_r$, in addition to the valuation function of players, also
outputs the \emph{private collections} (defined similarly as in $\dist_1$) of players
that are used to define these functions\footnote{Strictly speaking, this is a redundant information as the valuation functions can uniquely determine the private collections; however, we include this redundant
output for the ease of presentation.}. 

\textbox{Distribution $\dist_r(N,M)$. 
\textnormal{A hard input distribution for $r$-round protocols (for $r \geq 2$).}}
{

\smallskip

\textbf{Input:} Collections $N$ of $n_r=k^{2r}$ players and $M$ of $m_r= (r+1) \cdot k^{2r+1}$ items. \\
\textbf{Output:} A set of $n_r$ valuation functions $(v_1,\ldots,v_{n_r})$ for the players in $N$ and $n_r$ private collections $(\FF_1,\ldots,\FF_{n_r})$ used to define the valuation functions.   
\\ 
\algline
\begin{enumerate}
	\item Let $\FS_r = \set{S_{1},\ldots,S_{p}}$ be a $(p_r,q_r,t_r,\ell_r)$-intersecting family with
	 parameters $p_r = p = \exp\paren{\Theta(k^{\eps})}$, $q_r = k^{2r}+r \cdot k^{2r-1}$, $t_r = r \cdot k^{2r-1}$, and $\ell_r = k^{2r-2 + \eps}$ (guaranteed to exist by Lemma~\ref{lem:intersecting-families} 
	 as $k = m^{\Omega(1/r)} = \omega(r^{2/\eps})$ by the assumption that $r = o\paren{\frac{\log{m}}{\log\log{m}}}$). 
 
	\item Arbitrary group the players into $k^2$ groups $\FN = (N_1,\ldots,N_{k^2})$, whereby each group contains exactly $n_{r-1} = k^{2r-2}$ players. 	
	\item Pick an index $\jstar \in [p]$ uniformly at random and sample an instance $\Istar_r \sim \dist_{r-1}([n_{r-1}],S_{\jstar})$. 
	\item For each group $N_g \in \FN$ independently,
	\begin{enumerate}
		\item \label{line:copy} Define $\Istar_{N_g}$ as $\Istar_r$ by mapping the players in $[n_{r-1}]$ to $N_g$. 
		\item \label{line:instances} For each player $\bid \in N_g$ \emph{independently}, create $p$ instances $I^{(\bid)}:= (I_{\bid,1},\ldots,I_{\bid,p})$ whereby
		 for all $j \neq \jstar$, $I_{\bid,j} \sim \dist_{r-1}(N_g,S_j)$, and $I_{\bid,\jstar} = \Istar_{N_g}$. 
		\item For a player $\bid \in N_g$ and index $j \in [p]$, let $\FF_{\bid,j}$ be the set of \emph{private collection} of that player in instance $I_{\bid,j}$ and let $\FF_\bid = \bigcup_{j \in [p]} \FF_{\bid,j}$. 
	\end{enumerate} 
	\item Pick a random permutation $\sigma$ of $M$. For each $g \in [k^2]$ and group $N_g$, map the $k^{2r}$ items in $[q_r] \setminus S_{\jstar}$ to $\sigma(1),\ldots,\sigma(k^{2r})$, and
	the $t_r$ items in $S_{\jstar}$ to $\sigma((g-1) \cdot t_r+1) \ldots \sigma(g \cdot t_r)$ (and for each player $\bid \in N_g$, update the item set of $\FF_\bid$ and
	underlying instances $I_{\bid,1},\ldots,I_{\bid,p}$ accordingly). 
	\item For any player $\bid \in N$, define the valuation function of player $\bid$ as $v_\bid(S) = \max_{T \in \FF_\bid} \card{S \cap T}$ (note that these valuation functions
	 are XOS valuation; see Observation~\ref{obs:XOS-valuation}). 
\end{enumerate}
}

We make several observations about the distribution $\dist_r$. Recall that $\FF_\bid$ denotes the private collection of player $\bid \in N$ that is used to define the valuation function $v_\bid$. By construction, the size of the sets inside 
each private collection is equal across any two distributions $\dist_{r}$ and $\dist_{r'}$ and hence is equal to $k$ (by definition of distribution $\dist_1$). A simple property of these sets is that, 
\begin{observation}\label{obs:distr-uniform-set}
	For any player $\bid \in N$, and any set $T \in \FF_\bid$, the set $T$ is chosen uniformly at random from all $k$-subsets of $M$. 
\end{observation}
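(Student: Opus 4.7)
The plan is to prove the observation by induction on $r$, with the base case being $\dist_1$ and the inductive step leveraging the recursive structure of $\dist_r$.

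For the base case $r=1$, I would unfold the construction of $\dist_1$: every $T \in \FF_\bid$ corresponds to some $S_j \in \FS$, which is a fixed $k$-subset of $[q]$. Step 4 of $\dist_1$ maps each element of $[q]$ to a specific position in $[m]$ (depending on whether the element lies in $S_{\jstar}$ and on the player index $\bid$), and the random permutation $\sigma$ of $M$ is then applied. Since $\sigma$ is sampled uniformly over all permutations of $M$ and independently of the data determining these positions, the image under $\sigma$ of any fixed $k$-subset of $[m]$ is a uniformly random $k$-subset of $M$; this settles the base case.

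For the inductive step, I would observe that every $T \in \FF_\bid$ in a $\dist_r$-sample belongs to some $\FF_{\bid,j}$, which is the private collection of player $\bid$ in the inner instance $I_{\bid,j} \sim \dist_{r-1}(N_g,S_j)$ constructed in step 4(b). Applying the inductive hypothesis to this inner call, $T$ is uniformly distributed over all $k$-subsets of $S_j$, and this randomness is independent of the outer permutation $\sigma$ drawn later in step 5. Step 5 then rewrites each element of $S_j$ as $\sigma$ evaluated at a specific position in $[m_r]$, so $T$ becomes the $\sigma$-image of $k$ fixed positions, and the same uniform-permutation observation from the base case yields uniformity of $T$ over all $k$-subsets of $M$.

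The main subtlety I would want to verify carefully is that the rename in step 5 really sends $T$ to $\sigma$-images of $k$ \emph{distinct} positions: since $S_{\jstar}$ and $[q_r] \setminus S_{\jstar}$ are mapped to disjoint blocks of positions of $\sigma$, this holds. A related point is that conditioning on the realization of $T$ is legitimate for extracting the uniformity from $\sigma$, since the set of positions that $T$ occupies in $[m_r]$ depends only on which elements of $T$ lie in $S_{\jstar}$ and on the group of player $\bid$, both of which are independent of $\sigma$. The rest is bookkeeping.
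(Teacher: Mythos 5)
Your argument is correct, and since the paper states this observation without a proof, your unwind of the construction supplies exactly the intended reasoning: the step-5 labeling is the composition of the uniformly random permutation $\sigma$ (drawn last, hence independent of everything preceding it) with a deterministic injection of $[q_r]$ into positions that depends only on $\jstar$ and the group index of $\bid$. One simplification worth noting: the full strength of the inductive hypothesis is never used. Once you know that the ``abstract'' set $T_0 \subseteq S_j \subseteq [q_r]$ has exactly $k$ elements and that $T_0$, together with $\jstar$, is determined before $\sigma$ is drawn, it already follows that $T$ is the $\sigma$-image of $k$ distinct positions, and a uniform permutation applied to any $k$-element set of positions independent of it yields a uniform $k$-subset of $M$; uniformity of $T_0$ over $k$-subsets of $S_j$ plays no role. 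The point you single out as the ``main subtlety''---that the step-5 relabeling is injective on $[q_r]$ so the positions are distinct---is indeed the only substantive thing to verify (the paper's literal phrasing of step 5 appears to omit the $k^{2r}$ offset that is present in $\dist_1$, but under the clearly intended offset, as used in the proof of Lemma~\ref{lem:multi-theta}, your disjointness claim holds).
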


Fix any group $N_g \in \FN$ and any player $\bid \in N_g$. The input to player $\bid$ can be seen as the ``view'' of $\bid$ in the $p$ instances $I^{(\bid)} := (I_{\bid,1},\ldots,I_{\bid,p})$, i.e., 
the input of the $\bid$-th player (in $N_g$) in $I_{\bid,j}$ (for all $j \in [p]$) and \emph{not} the whole instance. However, in the following, we slightly abuse the notation and use $I_{\bid,j}$ to also denote the
view of player  $\bid$ in the instance $I_{\bid,j}$. Moreover, we point out that $I_{\bid,j}$ is defined over the set of items $S_j$; hence, the complete input to player $\bid$ 
is the pair $(I^{(\bid)},\phi_\bid)$ where $\phi_\bid$ is the labeling function to map the items in $S_j$ to $M$ (see also Observation~\ref{obs:dist1-input-def}).

For any player $\bid \in N$, we refer to the instance $I_{\bid,\jstar}$ of player $\bid$ as the \emph{special instance} of player $\bid$, and to all other instances $I_{\bid,j}$ for $j \neq \jstar$ as 
\emph{fooling instances}. 

\begin{observation}\label{obs:distr-independence}
	For any group $N_g \in \FN$, the joint input of all players $\bid \in N_g$ in their special instances $I_{\bid,\jstar}$ form the instance $\Istar_{N_g}$ that is sampled from the distribution $\dist_{r-1}$. 
\end{observation}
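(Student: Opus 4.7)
The proof is essentially a direct verification from the construction of $\dist_r$: the special instances across players in a group are explicitly coupled by design to form a $\dist_{r-1}$ sample, while the item relabeling at the outer level is absorbed by the symmetries already built into $\dist_{r-1}$. I would begin by establishing two basic symmetries of $\dist_{r-1}$, both of which follow by routine induction on $r$ using the description of $\dist_1$ as the base case. First, $\dist_{r-1}$ is invariant under bijective relabeling of its player set (player identities serve only as indices for inputs, and the construction never treats distinct players asymmetrically). Second, $\dist_{r-1}$ is invariant under bijective relabeling of its item set, since the final step at every level of the recursion applies a uniformly random permutation of the underlying items.

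With these symmetries in hand, the observation reduces to unwrapping the definition. In step 3 of $\dist_r$, the instance $\Istar_r$ is sampled directly from $\dist_{r-1}([n_{r-1}],S_{\jstar})$. In step 4(a), we define $\Istar_{N_g}$ as $\Istar_r$ with the player set relabeled from $[n_{r-1}]$ to $N_g$; by the player-relabeling invariance, $\Istar_{N_g} \sim \dist_{r-1}(N_g,S_{\jstar})$. In step 4(b), for each player $\bid \in N_g$, the special instance $I_{\bid,\jstar}$ is defined to equal the view of $\bid$ inside $\Istar_{N_g}$, so gluing the views $\{I_{\bid,\jstar}\}_{\bid \in N_g}$ together reconstructs the full instance $\Istar_{N_g}$. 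Finally, the permutation $\sigma$ applied in step 5 relabels the items in $S_{\jstar}$ to the uniformly random block $\sigma((g-1)t_r+1),\ldots,\sigma(g\cdot t_r)$ in $M$; by the item-relabeling invariance, this yields a sample from $\dist_{r-1}(N_g,T_g)$ where $T_g$ is the image of $S_{\jstar}$ under $\sigma$. I do not anticipate any technical obstacle here: the observation is really just bookkeeping that isolates, from the complicated joint distribution over all views $\{I_{\bid,j}\}_{\bid \in N, j \in [p]}$, the marginal along the ``$\jstar$-slice'' inside a single group, and verifies that the construction has coupled exactly this slice to be a genuine recursive instance.
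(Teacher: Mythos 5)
Your proof is correct and follows essentially the same reasoning the paper leaves implicit: the observation is read directly off the construction of $\dist_r$, since step 4(b) explicitly sets $I_{\bid,\jstar} = \Istar_{N_g}$ for every player $\bid \in N_g$, and $\Istar_{N_g}$ is by step 4(a) a relabeled copy of $\Istar_r \sim \dist_{r-1}$. The player- and item-relabeling invariances you spell out are exactly what justifies that the relabelings in steps 4(a) and 5 preserve the distribution, which the paper takes as immediate.
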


On the other hand, the fooling instances of players $\bid \in N_g$ are sampled \emph{independently} and hence the joint distribution of the players on their instances $I_{\bid,j}$ is \emph{not} sampled from
$\dist_{r-1}$.  Nevertheless, this difference is \emph{not evident} to the player $\bid$. 

\begin{observation}\label{obs:distr-j-independence}
	For any player $\bid \in N$, conditioned on the input $(I^{(\bid)},\phi_\bid)$ given to the player $\bid$, the index $\jstar$ is chosen 
	uniformly at random from $[p]$. 
\end{observation}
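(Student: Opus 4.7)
The plan is to show that $(I^{(\bid)},\Phi_\bid)$ is independent of $\jstar$ under $\dist_r$; combined with the fact that $\jstar$ is a priori uniform on $[p]$ (Step~3 of the construction), Bayes' rule will then immediately yield the claim. Since under $\dist_r$ all samples in Step~4 are drawn before and independently of the permutation $\sigma$ in Step~5, we have $I^{(\bid)} \perp \Phi_\bid \mid \jstar$, so it suffices to argue separately that each of $I^{(\bid)}$ and $\Phi_\bid$ has a marginal distribution that does not depend on $\jstar$.

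For the tuple $I^{(\bid)}$, I would observe that conditioning on $\jstar=j$, the view $I_{\bid,j}$ is the $\bid$-th player's view in $\Istar_{N_g}\sim \dist_{r-1}([n_{r-1}],S_j)$, whereas each $I_{\bid,j'}$ for $j' \neq j$ is an independently generated view from $\dist_{r-1}(N_g,S_{j'})$. The critical point is that the ``special'' sample $\Istar_{N_g}$ and each ``fooling'' sample are drawn independently from the same family $\{\dist_{r-1}(\cdot,S_{j'})\}_{j' \in [p]}$: conditionally on $\jstar=j$ the $p$ views are mutually independent, and each $I_{\bid,j'}$ has the same marginal law (depending only on $S_{j'}$) regardless of whether $j'=j$ or $j'\neq j$. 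Hence the joint distribution of $I^{(\bid)}$ is the same product distribution for every value of $j$.

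For the labeling $\Phi_\bid$, I would argue it is uniformly distributed over all injections $[q_r]\to M$ regardless of $\jstar$. Concretely, if $g$ denotes the group containing $\bid$, then Step~5 determines $\Phi_\bid$ as $\sigma\circ\rho_{\jstar}$, where $\rho_j$ is the deterministic bijection from $[q_r]$ to a fixed set $P$ of $q_r$ positions in $M$ (the positions depend only on $g$, not on $j$) obtained by sending the sorted elements of $[q_r]\setminus S_j$ to the first $k^{2r}$ positions and the sorted elements of $S_j$ to the special block for group $g$. Since $P$ is the same for every $j$, and $\sigma$ restricted to $P$ is a uniformly random injection of $q_r$ elements of $M$, the composition $\sigma\circ\rho_j$ is uniformly distributed over all injections $[q_r]\to M$, with the same distribution for every $j$. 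The only conceptually subtle step --- and the one I expect to be the main potential pitfall --- is this last observation: a priori one might worry that $\Phi_\bid$ leaks $\jstar$ because the labeling ``treats $S_{\jstar}$ specially,'' but once the bijection $\rho_{\jstar}$ is composed with the uniform permutation $\sigma$, the structural distinction is completely washed out.
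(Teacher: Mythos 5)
Your proof is correct. The paper states this as an \emph{Observation} with no written proof, treating it as immediate from the construction, so you are filling in the argument rather than taking a route different from the paper's. Your decomposition --- establish the conditional independence $I^{(\bid)} \perp \Phi_\bid \mid \jstar$, then show that each conditional marginal is invariant to the value of $\jstar$, and conclude via Bayes' rule --- is the natural way to make the implicit claim precise, and each step checks out against the construction of $\dist_r$: the $p$ views of player $\bid$ are mutually independent given $\jstar$ with marginals depending only on $S_{j'}$ (both the special view and any fooling view are one player's marginal of a $\dist_{r-1}(\cdot,S_{j'})$ sample), and $\Phi_\bid = \sigma\circ\rho_{\jstar}$ with $\rho_j$ a $j$-dependent bijection onto a \emph{fixed} index set, which composed with the uniform $\sigma$ gives a $j$-independent uniform injection. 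One small imprecision: you describe $P$ as a set of $q_r$ ``positions in $M$,'' but it is really a fixed set of indices in the domain of $\sigma$ (slots whose images under $\sigma$ are what land in $M$); the argument is unaffected.
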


Additionally, 
\begin{observation}\label{obs:distr-input-independence}
	The distribution of collection of instances $\Ins:=(I^{(1)},\ldots,I^{(n_r)}) \sim \dist_r \mid \Istar_r,\sigma,\jstar$ is a \emph{product} distribution as instances in Line~(\ref{line:instances}) are sampled independently (except for
	 instances $I_{\bid,\jstar} = \Istar_r$ which are already conditioned on above). 
\end{observation}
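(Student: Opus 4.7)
The plan is to prove this observation by directly inspecting the construction of $\dist_r$ and verifying that, once $(\Istar_r, \sigma, \jstar)$ is fixed, the only remaining randomness in each tuple $I^{(\bid)}$ is generated locally to player $\bid$ in Line~(\ref{line:instances}), with no sharing across players.

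First, I would enumerate the sources of correlation between distinct tuples $I^{(\bid)}$ and $I^{(\bid')}$ (for $\bid \neq \bid'$) in the unconditioned distribution. Tracing through the six steps of $\dist_r$, these sources are precisely: (i) the special instance $I_{\bid,\jstar} = \Istar_{N_g}$, which is shared across all players $\bid$ within a common group $N_g$ (and which, via Line~(\ref{line:copy}), is a deterministic function of $\Istar_r$ together with the fixed, non-random grouping $\FN = (N_1,\ldots,N_{k^2})$); (ii) the choice of the ``special index'' $\jstar \in [p]$, which governs which coordinate of each $I^{(\bid)}$ is set to $\Istar_{N_g}$ versus independently sampled; and (iii) the global permutation $\sigma$ applied in Step 5, which relabels items in $[q_r]$ to items in $M$ consistently across every player and every coordinate.

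Second, I would observe that conditioning on the triple $(\Istar_r, \sigma, \jstar)$ neutralises each of these three sources simultaneously. Conditioning on $\jstar$ fixes the distinguished coordinate; conditioning on $\Istar_r$ makes the value $I_{\bid,\jstar} = \Istar_{N_g}$ a deterministic function of $\bid$ (through $\bid$'s fixed group $N_g$), so coordinate $\jstar$ is the same constant for everyone in a group and contributes no residual randomness; conditioning on $\sigma$ turns the item relabeling in Step 5 into a deterministic map on each coordinate. What remains is exactly the sampling in Line~(\ref{line:instances}) for $j \neq \jstar$, which the construction explicitly performs \emph{independently for each player $\bid$} and independently across $j$.

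Third, to conclude, I would note that the law of $I^{(\bid)}$ conditional on $(\Istar_r, \sigma, \jstar)$ is therefore the product over $j \in [p]$ of: the point mass on the deterministic copy of $\Istar_{N_g}$ (in coordinate $\jstar$) and, for each $j \neq \jstar$, the distribution $\dist_{r-1}(N_g, S_j)$ with items relabeled by $\sigma$. Since this description depends only on $\bid$'s group $N_g$ and on the conditioned triple, and since the underlying samples in Line~(\ref{line:instances}) are drawn independently across $\bid$, the conditional joint law of $(I^{(1)},\ldots,I^{(n_r)})$ factorises as a product across players, giving the claim. I don't anticipate a genuine obstacle here; the statement is essentially a bookkeeping check that the only cross-player couplings in $\dist_r$ live inside $(\Istar_r, \sigma, \jstar)$.
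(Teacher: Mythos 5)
Your proposal is correct and takes the same route the paper does: the paper presents this as a bare observation whose justification is read off directly from the construction of $\dist_r$ (the sampling in Line~(\ref{line:instances}) is done independently for each player $\bid$, with the only cross-player couplings being the shared special instance $\Istar_{N_g}$, the special index $\jstar$, and the relabeling $\sigma$, all of which are conditioned on). You have simply spelled out that bookkeeping in more detail.
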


Another important property of the special instances in distribution $\dist_r$ is that,
\begin{observation}\label{obs:distr-disjoint}
	The special instances $\Istar_{N_1},\ldots,\Istar_{N_{k^2}}$ are supported on disjoint set of items (according to the mapping $\sigma$). 
\end{observation}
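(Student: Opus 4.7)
The plan is to verify this observation directly by tracing through the construction of $\dist_r$; this is a bookkeeping check rather than a substantive inequality, since all the disjointness is engineered into the labeling step. First I would note that $\Istar_r \sim \dist_{r-1}([n_{r-1}],S_{\jstar})$ is by definition an instance whose underlying item set is $S_{\jstar} \subseteq [q_r]$, a set of exactly $t_r = r \cdot k^{2r-1}$ items. The copies $\Istar_{N_g}$ introduced in Line~(\ref{line:copy}) for each group $N_g \in \FN$ keep this same item set $S_{\jstar}$; only the player labels change from $[n_{r-1}]$ to $N_g$. So prior to the application of $\sigma$, all $k^2$ special instances literally live on the \emph{same} set $S_{\jstar}$, and any disjointness must be produced by the relabeling step.

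Next I would inspect Step~5 of the construction: for each group $N_g$, the $t_r$ items of $S_{\jstar}$ are mapped through the permutation $\sigma$ of $M$ to the block of $t_r$ positions indexed $(g-1)\cdot t_r + 1, \ldots, g \cdot t_r$ (read as offset past the first $k^{2r}$ positions $\sigma(1),\ldots,\sigma(k^{2r})$, which are reserved for the shared items coming from $[q_r] \setminus S_{\jstar}$, in analogy with the $k^2$-offset used in $\dist_1$). These $k^2$ index blocks are, by inspection, pairwise disjoint subsets of $[m_r]$. Since $\sigma$ is a bijection on $M$, their images are pairwise disjoint subsets of $M$, and these images are precisely the item supports of $\Istar_{N_1},\ldots,\Istar_{N_{k^2}}$ after the relabeling. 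That is exactly the disjointness the observation asserts.

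Finally, I would check that the construction fits inside $M$: the total number of positions consumed is $k^{2r} + k^2 \cdot t_r = k^{2r} + r \cdot k^{2r+1} \leq (r+1)\cdot k^{2r+1} = m_r$, so $\sigma$ can indeed be defined without collision between the shared-item block and any of the $k^2$ special-item blocks. There is no real obstacle here; the only subtlety is aligning the indexing of Step~5 with $\card{S_{\jstar}} = t_r$, the group count $k^2$, and $\card{M} = m_r$, after which the disjointness is immediate because each special instance occupies its own $\sigma$-block and these blocks are chosen to be pairwise disjoint by design.
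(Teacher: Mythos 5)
Your proof is correct and is the direct verification one would expect; the paper states this observation without proof, and your argument supplies exactly the intended bookkeeping: the $k^2$ index blocks $\{k^{2r}+(g-1)t_r+1,\ldots,k^{2r}+g\cdot t_r\}$ are pairwise disjoint, and $\sigma$ being a bijection carries them to pairwise disjoint item sets, which are the supports of the $\Istar_{N_g}$. You are also right that the paper's Step~5 as written omits the $k^{2r}$ offset (writing $\sigma((g-1)t_r+1)\ldots\sigma(g\cdot t_r)$), which for $g=1$ would collide with the shared block $\sigma(1),\ldots,\sigma(k^{2r})$; your reading with the offset is the intended one, consistent with the analogous step in $\dist_1$ and with the accounting $\card{U}=r\cdot k^{2r+1}$, $\card{\neg{U}}=k^{2r}$ used in the proof of Lemma~\ref{lem:multi-theta}.
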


Notice that we can trace the special instances into a \emph{unique} path $\Istar_{r} \rightarrow \Istar_{r-1} \rightarrow \ldots \rightarrow \Istar_{2}$, whereby $\Istar_{2}$ is sampled from the distribution $\dist_1$. 
We use $\thetas$ to denote the parameter $\theta$ (in $\dist_1$) in the instance $\Istar_{2}$ in this path. The following lemma proves a key relation between $\thetas$ and social welfare of the sampled instance. 

\begin{lemma}\label{lem:multi-theta}
	For any instance $I \sim \dist_r$: 
	\begin{align}
		&\Pr\paren{\sw(I) \geq k^{2r+1} \mid \thetas = 1} = 1 \label{eq:thetas-1} \\
		&\Pr\paren{\sw(I) \leq 2r \cdot k^{2r+2\eps} \mid \thetas = 0} = 1- r \cdot \exp\paren{-\Omega(k^\eps)} \label{eq:thetas-0} 
	\end{align}
\end{lemma}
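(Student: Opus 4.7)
My plan is to induct on $r$, with Lemma~\ref{lem:sim-theta} serving as the base case $r=1$ (noting $2k^{2+\eps}\leq 2k^{2+2\eps}$ and that the base case holds with probability one).  For the inductive step, the case $\thetas = 1$ is structural: since $\thetas$ is inherited along the recursive path $\Istar_r \to \Istar_{r-1}\to\cdots$, the special instance $\Istar_r$ has $\thetas = 1$ and each group's special instance $\Istar_{N_g}$ is a relabelled copy of it, so by the inductive hypothesis $\sw(\Istar_{N_g})\geq k^{2r-1}$.  By Observation~\ref{obs:distr-disjoint} the item sets $S^{sp}_{g}$ are pairwise disjoint across the $k^2$ groups, so concatenating an optimal allocation from each group yields a valid global allocation of value at least $k^2\cdot k^{2r-1} = k^{2r+1}$, proving~\eqref{eq:thetas-1} with probability one.

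For~\eqref{eq:thetas-0}, I would fix an arbitrary allocation $(A_1,\ldots,A_{n_r})$ and, for each player $\bid\in N_g$, a maximising clause $T_{\bid}\in\FF_{\bid,j_{\bid}}$.  Call $\bid$ \emph{Type~1} if $j_{\bid}=\jstar$ (so $T_{\bid}\subseteq S^{sp}_g$) and \emph{Type~2} otherwise (so $|T_{\bid}\cap S_{\jstar}|\leq\ell_r$ by the intersecting-family property).  Splitting each $T_{\bid}\cap A_{\bid}$ into its shared and special-region parts gives
\[
  \sw(I) \;\leq\; |S^{sh}| \;+\; \sum_{g}\sum_{\bid\in N_g,\,\text{Type 1}}|T_{\bid}\cap A_{\bid}\cap S^{sp}_g| \;+\; \sum_{\bid\text{ Type 2}}|T_{\bid}\cap A_{\bid}\cap S^{sp}_{g(\bid)}|,
\]
where the first term is $k^{2r}$ because the $A_{\bid}$'s are disjoint.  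The Type~1 inner sum is bounded, via the sub-instance valuation $v^{\text{sub}}_{\bid}=\max_{T\in\FF_{\bid,\jstar}}|T\cap\cdot|$, by $\sw(\Istar_{N_g})$; since all $\Istar_{N_g}$ are the same relabelled copy of $\Istar_r$, applying the inductive hypothesis once to $\Istar_r$ gives $\sum_g\sw(\Istar_{N_g})\leq k^2\cdot 2(r-1)k^{2r-2+2\eps} = 2(r-1)k^{2r+2\eps}$ with probability at least $1-(r-1)\exp(-\Omega(k^\eps))$.

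The main obstacle is the Type~2 sum, since the naive deterministic bound $\ell_r$ per player sums to a useless $n_r\ell_r = k^{4r-2+\eps}$.  I would overcome this with concentration.  A Type~2 clause $T\in\FF_{\bid,j}$ with $j\neq\jstar$ is, by the recursive construction and Observation~\ref{obs:distr-uniform-set}, a uniformly random $k$-subset of $\phi_{\bid}(S_j)$, a $t_r$-element set meeting $\phi_{\bid}(S_{\jstar})=S^{sp}_{g(\bid)}$ in at most $\ell_r$ items.  Hence $|T\cap S^{sp}_{g(\bid)}|$ is hypergeometric with mean at most $k\ell_r/t_r = k^\eps/r$, and Proposition~\ref{prop:chernoff} (applicable because sampling without replacement yields negatively correlated indicators) gives $\Pr[|T\cap S^{sp}_{g(\bid)}|>c\cdot k^\eps]\leq\exp(-\Omega(c\cdot k^\eps))$ for any sufficiently large constant $c$.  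A union bound over the at most $n_r\cdot\prod_{s=1}^{r}p_s = \exp(O(r\cdot k^\eps))$ clauses (with the $k^{2r}=\exp(2r\log k)$ from $n_r$ absorbed using $r = o(\log m/\log\log m)$) leaves overall failure probability $\exp(-\Omega(k^\eps))$ once $c$ is fixed large enough.  On this good event the Type~2 sum is at most $n_r\cdot O(k^\eps) = O(k^{2r+\eps})$, and combining the three pieces yields $\sw(I)\leq k^{2r}+2(r-1)k^{2r+2\eps}+O(k^{2r+\eps})\leq 2r\cdot k^{2r+2\eps}$, with total failure probability $(r-1)\exp(-\Omega(k^\eps))+\exp(-\Omega(k^\eps))\leq r\cdot\exp(-\Omega(k^\eps))$, matching the claim.
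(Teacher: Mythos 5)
Your proof follows the paper's own proof step for step: the $\thetas=1$ bound is obtained by concatenating optimal allocations across the $k^2$ disjoint relabelled copies of $\Istar_r$, and the $\thetas=0$ bound is obtained by splitting the welfare into shared-item, special-clause (your Type~1), and fooling-clause (your Type~2) contributions, bounding the first by $k^{2r}$, the second by $2(r-1)k^{2r+2\eps}$ via the inductive hypothesis applied to $\Istar_r$, and the third by a concentration-plus-union-bound argument (this is exactly the paper's Claim~\ref{clm:k-eps-intersect}). The one place your argument goes astray is the choice of threshold in the concentration step. You take the per-clause threshold to be $c\cdot k^{\eps}$ for a \emph{fixed constant} $c$, which gives per-clause failure probability $\exp(-\Omega(c\,k^{\eps}))$. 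But there are $n_r\cdot p^r=\exp\bigl(\Theta(r\cdot k^{\eps})\bigr)$ clauses to union over (the dominant term is $p^r$, not the $n_r=\exp(2r\log k)$ you single out as needing absorption), so the union bound yields $\exp\bigl((O(r)-\Omega(c))\,k^{\eps}\bigr)$, which is \emph{not} $\exp(-\Omega(k^{\eps}))$ once $r$ exceeds a constant multiple of $c$; since $r$ may grow with $m$, no fixed constant $c$ works. You need either $c=\Theta(r)$ --- in which case the Type~2 sum becomes $O(r\,k^{2r+\eps})$, still dominated by $2r\,k^{2r+2\eps}$ because $r=k^{o(1)}=o(k^{\eps})$ under the standing assumption $r=o(\log m/\log\log m)$, but that extra step must be made explicit --- or, as the paper does, the threshold $k^{2\eps}$ directly, whose per-clause failure probability $\exp(-\Omega(k^{2\eps}))$ already beats the $\exp\bigl(\Theta(r\,k^{\eps})\bigr)$ clause count because $r\,k^{\eps}=o(k^{2\eps})$.
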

\begin{proof}
	We start by the simpler case of Eq~(\ref{eq:thetas-1}); the proof is by induction. The base case, when $r=1$, is true by Lemma~\ref{lem:sim-theta}. 
	Suppose this holds for all integers smaller than $r$. Now, consider an instance $I \sim \paren{\dist_{r} \mid \thetas= 1}$ and the $k^2$ special
	instances $I_{N_1}, \ldots, I_{N_{k^2}}$ sampled from $\paren{\dist_{r-1} \mid \thetas=1}$ in $I$. By induction, there is an allocation $A_g$ for each $g \in [k^2]$ 
	that results in a welfare of at least $k^{2r-1}$ in each $\Istar_{N_g}$. By Observation~\ref{obs:distr-disjoint}, the set of items
	 among special instances are disjoint, and hence the allocation $A:= (A_1,\ldots,A_{k^2})$ which assigns the 
	bundles in $A_g$ to players in $N_g$ for $g \in [k^2]$ is a valid allocation that results in a welfare of $k^2 \cdot k^{2r-1} = k^{2r+1}$, proving the induction step. 
	
	We now prove Eq~(\ref{eq:thetas-0}) by induction. The base case of $r=1$ is true by Lemma~\ref{lem:sim-theta}. Assume that the bounds hold for all integers smaller than $r$ and consider an
	instance $I \sim \paren{\dist_{r} \mid \thetas= 0}$ and let $\Istar_{N_1},\ldots,\Istar_{N_{k^2}}$ be the special instances of $I$, ``copied'' from
	the instance $\Istar_r \sim \paren{\dist_{r-1} \mid \thetas =0}$ (as in Line~(\ref{line:copy}) of $\dist_r$). Let $U$ be the set of items assigned to these instances (by mapping $\sigma$) and 
	$\neg{U}$ be the set of remaining items assigned by $\sigma$, i.e., the items that have no value in the special instances; we have
	 $\card{U} = k^{2} \cdot t_r = r \cdot k^{2r+1}$ and $\card{\neg{U}} = k^{2r}$ (notice that $\sigma$ does \emph{not} assign all the items in $M$; in particular,
	$k^{2r+1}-k^{2r}$ items are not assigned to any instance, i.e., have no value for any player; these extra items are only added to simplify the math.). 
	We have, 
	
	\begin{claim}\label{clm:k-eps-intersect}
		W.p. $1-\exp\paren{-\Omega(k^{\eps})}$, for any player $\bid \in N$ and any set $T \in F_\bid$ such that $T$ does \emph{not} belong to a private collection of a special instance (i.e., 
		$T$ is not sampled from $I_{\bid,\jstar}$), $\card{T \cap U} \leq k^{2\eps}$.  
	\end{claim}
		\begin{proof}
		Fix a group $N_g \in \FN$ and fix a player $\bid \in N_g$ and let $I_{\bid,j}$ be an instance of $\dist_{r-1}$ for some $j \neq \jstar$, i.e., not a special instance. Recall that the set of items in $I_{\bid,j}$ and 
		$I_{\bid,\jstar}$ are two distinct sets $S_j$ and $S_{\jstar}$ from $\FS_r$ on the universe $[q_r]$ (and hence $\card{S_j \cap S_{\jstar}} \leq \ell_r = k^{2r-2+\eps}$ by definition of intersecting families), and
		since $[q_r]$ is entirely mapped by $\sigma$ for player $\bid \in N_g$, the intersection between item set of $I_{\bid,j}$ and $I_{\bid,\jstar}$ is at most $k^{2r-2+\eps}$; this in particular means
		that at most $k^{2r-2+\eps}$ items in $I_{\bid,j}$ belong to $U$ ($I_{\bid,j}$ does not share any item with any instance $I_{\bid',\jstar}$ for any $\bid' \notin N_g$). 
		
		Now consider the choice of a set $T$ (in the private collection) for the player $\bid$ in the
		instance $I_{\bid,j}$. For each item $a$ that belongs to both item-set of $I_{\bid,j}$ and $U$, define an indicator random variable $X_{a} \in \set{0,1}$, 
		which is one iff $a$ is chosen in $T$. Then, $X:= \sum_{a} X_a$ denotes $\card{T \cap U}$. By Observation~\ref{obs:distr-uniform-set}, $T$ is a $k$-subset chosen uniformly
		at random from a universe of size $t_r = r \cdot k^{2r-1}$, and hence, $\Ex\bracket{X} \leq k^{2r-2+\eps} \cdot 1/(r \cdot k^{2r-2}) \leq k^{\eps}/r$. 
		By Chernoff bound for negatively correlated random variables (Proposition~\ref{prop:chernoff}), 
		$\Pr\paren{\card{S \cap U} \geq k^{2\eps}} \leq \exp\paren{{-\Omega(k^{2\eps})}}$. 
		
		We can now apply a union bound for all possible choices for the set $T$ (among all players and instances), and the probability that even one set $T$ violates this constraint is (note that there
		 are $n_r \cdot p^r$ different choices for $T$)
		\begin{align*}
			n_r \cdot p^r \cdot \exp\paren{{-\Omega(k^{2\eps})}} = \exp\paren{\Theta(r \cdot \log{k})} \cdot \exp\paren{\Theta(r \cdot k^{\eps})}\cdot \exp\paren{{-\Omega(k^{2\eps})}} = \exp\paren{-\Omega(k^{\eps})} 
		\end{align*}
		since $r = o(k^{\eps})$ (by the assumption that $r = o\paren{\frac{\log{m}}{\log\log{m}}}$). 
	\end{proof}

	In the following we condition on the event in Claim~\ref{clm:k-eps-intersect} (event $\event_1$) and the event that $\sw(\Istar) \leq 2(r-1) \cdot k^{2r-2+2\eps}$ (event $\event_2$).
	Note that by Claim~\ref{clm:k-eps-intersect} and induction hypothesis, these two events happen (simultaneously) w.p. $1 - r \cdot \exp\paren{-\Omega(k^\eps)}$. 
	
	Now fix any allocation $\FA  = (A_1,\ldots,A_n)$. As size of $\neg{U}$ is at most $k^{2r}$, the items in $\neg{U}$ can only contribute $k^{2r}$ to the welfare in $\FA$. Next, let $\FA^*$ be the 
	subset of $\FA$ such that the maximizing clause in each $A_i \in \FA^*$ (i.e., the set $T \in \FF_i$) belongs to some special instance, and $\FA'$ be the remaining part of allocation $\FA$. 	
	We know, by $\event_2$, that the contribution of $\FA^*$ to the welfare is at most $k^2 \cdot 2(r-1) \cdot k^{2r-2+2\eps} = 2(r-1) \cdot k^{2r+2\eps}$ (counting the $k^2$ special instances). Moreover, 
	by $\event_1$ (in Claim~\ref{clm:k-eps-intersect}), the contribution of $\FA'$ is at most $k^{2r} \cdot k^{2\eps} = k^{2r+2\eps}$. To conclude, we obtain that the
	social welfare when $\thetas=0$ is at most $k^{2r} + 2(r-1) \cdot k^{2r+2\eps} + k^{2r+2\eps} \leq 2r \cdot k^{2r+2\eps}$ with the desired probability, proving the lemma. 
\end{proof}

\subsection{The Lower Bound for Distribution $\dist_r$}\label{sec:multi-lower}

Let $\prot$ be a $r$-round protocol that can output a $\paren{\frac{1}{r} \cdot m_r^{\frac{1-2\eps}{2r+1}}}$-approximation to the social welfare of any instance $I \sim \dist_r$, w.p. of failure $\delta < 1/4$.
In this section, we prove that the communication cost of the protocol $\prot$ needs to be at least $\exp(\Omega(k^{\eps}))$ bits. By (the easy direction of) Yao's minimax principle~\cite{Yao79}, it suffices
to prove this lower bound for deterministic algorithms.   

We start by providing a detailed overview of the proof.  First, by Lemma~\ref{lem:multi-theta} we can argue that the protocol $\prot$ is also a $\paren{\delta+o(1)}$-error protocol for estimating 
the parameter $\thetas$, and hence we prove the lower bound for $\thetas$-estimation problem instead.  Recall that in any instance  $I_r \sim \dist_r$, the value of $\thetas$ is equal to the value of $\thetas$ in
the underlying special instance $\Istar_r$ in $I_r$, and that $\Istar_r$ is sampled from the distribution $\dist_{r-1}$. Hence to ``solve'' the instance $I_r \sim \dist_{r}$, the players need to be able 
to solve the instance $\Istar_r \sim \dist_{r-1}$ as well. This suggests an inductive approach to prove the lower bound for the distribution $\dist_r$. 

Consider the first message $\Prot_1 = (\Prot_{1,1},\ldots,\Prot_{1,n_r})$ of $\prot$. Recall that the input to any player $\bid \in N$ consists of $p$ different instances (of $\dist_{r-1}$), 
one of which being the instance $\Istar_r$. By Observation~\ref{obs:distr-j-independence}, each player $\bid$ is oblivious to the identity of 
$\Istar_r$ and hence, intuitively, the message $\Prot_{1,i}$ cannot reveal more than $\approx \card{\Prot_{1,i}}/p$ bits of information about the instance $\Istar_r$. Considering the simultaneity of the protocol $\prot$, 
we can use a similar argument as in the previous section and prove that if $\card{\Prot_1} = o(p)$, then at most $o(1)$ bits of information is revealed about $\Istar_r$. 

Now consider the second round of the protocol $\prot$. The task of players in each group $N_g \in \FN$ is now to solve the instance $\Istar_r$ (on a separate set of players and items). 
As argued above, the first message of players can only reveal $o(1)$ bits of information about $\Istar_r$ and hence distribution of $\Istar_r$ is still ``very close'' to its original distribution $\dist_{r-1}$, 
\emph{even conditioned on the first message of players}. But $\dist_{r-1}$ is assumed inductively to be a hard input distribution for $(r-1)$-round protocols
and as $\prot$ needs to solve $\Istar_r$ in $(r-1)$ rounds now, we may argue that it needs an exponential communication. 

To make this intuition precise, we employ a \emph{round-elimination} argument: Given any hard instance $I_{r-1} \sim \dist_{r-1}$, we ``embed'' $I_{r-1}$ 
in an $r$-round instance $I_{r}$ sampled from $\dist_r$ \emph{conditioned on the first message $\Prot_1$} of $\prot$ with \emph{no communication} between the players and then use $\prot$ from the second round onwards
to solve $I_{r-1}$. However, notice that as the number of players (and items) vary between $I_r$ and $I_{r-1}$, we cannot directly apply $\prot$ on $I_{r-1}$. 
Instead, the players first sample a message $\Prot_1$ (of $\prot$) according to the distribution $\dist_r$ using public randomness. Next, each player $i \in [n_{r-1}]$ in the 
instance $I_{r-1}$ mimics the role of $k^2$ different players (one ``copy'' in each group in $\FN$ in $I_{r}$) by letting the input of each copy in the special instance (of $I_r$) be her input in $I_{r-1}$ and
 then ``completes'' the rest of her input (i.e., her fooling instances in $I_r$) \emph{independently} of other players to obtain an instance $I_r \sim \dist_r \mid \Istar_r=I_{r-1},\Prot_1$. 
Note that a-priori it is not clear that why such an embedding is possible since the first message $\Prot_1$ correlates the input of players in fooling instances, making independent sampling of these instances impossible.
However, we show that by further conditioning on some ``easy part'' of the input in the first round, i.e., $\sigma$ and $\jstar$ (by sampling these parts publicly also), the players can indeed implement this embedding \emph{without any communication} and hence obtain a valid $(r-1)$-round protocol for $I_{r-1}$. We are now ready to present the formal proof. To continue, we need the following notation.

\paragraph{Notation.} For any $j \in [r]$, we use $\Prot_j = (\Prot_{j,1},\ldots,\Prot_{j,n_r})$ to denote the random variable for the transcript of the messages communicated in the round $j$ of $\prot$.
For any player $\bid \in N$, and any $j \in [p]$, we override the notation and use $I_{\bid,j}$ to also denote the random variable for the instance $I_{\bid,j}$ sampled in $\dist_r$ (similarly
 for $\Istar_r$ and $I^{(\bid)}$). We further use $\Sigma$ to denote the random variable
for the permutation $\sigma$ and $J$ for the index $\jstar$. 
We start by the following simple claim. 
\begin{claim}\label{clm:multi-theta-apx}
	Protocol $\prot$ can also determine the value of $\thetas$ w.p. $1-\delta-o(1)$. 
\end{claim}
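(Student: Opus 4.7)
The plan is to reduce the $\thetas$-estimation problem to social-welfare approximation via a threshold-based decoder, exploiting the welfare gap established in Lemma~\ref{lem:multi-theta}. By Eq.~(\ref{eq:thetas-1}), conditioned on $\thetas=1$ we have $\sw(I)\geq k^{2r+1}$ with probability $1$; by Eq.~(\ref{eq:thetas-0}), conditioned on $\thetas=0$ we have $\sw(I)\leq 2r\cdot k^{2r+2\eps}$ with probability at least $1-r\cdot\exp(-\Omega(k^\eps))$. Since $k=m^{\Omega(1/r)}$ and $r=o(\log m/\log\log m)$ by assumption, the factor $r\cdot\exp(-\Omega(k^\eps))$ is $o(1)$.

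Next I would quantify the welfare gap between these two regimes. The multiplicative gap is at least
\[
\frac{k^{2r+1}}{2r\cdot k^{2r+2\eps}} \;=\; \frac{k^{1-2\eps}}{2r},
\]
and I would compare this to the approximation factor $\alpha=\tfrac{1}{r}\cdot m_r^{(1-2\eps)/(2r+1)}$ guaranteed by $\prot$. Substituting $m_r=(r+1)\cdot k^{2r+1}$ gives $m_r^{(1-2\eps)/(2r+1)}=(r+1)^{(1-2\eps)/(2r+1)}\cdot k^{1-2\eps}$, and since $(r+1)^{(1-2\eps)/(2r+1)}$ is bounded by an absolute constant for all $r\geq 1$, the ratio $\alpha$ is at most a constant factor times $k^{1-2\eps}/r$. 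After fixing a sufficiently small constant in the approximation guarantee (which is absorbed into the parameter $\eps$ when passing from the theorem statement to this proof setup, as reflected in the use of $1-2\eps$ versus $1-\eps$), the true gap $k^{1-2\eps}/(2r)$ is strictly larger than $\alpha$.

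Given this separation, the decoder for $\thetas$ is straightforward: run $\prot$ to obtain an estimate $W$ of $\sw(I)$, then output $\thetas=1$ if $W>2r\cdot k^{2r+2\eps}$ and $\thetas=0$ otherwise (or equivalently, compare against the midpoint of the two ranges). When $\thetas=1$, the protocol returns at least $\sw(I)/\alpha\geq k^{2r+1}/\alpha$, which by the gap calculation exceeds $2r\cdot k^{2r+2\eps}$, so the decoder answers $1$; when $\thetas=0$, the high-probability upper bound on $\sw(I)$ propagates through the protocol's upper bound $W\leq \sw(I)$ to give $W\leq 2r\cdot k^{2r+2\eps}$, so the decoder answers $0$.

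Finally, I would account for the failure probabilities via a union bound: the only sources of error are the protocol's $\delta$-error event and the $r\cdot\exp(-\Omega(k^\eps))$-probability failure of the concentration statement in Eq.~(\ref{eq:thetas-0}), yielding overall success probability at least $1-\delta-o(1)$. The main (and essentially only) subtlety is confirming that the constants line up so that $\alpha<k^{1-2\eps}/(2r)$; this is where the slack between $1-\eps$ (in the theorem) and $1-2\eps$ (used in the proof setup) plays its role, since any constant factor loss can be absorbed into shrinking the exponent slightly.
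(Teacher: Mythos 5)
Your proof takes essentially the same route as the paper's: invoke Lemma~\ref{lem:multi-theta} to get a welfare gap of $\approx k^{1-2\eps}/(2r)$ between the $\thetas=1$ and $\thetas=0$ regimes, compare this to the approximation factor $\tfrac{1}{r}m_r^{(1-2\eps)/(2r+1)}$, and apply a threshold decoder with a union bound over the protocol's error and the $r\cdot\exp(-\Omega(k^\eps))=o(1)$ failure probability of the concentration event. One small remark worth recording for both your write-up and the paper's: the inequality $k^{1-2\eps}/(2r) > \tfrac{1}{r}m_r^{(1-2\eps)/(2r+1)}$ as stated requires $(r+1)^{(1-2\eps)/(2r+1)} < 1/2$, which never holds; the fix is exactly the kind of $\eps$-slack absorption you flag at the end, but it needs one more level of slack (e.g.\ working with exponent $1-3\eps$ in the proof body and re-parametrizing by $\eps/3$), since the constant $(r+1)^{1/(2r+1)}$ is $\Theta(1)$ and cannot be dominated by $1-\eps$ versus $1-2\eps$ alone without also using that $k^{\eps}=\omega(1)$.
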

\begin{proof}
	By Lemma~\ref{lem:multi-theta}, the ratio of $\sw(I)$ depending on the parameter $\thetas$ is (w.p. $1-o(1)$): 
	\begin{align*}
		\frac{k^{2r+1}}{2r \cdot k^{2r+2\eps}} = \frac{k^{1-2\eps}}{2r} = \frac{m_r^{\frac{1-2\eps}{2r+1}}}{2r \cdot (r+1)^{\frac{1}{2r+1}}} > \frac{1}{r} \cdot m_r^{\frac{1-2\eps}{2r+1}}
	\end{align*}
	Hence, the $\delta$-error $\paren{\frac{1}{r} \cdot m_r^{\frac{1-2\eps}{2r+1}}}$-approximation protocol $\prot$ correctly determines the value of $\thetas$ w.p. $1-\delta-o(1)$. 
\end{proof}

We show that as long as the first message sent by the players is not too large, 
this message cannot reveal much information about the special instance $\Istar_r$ embedded in the distribution $\dist_r$.  
This argument is a similar to the one in Section~\ref{sec:sim-lower}.

\begin{lemma}\label{lem:small-info}
	If $\card{\Prot_1} = o(p/r^4)$, then $\mi{\Istar_r}{\Prot_1 \mid \Sigma,J} = o(1/r^4)$.
\end{lemma}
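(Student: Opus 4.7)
The plan is to mirror the two-step information-theoretic argument used in the warm-up (Claim~\ref{clm:sim-independent-message} and Lemma~\ref{lem:sim-direct-sum}), adapted to the multi-round distribution where the role of the single bit $\Theta$ is now played by the entire special instance $\Istar_r$. The payoff is the same ``pay $p$ bits to learn $1$ bit'' phenomenon: since each player is oblivious to which of her $p$ inner instances is the special one (Observation~\ref{obs:distr-j-independence}), any information her first message carries about $\Istar_r$ must be smeared roughly uniformly across the $p$ candidates, so a message of length $o(p/r^4)$ can reveal only $o(1/r^4)$ bits about $\Istar_r$.

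First I would decompose the joint information across players, aiming at the inequality
\[
\mi{\Istar_r}{\Prot_1 \mid \Sigma, J} \;\leq\; \sum_{\bid \in N} \mi{\Istar_r}{\Prot_{1,\bid} \mid \Sigma, J}.
\]
Applying the chain rule (\itfacts{chain-rule}) and Proposition~\ref{prop:info-decrease}, this reduces to verifying $\Prot_{1,\bid} \perp \Prot_1^{<\bid} \mid \Istar_r, \Sigma, J$. Conditioned on $(\Sigma, J)$, each $\Prot_{1,\bid}$ is a function solely of player $\bid$'s input $(I^{(\bid)}, \Phi_\bid)$, and $\Phi_\bid$ is determined by $(\Sigma,J)$; moreover, Observation~\ref{obs:distr-input-independence} says that conditioned on $(\Istar_r, \Sigma, J)$ the tuples $I^{(1)}, \ldots, I^{(n_r)}$ form a product distribution. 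Hence the first messages of different players are conditionally independent given $(\Istar_r, \Sigma, J)$, and the data-processing inequality (\itfacts{data-processing}) closes the step.

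Next I would prove the per-player direct-sum bound $\mi{\Istar_r}{\Prot_{1,\bid} \mid \Sigma, J} \leq \card{\Prot_{1,\bid}}/p$. Conditioning on $\set{J = j}$ identifies $\Istar_r$ with $I_{\bid,j}$, and since $J$ is uniform over $[p]$, we get
\[
\mi{\Istar_r}{\Prot_{1,\bid} \mid \Sigma, J} \;=\; \frac{1}{p}\sum_{j=1}^{p} \mi{I_{\bid,j}}{\Prot_{1,\bid} \mid \Sigma, J = j}.
\]
Using the splitting $\Sigma = (\Sigma^{-\bid}, \Phi_\bid, J)$ exactly as in Lemma~\ref{lem:sim-direct-sum}, and then dropping the conditioning $\set{J = j}$ (which is justified because $(\Sigma^{-\bid}, \Phi_\bid)$ is independent of $J$ by Observation~\ref{obs:distr-j-independence}, and because from player $\bid$'s own marginal perspective every $I_{\bid,j}$ is an i.i.d.\ draw from $\dist_{r-1}$ on a universe determined by $\Phi_\bid$), each term becomes $\mi{I_{\bid,j}}{\Prot_{1,\bid} \mid \Sigma^{-\bid}, \Phi_\bid}$. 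Invoking Proposition~\ref{prop:info-increase} with the side information $I_\bid^{<j}$ (valid since the $p$ instances are conditionally independent given $\Sigma^{-\bid}, \Phi_\bid$), applying the chain rule to collapse the sum into $\mi{I^{(\bid)}}{\Prot_{1,\bid} \mid \Sigma^{-\bid}, \Phi_\bid}$, and finally using \itfacts{uniform} together with $\HH(\Prot_{1,\bid}) \leq \card{\Prot_{1,\bid}}$, yields the claimed $\card{\Prot_{1,\bid}}/p$ bound. Summing over $\bid$ and combining with Step~1 gives $\mi{\Istar_r}{\Prot_1 \mid \Sigma, J} \leq \card{\Prot_1}/p = o(1/r^4)$.

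The main obstacle is the careful bookkeeping of independence structures: unlike in the warm-up, the quantity whose information we are tracking is now a complex random instance rather than a single bit, so one has to argue that the symmetry between the ``special'' slot $\jstar$ and the ``fooling'' slots $j \neq \jstar$ survives at the level of a single player's marginal view. This is exactly what Observations~\ref{obs:distr-j-independence} and~\ref{obs:distr-input-independence} guarantee, and the verification hinges on separating $\Sigma$ into the bidder-specific labeling $\Phi_\bid$ and the global remainder $\Sigma^{-\bid}$ so that conditioning on $J = j$ can be safely eliminated without changing the distribution of $(\Prot_{1,\bid}, I^{(\bid)})$.
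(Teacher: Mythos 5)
Your proposal reproduces the paper's own two-step proof almost verbatim: the same decomposition into $\sum_\bid \mi{\Istar_r}{\Prot_{1,\bid}\mid\Sigma,J}$ via Claim~\ref{clm:multi-independent-message} (using Observation~\ref{obs:distr-input-independence} plus data processing), followed by the same per-player direct-sum bound of Lemma~\ref{lem:multi-direct-sum} (using the split $\Sigma = (\Sigma^{-\bid},\Phi_\bid,J)$, Observation~\ref{obs:distr-j-independence} to drop the conditioning on $J=j$, Proposition~\ref{prop:info-increase}, and the chain rule). This is correct and matches the paper's argument.
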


We break the proof of Lemma~\ref{lem:small-info} into two separate steps. First, we argue that the information revealed by the first message about $\Istar_r$ 
is at most the total summation of the information revealed by each individual player about $\Istar_r$, \emph{after} conditioning on the ``easy part'' of the input in the first round, 
i.e., $\sigma$ and $\jstar$.

The proof of this claim is essentially identical to that of Claim~\ref{clm:sim-independent-message} and is provided for completeness. 

\begin{claim}\label{clm:multi-independent-message}
	$\mi{\Istar_r}{\Prot_1 \mid \Sigma,J} \leq \sum_{\bid \in N} \mi{\Istar_r}{\Prot_{1,\bid} \mid \Sigma,J}$. 
\end{claim}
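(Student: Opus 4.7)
The plan is to follow the blueprint of Claim~\ref{clm:sim-independent-message} essentially verbatim, substituting $\Istar_r$ for the role previously played by $\Theta$. Since the right-hand side is a sum over players, the natural first move is to apply the chain rule (\itfacts{chain-rule}) to expand
\begin{align*}
\mi{\Istar_r}{\Prot_1 \mid \Sigma, J} \;=\; \sum_{\bid \in N} \mi{\Istar_r}{\Prot_{1,\bid} \mid \Prot_1^{<\bid}, \Sigma, J},
\end{align*}
and then attempt to drop the extra conditioning on $\Prot_1^{<\bid}$ term-by-term using Proposition~\ref{prop:info-decrease}. By that proposition (and the symmetry of mutual information), it suffices to verify that $\Prot_{1,\bid} \perp \Prot_1^{<\bid} \mid \Istar_r, \Sigma, J$, or, equivalently via \itfacts{info-zero}, that $\mi{\Prot_{1,\bid}}{\Prot_1^{<\bid} \mid \Istar_r, \Sigma, J} = 0$.

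Establishing this conditional independence is the only nontrivial step, and the argument proceeds in two pieces. First, the input to player $\bid$ is the pair $(I^{(\bid)}, \phi_\bid)$ and $\phi_\bid$ is fully determined by $(\Sigma, J)$; hence, conditioned on $(\Sigma, J)$, the first-round message $\Prot_{1,\bid}$ is a deterministic function of $I^{(\bid)}$ alone, and likewise $\Prot_1^{<\bid}$ is a deterministic function of $I^{(<\bid)}$. Second, Observation~\ref{obs:distr-input-independence} tells us that the joint distribution of $(I^{(1)}, \ldots, I^{(n_r)})$ conditioned on $(\Istar_r, \Sigma, J)$ is a product distribution across players, so $I^{(\bid)} \perp I^{(<\bid)} \mid \Istar_r, \Sigma, J$. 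Combining these two observations and invoking the data processing inequality (\itfacts{data-processing}) transfers the independence from $I^{(\bid)}, I^{(<\bid)}$ to $\Prot_{1,\bid}, \Prot_1^{<\bid}$, completing the argument.

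I do not expect any serious obstacle here: the entire proof is a direct translation of Claim~\ref{clm:sim-independent-message}, with the recursively-embedded special instance $\Istar_r$ taking the role of the parameter $\Theta$. The only new structural ingredient required is the product-distribution property of Observation~\ref{obs:distr-input-independence}, which was built into the distribution $\dist_r$ precisely to make this round-elimination step go through; without it, the fooling instances of different players would be correlated through the special instance and the conditional-independence argument above would break.
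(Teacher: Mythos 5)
Your proof is correct and follows essentially the same route as the paper: chain rule to expand the left-hand side, Proposition~\ref{prop:info-decrease} to drop the conditioning on $\Prot_1^{<\bid}$, and the required conditional independence $\Prot_{1,\bid} \perp \Prot_1^{<\bid} \mid \Istar_r, \Sigma, J$ established via the product-distribution property of Observation~\ref{obs:distr-input-independence} together with the data processing inequality. The translation of Claim~\ref{clm:sim-independent-message} with $\Istar_r$ playing the role of $\Theta$ is exactly what the paper does.
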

\begin{proof}
	We have, 
	\begin{align*}
		\mi{\Istar_r}{\Prot_1 \mid \Sigma,J} &= \sum_{\bid \in N} \mi{\Istar_r}{\Prot_{1,\bid} \mid \Prot_1^{<\bid},\Sigma,J} \leq \sum_{i \in N} \mi{\Istar_r}{\Prot_{1,\bid} \mid R,\Sigma,J}
	\end{align*}
	where the equality is by chain rule (\itfacts{chain-rule}), and the inequality follows from Proposition~\ref{prop:info-decrease}, as we prove below that $\Prot_{1,\bid} \perp \Prot_1^{<\bid} \mid \Istar_r,\Sigma,J$ 
	or equivalently $\mi{\Prot_{1,\bid}}{\Prot_1^{<\bid} \mid \Istar_r, \Sigma,J} = 0$ (by \itfacts{info-zero}). 
	
	Define $\Ins:= \paren{I^{(1)},\ldots,I^{(n_r)}}$. As stated in Observation~\ref{obs:distr-input-independence}, we have $\Ins_\bid \perp \Ins^{<\bid}  \mid \Istar_r,\Sigma,J$ and
	 hence $\mi{\Ins_\bid}{\Ins^{<\bid} \mid \Istar_r,\Sigma,J} = 0$  (by \itfacts{info-zero}).
	 Moreover, notice that for any player $\bid \in N$, $\Prot_{1,\bid}$ is a deterministic function of $I^{(\bid)},\Sigma,J$ and hence, conditioned
	on $\Istar_r,\Sigma,J$, message $\Prot_{1,\bid}$ is only a function of $\Ins_\bid = I^{(\bid)}$. Consequently, by data processing inequality (\itfacts{data-processing}), 
	we also have $\mi{\Prot_{1,\bid}}{\Prot_1^{<\bid} \mid \Istar_r,\Sigma,J} = 0$.  
\end{proof}

In the next step, we use a direct-sum style argument to show that,

\begin{lemma}\label{lem:multi-direct-sum}
	For any $\bid \in N$, $\mi{\Istar_r}{\Prot_{1,\bid} \mid \Sigma,J} \leq {\card{\Prot_{1,\bid}}}/{p}$. 
\end{lemma}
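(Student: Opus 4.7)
The plan is to mirror the proof of Lemma~\ref{lem:sim-direct-sum} from the warm-up, with the random variable $\Istar_r$ now playing the role of $\Theta$ and the view $I_{\bid,j}$ of player $\bid$ in its $j$-th instance playing the role of $X_{\bid,j}$. The basic observation is that when $J=j$, player $\bid$'s view inside $\Istar_r$ is \emph{exactly} $I_{\bid,j}$, so symmetrizing over the $p$ possible choices of $\jstar$ should produce the desired $1/p$ savings.

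Concretely, I would start by writing
\[
\mi{\Istar_r}{\Prot_{1,\bid} \mid \Sigma,J} \;=\; \frac{1}{p}\sum_{j\in[p]} \mi{\Istar_r}{\Prot_{1,\bid} \mid \Sigma,J=j},
\]
since $J$ is uniform on $[p]$. Next, applying the chain rule I would decompose $\Istar_r$ into $(I_{\bid,j},\,\Istar_r^{-\bid})$ where $\Istar_r^{-\bid}$ denotes the views of the \emph{other} players of the group $N_g\ni \bid$ in the special instance, and argue that the cross term $\mi{\Istar_r^{-\bid}}{\Prot_{1,\bid}\mid I_{\bid,j},\Sigma,J=j}$ vanishes (so each summand equals $\mi{I_{\bid,j}}{\Prot_{1,\bid}\mid \Sigma,J=j}$). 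I would then define $\Sigma^{-\bid}$ as in the warm-up to be the part of $\Sigma$ not affecting $\Phi_\bid$, replace the conditioning on $(\Sigma,J=j)$ by conditioning on $(\Sigma^{-\bid},\Phi_\bid,J=j)$, and drop ``$J=j$'' using Observation~\ref{obs:distr-j-independence} together with the fact that $\Sigma^{-\bid}$ is independent of $(\Phi_\bid,J)$. A single invocation of Proposition~\ref{prop:info-increase} (inserting $I_\bid^{<j}$ into the conditioning, valid because $I_{\bid,j}\perp I_\bid^{<j}\mid \Sigma^{-\bid},\Phi_\bid$) followed by the chain rule (\itfacts{chain-rule}) collapses the sum into $\mi{I^{(\bid)}}{\Prot_{1,\bid}\mid \Sigma^{-\bid},\Phi_\bid}$, which is then at most $\HH(\Prot_{1,\bid}) \le \card{\Prot_{1,\bid}}$ by \itfacts{uniform} and \itfacts{cond-reduce}.

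The main obstacle, and the only step that genuinely goes beyond the warm-up, is the vanishing of the cross term, i.e., the conditional independence
\[
\Prot_{1,\bid} \;\perp\; \Istar_r^{-\bid} \;\bigm|\; I_{\bid,j},\,\Sigma,\,J=j.
\]
Since $\Prot_{1,\bid}$ is a deterministic function of $(I^{(\bid)},\Sigma,J)$, by the data processing inequality (\itfacts{data-processing}) it suffices to show $I^{(\bid)} \perp \Istar_r^{-\bid} \mid I_{\bid,j},\Sigma,J=j$. To verify this I would unpack the generative process of $\dist_r$: conditioned on $J=j$ and $\Sigma$, the fooling views $I_{\bid,j'}$ with $j'\ne j$ are sampled independently of $\Istar_r$ (they come from $\dist_{r-1}(N_g,S_{j'})$ with no dependence on the special instance), while $I_{\bid,j}$ is precisely the player-$\bid$ coordinate of $\Istar_{N_g}$; further conditioning on $I_{\bid,j}$ therefore decouples the remaining $I_{\bid,-j}$ from $\Istar_r^{-\bid}$, yielding the desired independence. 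This use of Observation~\ref{obs:distr-input-independence} and the product structure of the fooling instances is the technical heart of the argument; the rest is bookkeeping identical in spirit to the $1$-round case.
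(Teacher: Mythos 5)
Your proposal is correct and follows essentially the same symmetrization-plus-chain-rule argument as the paper. The only difference is in handling the identification $\mi{\Istar_r}{\Prot_{1,\bid}\mid\Sigma,J=j}=\mi{I_{\bid,j}}{\Prot_{1,\bid}\mid\Sigma,J=j}$: the paper treats $I_{\bid,j}$ as the full sub-instance so that, conditioned on $J=j$, $\Istar_r$ and $I_{\bid,j}$ coincide as random variables and the equality is immediate, whereas you interpret $I_{\bid,j}$ as player $\bid$'s view of that sub-instance and explicitly verify that the cross term $\mi{\Istar_r^{-\bid}}{\Prot_{1,\bid}\mid I_{\bid,j},\Sigma,J=j}$ vanishes via data processing and the product structure of the fooling instances — a valid and slightly more careful justification of the same step.
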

\begin{proof}
	We have, 
	\begin{align*}
		\mi{\Istar_r}{\Prot_{1,\bid} \mid \Sigma,J} &= \Ex_{j \in [p]}\Bracket{\mi{\Istar_{r}}{\Prot_{1,\bid} \mid \Sigma,J=j}} = \Ex_{j \in [p]} \Bracket{\mi{I_{\bid,j}}{\Prot_{1,\bid} \mid \Sigma,J=j}} \tag{$\Istar_r = I_{\bid,j}$ conditioned
		 on $J=j$} \\
		&= \frac{1}{p} \cdot \sum_{j=1}^{p} \mi{I_{\bid,j}}{\Prot_{1,\bid} \mid \Sigma,J=j} \tag{the index $\jstar$ is chosen uniformly at random from $[p]$}
	\end{align*}

	Define $\Sigma^{-\bid}$ as the part of permutation $\Sigma$ that does not affect the labeling function $\Phi_\bid$ of player $\bid$, i.e., the values that are \emph{not} used to map the the input of player 
	$\bid$ (and all players in the same group as $\bid$) to $M$. With this notation, $\Sigma$ can be written as a function
	 of $\Phi_\bid$, $\Sigma^{-\bid}$, and $J$ (as $J$ and $\Phi_\bid$ uniquely define the rest of $\Sigma$ outside $\Sigma^{-\bid}$). 
	Consequently, we can write, 
	\begin{align*}
		\mi{\Istar_r}{\Prot_{1,\bid} \mid \Sigma,J} &= \frac{1}{p} \cdot \sum_{j=1}^{p} \mi{I_{\bid,j}}{\Prot_{1,\bid} \mid \Sigma^{-\bid},\Phi_\bid,J=j}
	\end{align*}
	
	Our goal is now to drop the conditioning on the event ``$J = j$''. To do so, notice that the distribution of $(\Sigma^{-\bid},\Phi_\bid)$ is independent of the event $J=j$; this is immediate to see as $\Sigma^{-\bid}$
	is independent of $\Phi_\bid$ and $J=j$, and $\Phi_\bid$ is independent of $J=j$ by Observation~\ref{obs:distr-j-independence}. Moreover, $I_{\bid,j}$ is independent of all $(\Sigma^{-\bid},\Phi_\bid,J=j)$ as 
	it is chosen independently from $\dist_{r-1}$ and furthermore, $\Prot_{1,\bid}$ is a function of $\Phi_\bid,I^{(\bid)}$, which are independent of $J=j$ (again by Observation~\ref{obs:distr-j-independence}). Consequently, we can
	drop the conditioning in the above equation and obtain that, 
	\begin{align*}
		\mi{\Istar_r}{\Prot_{1,\bid} \mid \Sigma,J} &= \frac{1}{p} \cdot \sum_{j =1}^{p} \mi{I_{\bid,j}}{\Prot_{1,\bid} \mid \Sigma^{-\bid},\Phi_\bid}\\
		&\leq \frac{1}{p} \cdot \sum_{j=1}^p \mi{I_{\bid,j}}{\Prot_{1,\bid} \mid I^{(\bid)<j},\Sigma^{-\bid},\Phi_\bid} \tag{by Proposition~\ref{prop:info-increase} as $I_{\bid,j} \perp I^{(\bid)<j} \mid \Sigma^{-\bid}, \Phi_\bid$} \\
		&= \frac{1}{p} \cdot \mi{I^{(\bid)}}{\Prot_{1,\bid} \mid \Sigma^{-\bid},\Phi_\bid} \leq \HH(\Prot_{1,\bid})/p \leq \card{\Prot_{1,\bid}}/p 
	\end{align*}
	where the equality is by chain rule (\itfacts{chain-rule}) and final inequality is by \itfacts{uniform}. 
\end{proof}

We now have, 
\begin{proof}[Proof of Lemma~\ref{lem:small-info}]
	By Claim~\ref{clm:multi-independent-message}, and Lemma~\ref{lem:multi-direct-sum},
	\begin{align*}
		\mi{\Istar_r}{\Prot_1 \mid \Sigma,J} \leq \sum_{\bid \in N} \mi{\Istar_r }{\Prot_{1,\bid} \mid \Sigma,J} \leq \sum_{\bid \in N} \card{\Prot_{1,\bid}}/p = \card{\Prot_1}/p = o(1/r^4)
	\end{align*}
	by the lemma assumption that $\card{\Prot_1} = o(p/r^4)$. 
\end{proof}

Recall that $\Istar_r$ is the special instance in distribution $\dist_r$ which was sampled from distribution $\dist_{r-1}$. 
We define $\psi_r$ as the distribution of $\Istar_r$ conditioned on $(\Prot_1,\Sigma,J)$, i.e., 
after seeing the first message of $\prot$ and  the easy part of the input $(\Sigma,J)$. 
As a corollary of Lemma~\ref{lem:small-info}, we have that this further conditioning does not change the distribution of $\Istar_r$ by much. 

\begin{claim}\label{clm:small-distance}
	If $\card{\Prot_r} = o({p}/{r^4})$, then, $\Ex_{(\Prot_1 ,\Sigma,J)}\Bracket{\tvd{\psi_r}{\dist_{r-1}}} = o(1/r^2)$. 
\end{claim}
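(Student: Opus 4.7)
The plan is to use Pinsker's inequality (Fact~\ref{fact:pinskers}) together with the identification of mutual information as expected KL-divergence (Fact~\ref{fact:kl-info}), which converts the conclusion of Lemma~\ref{lem:small-info} into the desired total-variation bound.

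First I would observe that in $\dist_r$, the special instance $\Istar_r$ is sampled from $\dist_{r-1}$ \emph{independently} of $(\Sigma, J)$ (these are drawn in separate steps of the distribution description). Hence $\distribution{\Istar_r \mid \Sigma = \sigma, J = j} = \dist_{r-1}$ for every realization $(\sigma, j)$. On the other hand, by definition $\psi_r = \distribution{\Istar_r \mid \Prot_1 = \pi_1, \Sigma = \sigma, J = j}$. By Fact~\ref{fact:kl-info},
\[
\mi{\Istar_r}{\Prot_1 \mid \Sigma, J} \;=\; \Ex_{(\Prot_1, \Sigma, J)}\!\Bracket{\DD{\distribution{\Istar_r \mid \Prot_1, \Sigma, J}}{\distribution{\Istar_r \mid \Sigma, J}}} \;=\; \Ex_{(\Prot_1, \Sigma, J)}\!\Bracket{\DD{\psi_r}{\dist_{r-1}}}.
\]

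Next, I would apply Pinsker's inequality pointwise to convert divergence into total variation distance, and then apply Jensen's inequality (concavity of $\sqrt{\cdot}$) to pull the expectation inside the square root:
\[
\Ex_{(\Prot_1, \Sigma, J)}\!\Bracket{\tvd{\psi_r}{\dist_{r-1}}} \;\leq\; \Ex\!\Bracket{\sqrt{\tfrac{1}{2}\DD{\psi_r}{\dist_{r-1}}}} \;\leq\; \sqrt{\tfrac{1}{2}\,\mi{\Istar_r}{\Prot_1 \mid \Sigma, J}}.
\]

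Finally, under the hypothesis $\card{\Prot_1} = o(p/r^4)$, Lemma~\ref{lem:small-info} gives $\mi{\Istar_r}{\Prot_1 \mid \Sigma, J} = o(1/r^4)$, so the right-hand side is $o(1/r^2)$, as claimed. There is no real obstacle here: the only subtle point is verifying that $\distribution{\Istar_r \mid \Sigma, J}$ really is $\dist_{r-1}$ (so that the divergence in Fact~\ref{fact:kl-info} matches the divergence appearing in the statement), which follows immediately from the independence of $\Istar_r$ from $(\Sigma, J)$ in the generative process of $\dist_r$.
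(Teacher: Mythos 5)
Your proposal is correct and follows essentially the same chain of reasoning as the paper's proof: you recognize that $\distribution{\Istar_r \mid \Sigma, J} = \dist_{r-1}$ by independence, then combine Pinsker's inequality, Jensen's inequality for the concave square root, and Fact~\ref{fact:kl-info} to reduce the total-variation bound to the mutual information bound from Lemma~\ref{lem:small-info}. The only cosmetic difference is that you present the chain of inequalities starting from the mutual information term and working outward, whereas the paper starts from the expected total-variation distance; the ingredients and their order of application are identical.
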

\begin{proof}
	We have, 
	\begin{align*}
		\Ex_{(\Prot_1,\Sigma,J)}\Bracket{\tvd{\psi_r}{\dist_{r-1}}} &= \Ex_{(\Prot_1,\Sigma,J)}\Bracket{\tvd{\psi_r}{\distribution{\Istar_r \mid (\Sigma,J)}}}\tag{$\distribution{\Istar_r} = \dist_{r-1}$ and $\Istar_r \perp \Sigma,J$} \\
		&\leq \Ex_{(\Prot_1 , \Sigma,J)}\Bracket{\sqrt{\frac{1}{2} \cdot \DD{\psi_r}{\distribution{\Istar_r \mid (\Sigma,J)} }}} \tag{by Pinsker's inequality (Fact~\ref{fact:pinskers})} \\
		&\leq \sqrt{\frac{1}{2} \cdot \Ex_{(\Prot_1 ,\Sigma,J)}\Bracket{\DD{\psi_r}{\distribution{\Istar_r \mid (\Sigma,J)} }}} \tag{by concavity of $\sqrt{\cdot}$ and Jensen's inequality} \\
		&= \sqrt{\frac{1}{2} \cdot \mi{\Istar_r}{\Prot_1 \mid \Sigma,J}}  \tag{by Fact~\ref{fact:kl-info}}
	\end{align*}
	which is $o(1/r^2)$ by Lemma~\ref{lem:small-info}.
\end{proof}

We are now ready to state the main result of this section. Define the recursive function $e(r):= e(r-1) + o(1/r^2)$ (with $e(0) = 0$).
Note that $e(r) = \sum_{i=1}^{r}o(1/i^2) = o(1)$. We have, 

\begin{lemma}\label{lem:embedding}
	For any $r \geq 1$, any $r$-round protocol $\prot$ for determining $\thetas$ on $\dist_r$ with error probability at most $\delta = 1/3-e(r)$ 
	requires $\Omega(p/r^4)$ communication. 
\end{lemma}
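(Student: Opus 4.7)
I plan to prove the lemma by induction on $r$. The base case $r = 1$ follows essentially from Section~\ref{SEC:SIM}: any simultaneous protocol that determines $\Theta = \thetas$ on $\dist_1$ with error at most $1/3 - e(1) = 1/3 - o(1)$ --- which is still bounded away from $1/2$ --- yields $\mi{\Theta}{\Prot \mid \Sigma, J} = \Omega(1)$ via Fano's inequality, and then Claim~\ref{clm:sim-independent-message} together with Lemma~\ref{lem:sim-direct-sum} gives $\norm{\prot} = \Omega(p) = \Omega(p/r^4)$.

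For the inductive step I argue by contradiction: suppose $\prot$ is an $r$-round protocol with $\norm{\prot} = o(p/r^4)$ and error at most $1/3 - e(r)$ on $\dist_r$. In particular $\card{\Prot_1} = o(p/r^4)$, so Claim~\ref{clm:small-distance} applies and gives $\Ex_{(\Prot_1, \Sigma, J)}\bracket{\tvd{\psi_r}{\dist_{r-1}}} = o(1/r^2)$. The plan is to construct an $(r-1)$-round protocol $\prot'$ for $\thetas$-estimation on $\dist_{r-1}$ with error at most $1/3 - e(r-1)$ and communication at most $\norm{\prot}$, which contradicts the inductive hypothesis since $\Omega(p/(r-1)^4)$ is a strictly larger constant factor than any $o(p/r^4)$. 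Given an input $I_{r-1} \sim \dist_{r-1}$ on $n_{r-1}$ players, the $(r-1)$-round protocol $\prot'$ proceeds as follows. First, using public randomness, the players jointly sample a triple $(\Sigma, J, \Prot_1)$ from its marginal distribution in $\dist_r$. Next, each player $\bid \in [n_{r-1}]$ locally simulates $k^2$ copies of herself, one per group $N_g$ in $\dist_r$: she plants her own $I_{r-1}$-input as her input in the special sub-instance $\Istar_{N_g}$ inside $\Istar_r$, and for each $g$ she samples her remaining $p - 1$ fooling sub-instances from the distribution conditioned on $(\Istar_r = I_{r-1}, \Sigma, J, \Prot_{1,(\bid,g)})$, reading $\Prot_{1,(\bid,g)}$ off from the publicly sampled $\Prot_1$. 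Finally, the players execute rounds $2, \ldots, r$ of $\prot$ on this simulated instance and output $\prot$'s estimate of $\thetas$, which agrees with $\thetas$ in $I_{r-1}$.

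The main obstacle is verifying that the middle step above can be performed by each player \emph{independently}, i.e.\ that the per-player fooling instances sampled in this way reproduce the correct joint conditional distribution $\dist_r \mid \Istar_r = I_{r-1}, \Sigma, J, \Prot_1$. This reduces to showing that, conditioned on $(\Istar_r, \Sigma, J, \Prot_1)$, the inputs $I^{(1)}, \ldots, I^{(n_r)}$ of all players in $\dist_r$ remain mutually independent. By Observation~\ref{obs:distr-input-independence} they are independent conditioned on $(\Istar_r, \Sigma, J)$, and each $\Prot_{1,\bid}$ is a deterministic function of $(I^{(\bid)}, \Sigma, J)$, so $\Prot_1$ decomposes into per-player functions of the independent variables $I^{(\bid)}$. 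A standard fact --- if $X_1, \ldots, X_n$ are independent and $Y_\bid = f_\bid(X_\bid)$, then $X_1, \ldots, X_n$ remain independent conditioned on $(Y_1, \ldots, Y_n)$ --- closes this step. This is the technical heart of the round-elimination argument, since a-priori the shared transcript $\Prot_1$ could have introduced correlations that would block any independent per-player completion.

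It remains to analyze error and communication. The joint distribution of $(\Sigma, J, \Prot_1, \text{simulated } I_r)$ under $\prot'$ differs from $\dist_r$ only in that we sampled $\Istar_r \sim \dist_{r-1}$ independently of $\Prot_1$, whereas $\dist_r$ would give $\Istar_r \sim \psi_r$ conditioned on $\Prot_1$. By Claim~\ref{clm:small-distance} and Fact~\ref{fact:tvd-small}, this inflates the error of $\prot'$ by at most $\Ex\bracket{\tvd{\psi_r}{\dist_{r-1}}} = o(1/r^2)$, so $\prot'$ determines $\thetas$ with error at most $(1/3 - e(r)) + o(1/r^2) = 1/3 - e(r-1)$ by the defining recursion for $e(\cdot)$. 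The communication of $\prot'$ is exactly that of rounds $2, \ldots, r$ of $\prot$ (the public randomness incurs no communication cost), hence at most $\norm{\prot} = o(p/r^4)$. Since the inductive hypothesis forces $\norm{\prot'} = \Omega(p/(r-1)^4)$ and $p/(r-1)^4 > p/r^4$, we obtain the desired contradiction, completing the induction.
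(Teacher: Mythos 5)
Your proof follows the paper's round-elimination argument step for step: publicly sample $(\Prot_1,\Sigma,J)$, plant the $(r-1)$-round instance $I_{r-1}$ as the special instance of each group, complete the fooling instances privately, run $\prot$ from round~$2$, and close with Claim~\ref{clm:small-distance} plus Fact~\ref{fact:tvd-small}. The one place that needs repair is the private-sampling step. You have each player $\bid$ sample her fooling instances ``from the distribution conditioned on $(\Istar_r = I_{r-1}, \Sigma, J, \Prot_{1,\bid})$,'' but player $\bid$ does \emph{not} know the full instance $I_{r-1}$ --- she only knows her own piece $\Istar_r(\bid)$. Likewise, your ``standard fact'' yields mutual independence of $I^{(1)},\ldots,I^{(n_r)}$ conditioned on the \emph{full} tuple $(\Istar_r,\Sigma,J,\Prot_1)$, which is not the conditional distribution any individual player can compute. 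What is actually required is the stronger statement that
\[
	\Ins_\bid \,\perp\, \paren{\Ins^{-\bid},\,\Istar_r(-\bid)} \ \Big|\ \paren{\Istar_r(\bid),\,\Prot_1,\,\Sigma,\,J},
\]
so that the marginal each player can locally compute coincides with the correct marginal from the joint conditional. This is exactly Proposition~\ref{prop:private-sampling} in the paper. The proof is in the same spirit as your ``standard fact'' --- the fooling instances of $\bid$ are a-priori independent of $\Istar_r(-\bid)$ given $(\Sigma,J)$, and each $\Prot_{1,\bid'}$ is a deterministic function of $(I^{(\bid')},\Sigma,J)$ alone, so the factorization survives conditioning --- but the conditioning set must be $(\Istar_r(\bid),\Prot_1,\Sigma,J)$ and not $(\Istar_r,\Prot_1,\Sigma,J)$ for the reduction to go through. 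Beyond that, you apply the inductive hypothesis directly to the \emph{randomized} protocol $\prot'$; the paper first fixes the randomness by averaging to get a deterministic protocol, since the base case Lemma~\ref{lem:sim-conclude} is stated for deterministic protocols on $\dist_1$. That step is immediate, but worth spelling out so the induction hypothesis applies as written.
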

\begin{proof}
	We prove this lemma inductively. The base case for $r=1$ follows from Lemma~\ref{lem:sim-conclude}. Now suppose the result holds for all integers smaller than $r$ and we aim to prove it for the 
	case of $r$-round protocols. Let $\prot$ be a $\delta$-error protocol for estimating $\thetas$ with $\delta = 1/3 - e(r)$ and assume by contradiction that the communication cost of $\prot$ is $o(p/r^4)$; we use $\prot$ to design 
	a randomized $(r-1)$-round protocol $\prot'$ that has communication cost $o(p/r^4)$, and errs w.p. at most $1/3-e(r-1)$ on $\dist_{r-1}$, and then use averaging argument to fix its randomness
	to obtain a deterministic protocol that contradicts the induction hypothesis.

\textbox{Protocol $\prot'$: \textnormal{An $(r-1)$-round protocol for solving instances of $\dist_{r-1}$ using protocol $\prot$.}}{

\smallskip

\textbf{Input:} An instance $I \sim \dist_{r-1}$.  \textbf{Output:} The value of $\thetas$ in $I$. 
\\
\algline

\begin{enumerate}
	\item Let $N = [n_{r}]$ and $M = [m_r]$. 
	\item \label{line:sampled} Using \emph{public randomness}, the players sample $(\Prot_1,\sigma,\jstar) \sim \dist_r(N,M)$, i.e., they sample from the joint distribution of the first message of $\prot$ (denoted by $\Prot_1$), 
	the permutation $\sigma$ over $M$, and the index $\jstar \in [p]$. 
	\item The players partition $N$ into $k^2$ equal-size groups $\FN = (N_1,\ldots,N_{k^2})$ (as is done in $\dist_r$) and the $i$-th player (denoted by $P_i$) in $I$ mimics the role
	 of the $i$-th player in each group $N_g \in \FN$ (denoted by $P_{i,g}$) 
	individually, as follows: 
	\begin{enumerate}
		\item $P_i$ sets the input for $P_{i,g}$ (for $g \in [k^2]$) in the instance $I_{i,\jstar}$ (in $\dist_r$) as the input of $P_i$ in the input instance $I$ mapped via $\sigma$ to $M$ (using the same procedure as in $\dist_r$). 
		\item \label{line:valid} $P_i$ samples the input for $P_{i,g}$ (for $g \in [k^2]$) in all other instances $I_{i,j}$ (for $j \neq \jstar$), using \emph{private randomness} from
		the distribution $I_{i,j} \sim \dist_{r} \mid (\Istar=I,\Prot_1,\sigma,\jstar)$ (we prove this is indeed possible by Proposition~\ref{prop:private-sampling} below).	
	\end{enumerate}
	\item The players run the protocol $\prot$ on the new sampled instance conditioned on the first message being $\Prot_1$, (i.e., run $\prot$ from the second round
	assuming $\Prot_1$ is the content of blackboard after the first round) and output the same answer as $\prot$. 
\end{enumerate}
}
We start by arguing that $\prot'$ is indeed a valid protocol; in particular, Line~(\ref{line:valid}) can be implemented without any communication. 
We first need some new notation. For any player $\bid \in N$, define $\Istar_r(\bid)$ as the input of player $\bid$ in the instance $I_{\bid,\jstar} = \Istar_r$ (conditioned on $\Sigma,J$),
and define $\Istar_r(-\bid)$ as the input of all other players in $\Istar_r$. 
To prove that $\prot'$ is valid, it suffices to prove the following proposition. 

\begin{proposition}\label{prop:private-sampling} 
	The distribution $\Ins := (I^{(1)},\ldots,I^{(n)}) \sim \paren{\dist_{r} \mid \Istar_r,\Prot_1,\Sigma,J}$ is a \emph{product distribution} whereby 
	each $\Ins_\bid = I^{(\bid)}$ is sampled from $\dist_{r} \mid \Istar_r(\bid) , \Prot_1,\Sigma,J$. 
\end{proposition}
\begin{proof}
	For any $\bid \in N$, we prove that $\mi{\Ins_\bid}{\Ins^{-\bid},\Istar_r(-\bid) \mid  \Istar_r(\bid),\Prot_1,\Sigma,J} = 0$. By \itfacts{info-zero}, this implies
	 that $\Ins_\bid \perp (\Ins^{-\bid},\Istar_r(-\bid)) \mid  \Istar_r(\bid),\Prot_1,\Sigma,J$, hence proving the proposition. We have,
	\begin{align*}
		\mi{\Ins_\bid}{\Ins^{-\bid},\Istar_r(-\bid) \mid  \Istar_r(\bid),\Prot_1,\Sigma,J} &=\mi{\Ins_\bid}{\Ins^{-\bid},\Istar_r(-\bid) \mid  \Istar_r(\bid),\Prot_{1,\bid},\Prot^{-\bid}_1,\Sigma,J} 
		\tag{as $\Prot_1 = \Prot_{1,\bid},\Prot_1^{-\bid}$}\\
		&\leq \mi{\Ins_\bid}{\Ins^{-\bid},\Istar_r(-\bid) \mid  \Istar_r(\bid),\Prot_{1,\bid},\Sigma,J}
	\end{align*}
	since $\Ins_\bid \perp \Prot_1^{-\bid} \mid  (\Ins^{-\bid},\Istar_r=\paren{\Istar_r(\bid),\Istar_r(-\bid)},\Prot_{1,\bid},\Sigma,J)$ as $\Prot_1^{-\bid}$ is a deterministic function of $\Ins^{-\bid},\Istar_r,\Sigma,J$, and hence 
	we can apply Proposition~\ref{prop:info-decrease}.  
	
	Furthermore, 
	\begin{align*}
		\mi{\Ins_\bid}{\Ins^{-\bid},\Istar_r(-\bid) \mid  \Istar_r(\bid),\Prot_1,\Sigma,J} &\leq \mi{\Ins_\bid}{\Ins^{-\bid},\Istar_r(-\bid) \mid  \Istar_r(\bid),\Prot_{1,\bid},\Sigma,J} \\
		&\leq \mi{\Ins_\bid}{ \Ins^{-\bid},\Istar_r(-\bid) \mid  \Istar_r(\bid),\Sigma,J} 
	\end{align*}
	since $\paren{\Ins^{-\bid},\Istar_r(-\bid)} \perp \Prot_{1,\bid} \mid  \Ins_\bid,\Istar_r(\bid),\Sigma,J)$ as $\Prot_{1,\bid}$ is a deterministic function of $I^{(\bid)}=(\Ins_\bid,\Istar_r(\bid)),\Sigma,J$ and hence we can again apply
	Proposition~\ref{prop:info-decrease}. Finally, $\mi{\Ins_\bid}{ \Ins^{-\bid},\Istar_r(-\bid) \mid  \Istar_r(\bid),\Sigma,J}  = 0$ by Observation~\ref{obs:distr-input-independence} and \itfacts{info-zero}, implying 
	that $\mi{\Ins_\bid}{\Ins^{-\bid},\Istar_r(-\bid) \mid  \Istar_r(\bid),\Prot_1,\Sigma,J} = 0$ as well, proving the proposition.  
\end{proof}

It is now easy to see that $\prot'$ is indeed an $(r-1)$-round protocol: to sample from the distribution $\dist_r \mid (\Istar = I,\Prot_1,\Sigma,J)$ in Line~(\ref{line:valid}), each player $\bid \in N$ needs to 
sample from the distribution $\dist_{r} \mid (\Istar_r(\bid), \Prot_1,\Sigma,J)$ (by Proposition~\ref{prop:private-sampling}), and this is possible since $(\Istar_r(\bid), \Prot_1,\Sigma,J)$ are all known to $\bid$. 
Hence, the players do not need any communication for simulating the first round of protocol $\prot$. We now prove that.
\begin{claim}\label{clm:prot'-correct}
 $\prot'$ is a $\delta'$-error protocol for $\dist_{r-1}$ for $\delta' = 1/3 - e(r-1)$. 
\end{claim}
\begin{proof}
	Note that our goal is to calculate the probability that $\prot'$ errs given an instance $I \sim \dist_{r-1}$. 
	For the sake of analysis, suppose  that $I \sim \psi_r$ instead, i.e., is sampled from the distribution $\distribution{\Istar_r \mid \Prot_1, \Sigma,J}$ (according to distribution $\dist_r$). 
	In this case, one can see that the distribution of the $r$-round instance constructed by $\prot'$ matches the distribution $\dist_r$. Since $\prot'$ outputs the same answer as $\prot$ on this new sampled instance, 
	and since $I = \Istar_r$ in the new instance, the probability that $\prot'$ errs on $\psi_r$ is equal to the probability that $\prot$ errs on $\dist_r$ which in turn is equal to $1/3 - e(r)$. Now notice that by 
	Claim~\ref{clm:small-distance}, the total variation distance between $\psi_r$ and $\dist_{r-1}$ is $o(1/r^2)$ and hence by Fact~\ref{fact:tvd-small}, $\Pr_{\dist_{r-1}}\paren{\prot'~\errs} \leq \Pr_{\psi_r}\paren{\prot'~\errs} + 
	o(1/r^2) = 1/3 - e(r-1)$. 
	
	We now formalize the above intuition. Define $\Rpri$ and $\Rpub$ as, respectively, the private and the public randomness of protocol $\prot'$. 
	The probability that $\prot'$ errs on an instance $I \sim \dist_{r-1}$ can be written as, 
	\begin{align*}
	\Pr_{\dist_{r-1}}\paren{\prot'~\errs} &=\Ex_{I \sim \dist_{r-1}}  \Ex_{\Rpub} \Bracket{\Pr_{\Rpri}\paren{\prot'~\errs \mid \Rpub}} \\
	&= \Ex_{\Rpub}~  \Ex_{I \sim \dist_{r-1} \mid \Rpub} \Bracket{\Pr_{\Rpri}\paren{\prot'~\errs \mid \Rpub}} \tag{as $\Rpub \perp I$} \\
	&= \Ex_{(\Prot_1,\Sigma,J)}~  \Ex_{I \sim \dist_{r-1} \mid (\Prot_1,\Sigma,J)} \Bracket{\Pr_{\Rpri}\paren{\prot'~\errs \mid \Prot_1,\Sigma,J}} \tag{as $\Rpub = (\Prot_1,\Sigma,J)$} \\
	&\leq \Ex_{(\Prot_1,\Sigma,J)}\Bracket{\Ex_{I \sim \psi_i \mid (\Prot_1,\Sigma,J)} \Bracket{\Pr_{\Rpri}\paren{\prot'~\errs \mid \Prot_1,\Sigma,J}} + \tvd{\dist_{r-1}}{\paren{\psi_i \mid (\Prot_1,\Sigma,J)}}} 
	\tag{by Fact~\ref{fact:tvd-small}} \\
	&=  \Ex_{(\Prot_1,\Sigma,J)}\Bracket{\Ex_{I \sim \psi_i \mid (\Prot_1,\Sigma,J)} \Bracket{\Pr_{\Rpri}\paren{\prot'~\errs \mid \Prot_1,\Sigma,J}}} + \Ex\Bracket{\tvd{\dist_{r-1}}{{\psi_i}}} \\
	&=  \Ex_{(\Prot_1,\Sigma,J)}\Bracket{\Ex_{I \sim \psi_i \mid (\Prot_1,\Sigma,J)} \Bracket{\Pr_{\Rpri}\paren{\prot'~\errs \mid \Prot_1,\Sigma,J}}} +o(1/r^2) \tag{by Claim~\ref{clm:small-distance}} \\
	&= \Ex_{(\Prot_1,\Sigma,J)}\Bracket{\Ex_{I \sim \psi_i \mid (\Prot_1,\Sigma,J)} \Bracket{\Pr_{\dist_r}\paren{\prot'~\errs \mid \Istar_r=I,\Prot_1,\Sigma,J}}} +o(1/r^2) \tag{the distribution of sampled instances in $\prot'$ 
	(via $\Rpri$) matches $\dist_r \mid \Istar = I, \Prot_1,\Sigma,J$} \\
	&= \Ex_{(\Prot_1,\Sigma,J)}\Bracket{\Ex_{I \sim \psi_i \mid (\Prot_1,\Sigma,J)} \Bracket{\Pr_{\dist_r}\paren{\prot~\errs \mid \Istar_r=I,\Prot_1,\Sigma,J}}} +o(1/r^2)  \tag{the output of $\prot'$ and $\prot$ is the same} \\
	&= \Ex_{(\Istar_r,\Prot_1,\Sigma,J)}\Bracket{\Pr_{\dist_r}\paren{\prot~\errs \mid \Istar_r,\Prot_1,\Sigma,J}} +o(1/r^2) \tag{$\psi_i = \distribution{\Istar_r \mid \Prot_1,\Sigma,J}$ by definition} \\
	&= \Pr_{\dist_{r}}\paren{\prot~\errs} + o(1/r^2) = 1/3-e(r) + o(1/r^2) = 1/3 - e(r-1)
	\end{align*}
	finalizing the proof.
\end{proof}

Lemma~\ref{lem:embedding} now follows from Claim~\ref{clm:prot'-correct} by an averaging argument since we can fix the randomness in $\prot'$ to obtain a deterministic protocol $\prot''$ that uses 
$o(p/r^4)$ bits of communication and errs w.p. at most $1/3 - e(r-1)$ on $\dist_{r-1}$, a contradiction with the induction hypothesis. 
\end{proof}

Theorem~\ref{thm:multi-lower} now easily follows from Lemma~\ref{lem:embedding}.

\begin{proof}[Proof of Theorem~\ref{thm:multi-lower}]
	Let $\prot$ be a $\paren{\frac{1}{r}\cdot m^{\frac{1-2\eps}{2r+1}}}$-approximation, $(1/4)$-error protocol for subadditive combinatorial auctions on the distribution $\dist_{r}$.  
	By Claim~\ref{clm:multi-theta-apx}, $\prot$ is also a $\paren{1/4+o(1)}$-error protocol for $\thetas$ estimation on $\dist_{r}$. Since $\paren{1/4+o(1)} < 1/3-e(r)$, 
	by Lemma~\ref{lem:embedding}, we have $\norm{\prot} = \Omega(p/r^4) = \exp\paren{\Theta(k^{\eps})}/r^4 = \exp\paren{m^{\Omega(\eps/r)}}/r^4 = \exp\paren{m^{\Omega(\eps/r)}}$,
	as $k = m^{\Omega(1/r)}$ and $r = o(\frac{\log{m}}{\log\log{m}})$. Re-parametrizing $\eps$ by $\eps/2$ in the lower bound argument finalizes the proof. 
\end{proof}

\section{Conclusion}\label{sec:conclusion} 
In this paper, we studied the role of interaction in obtaining efficient allocations in subadditive combinatorial auctions. We showed that
for any $r \geq 1$, any $r$-round protocol that uses polynomial communication can only achieve an $\Omega(\frac{1}{r} \cdot m^{1/2r+1})$ approximation to the 
optimal social welfare. This settles an open question posed by Dobzinski~\etal~\cite{DobzinskiNO14} and Alon~\etal~\cite{AlonNRW15} on the round-approximation tradeoff of polynomial communication protocols in subadditive combinatorial auctions. 

An immediate corollary of our main result is that $\Omega(\frac{\log{m}}{\log\log{m}})$ rounds of interaction
are necessary for obtaining an efficient allocation with polynomial communication in subadditive combinatorial auctions. The qualitative message of this theoretical
result is that \emph{a modest amount of interaction between individuals in a market is crucial for obtaining an efficient allocation}. This further support
the point of view of~\cite{DobzinskiNO14} on the necessity of interaction for economic efficiency. 

An interesting direction for future research, also advocated by~\cite{DobzinskiNO14}, is to consider the case where the bidders valuations are \emph{submodular}. 
It is known that obtaining a better than $(1-1/2e)$-approximation to social welfare in submodular combinatorial auctions requires exponential communication~\cite{DobzinskiV13} (regardless of the number of rounds of interaction). 
However, no better lower bounds are known for bounded-round protocols (even for simultaneous ones). 
Another interesting open problem is to close the gap between the $\Omega(\log\log{n})$ lower bound of~\cite{AlonNRW15} and 
the $O(\log{n})$ upper bound of~\cite{DobzinskiNO14} on the number of rounds necessary to achieve an efficient allocation in matching markets.

\subsection*{Acknowledgements}
I thank my advisor Sanjeev Khanna for many helpful advice and comments. I am also grateful to Jamie Morgenstern for helpful discussions in the earlier stages of this work and to Matthew Weinberg for 
bringing~\cite{BravermanMW17} to my attention. Finally, I would like to thank the anonymous reviewers of EC 2017 for many insightful comments and suggestions

% Bibliography
\bibliographystyle{abbrv}
\bibliography{general}

\end{document}